\documentclass[12pt]{article}
\usepackage[left=1in,top=0.8in,right=1in,bottom=0.8in,head=.2in]{geometry}
\usepackage{graphicx} 


\usepackage{mystyle}

\usepackage{todonotes}

\newcommand{\full}{\mathrm{full}}
\newcommand{\sing}{\mathrm{sgtn}}
\newcommand{\row}{\mathrm{row}}
\newcommand{\mis}{\mathrm{mis}}

\newcommand{\footremember}[2]{%
    \footnote{#2}
    \newcounter{#1}
    \setcounter{#1}{\value{footnote}}%
}

\setcounter{footnote}{1}

\begin{document}

\def\spacingset#1{\renewcommand{\baselinestretch}%
{#1}\small\normalsize} \spacingset{1}


\newcommand{\titletext}{Network-based Neighborhood Regression}
\newcommand{\removelinebreaks}[1]{%
      \def\\{\relax}#1}
\def\titleRLB{\removelinebreaks{\titletext}}

\title{\titletext}

\date{}

\bigskip
\bigskip
\author{Yaoming Zhen\footremember{uto}{Department of Statistical Sciences, University of Toronto, Toronto, Ontario, M5G 1X6, Canada}
\and Jin-Hong Du\footremember{cmustats}{Department of Statistics and Data Science, Carnegie Mellon University, Pittsburgh, PA 15213, USA.}\footremember{cmumld}{Machine Learning Department, Carnegie Mellon University, Pittsburgh, PA 15213, USA.}}

\maketitle

\begin{abstract}
    Given the ubiquity of modularity in biological systems, module-level regulation analysis is vital for understanding biological systems across various levels and their dynamics.
    Current statistical analysis on biological modules predominantly focuses on either detecting the functional modules in biological networks or sub-group regression on the biological features without using the network data.
    This paper proposes a novel network-based neighborhood regression framework whose regression functions depend on both the \emph{global} community-level information and \emph{local} connectivity structures among entities.
    An efficient community-wise least square optimization approach is developed to uncover the strength of regulation among the network modules while enabling asymptotic inference.
    With random graph theory, we derive non-asymptotic estimation error bounds for the proposed estimator, achieving exact minimax optimality. 
    Unlike the root-$n$ consistency typical in canonical linear regression, our model exhibits linear consistency in the number of nodes $n$, highlighting the advantage of incorporating neighborhood information.
    The effectiveness of the proposed framework is further supported by extensive numerical experiments.
    Application to whole-exome sequencing and RNA-sequencing Autism datasets demonstrates the usage of the proposed method in identifying the association between the gene modules of genetic variations and the gene modules of genomic differential expressions.
\end{abstract}

\noindent%
{\it Keywords:} 
Autism spectrum disorder;
Gene co-expressions;
Neighborhood regression;
Network data; 
Stochastic block model.

\spacingset{1.8}
\begin{bibunit}[apalike]

\section{Introduction}

In various biological systems, it is more than common for biological units to interplay with each other and form functional modules, such as in the gene co-expression networks \citep{liu2015network}, protein-protein interaction networks \citep{brohee2006evaluation}, and functional connectivities in brain regions \citep{zhang2020mixed, chen2020random}. Measurements of a single biological unit depend not only on its own features but also on those of other units it interacts with. 
Understanding the module-level regulation relationships could provide crucial insights into the biological development processes. It is thus of scientific interest to investigate the evolution of the biological units while incorporating their local neighborhood information \citep{zhang2017estimating} and global cluster-level relationships \citep{le2022linear} into a unified framework.

Our motivation is an study of genetic and genomic associations related to autism spectrum disorder (ASD). 
Characterized by compromised social interactions and repetitive behaviors, ASD is significantly influenced by genetic variation, which is usually quantified as the genetic risk (GR) scores computed from the whole exome sequencing datasets \citep{liu2013analysis}.
Although there are typically thousands of genes, the genetic evidence indicated by the GR scores is scarce.
Using hidden Markov random field (HMRF) models, \citet{liu2014dawn, liu2015network} incorporated gene co-expression networks to identify clusters of autism risk genes.
Recently, \citet{gandal2022broad} study the genomic differentially expressed (DE) scores by contrasting the gene expressions between ASD and neurotypical individuals.
When comparing the GR scores in previous studies with DE scores, only a small portion of overlaps is observed, and the interplay between genetic evidence and genomic evidence remains unknown.
Further, the evolution of a gene's expression levels shall relate not only to its neighboring genes but also to the functional module in which it is located.
Therefore, a better modeling strategy is desired to quantify the directional causal effect from the genetic evidence to the genomics evidence while incorporating both the neighborhood and community information.

Existing methods, such as the random effects and subgroup effects models, only provide information about heterogeneity within each community but lack the capability to model inter-community interactions. Additionally, approaches like network-assisted regression proposed by \citet{li2019prediction,le2022linear} do not directly incorporate neighborhood information or account for the heterogeneity of regression coefficients across different communities. 
Although these methods attempt to leverage network data, they fall short of comprehensively modeling the complex dependencies that exist among samples and network modules.

To bridge these gaps, we propose a novel network-based neighborhood regression model that predicts the response of a node based on the covariates of all nodes within its neighborhood. 
The significant challenge lies in addressing the dependency among samples within the network, leading to potential overparameterization issues. To mitigate this, our model employs a block structure in the neighborhood regression coefficient matrix to reflect community-wise common effects, making estimation and inference feasible. 
We demonstrate that the community-wise least squares objective function can be decomposed into multiple non-overlapping linear regression objective functions, which allows for efficient estimation and inference despite the complexities posed by network data with community-wise interactions.

The main contribution of this paper is threefold: (1) Aiming to better understand the directional effect of the autism genetic factors on their differential expressions, the proposed network-based neighborhood regression framework incorporates not only the \emph{local} connectivity patterns of the genes but also their \emph{global} community-wise common effects. 
(2) Theoretically, we develop random-design and non-asymptotic analyses for the network-based design matrix to derive concentration behavior for the Hessian matrix of neighborhood regression, which further leads to the asymptotic consistency of the proposed community-wise least square estimator. 
Most importantly, our theory, along with minimax optimality, suggests the blessing of neighborhood information aggregation, yielding that the convergence rate of the neighborhood regression coefficients is almost linear in the number of nodes if the network is dense enough.
This finding substantially distinguishes from the root-$n$ consistency in canonical linear regression setups, highlighting the potential of leveraging neighborhood information in network-based regression models. We further study the influence of misspecified community memberships. 
(3) Simulation studies showcase the feasibility and necessity of the proposed method, and application to the Autism gene datasets identifies interpretable community-wise common effects among the genes under investigation.

\subsection{Other applications}

While this paper primarily addresses biomedical applications, other potential applications of the proposed network-based neighborhood regression model are also worth considering.
\begin{itemize}
\item In social networks, such as Twitter, a person's retweet behavior  depends not only on how actively she/he posts original tweets but also on how actively her/his friends post original tweets. 
People of the same age and occupation group tend to share similar tweets and exhibit similar retweet patterns.
\item In the city transportation network, the number of arrival rides at any station depends not only on the number of departure rides from this station but also on the number of departure rides from its neighboring stations.
Stations in urban communities or suburban communities have similar arrival and departure patterns.
\item In the world trading network, the among of goods that a country imports depends on the among of goods it exports and those of its neighboring countries export.
Trading patterns in developed countries or developing countries can be similar.
\end{itemize}
In these examples, predicting individual response based on the covariates in its neighborhood and community necessitates a new regression method.

\subsection{Related work}
Statistical analysis of biological data with module structure primarily focuses on sub-group identification, such as community detection and sub-group regression.
Common community detection approaches include likelihood-based approaches under stochastic block model \citep{AlainConsistency2012}, latent space model \citep{raftery2012fast, zhang2022joint}, and random dot graph model \citep{athreya2018statistical}, spectral clustering under stochastic block model and degree-corrected stochastic block model \citep{Jin2015fast, LeiConsistency2015}, and modularity maximization \citep{shang2013community}. For details, we refer interested readers to the comprehensive review papers \citet{Abbe2018recent} and \citet{gao2021minimax}. Besides, there has been a notable shift in research focusing on integrating network structure and node attributes to identify communities more accurately. Related works include \citet{newman2016structure, zhang2016community, yan2021covariate, xu2023covariate} and \citet{hu2023network}.

In the research line of sub-group regression, \citet{zhou2022subgroup} and \citet{wang2023high} propose sub-group regression models to analyze personal treatment effects and low-dimensional latent factors, respectively, without using network information. 
However, utilizing the network information in predictive models has not yet been well-studied.
Recently, \citet{li2019prediction} study linear regression with network cohesion regularizer on the individual node effects; \citet{le2022linear} further extend it to a semi-parametric regression model by incorporating network spectral information.
However, neither method directly incorporates node-wise neighborhood information and the heterogeneity of regression coefficients in different communities.

\subsection{Notations} 
Denote $[n] =\{1,\ldots,n\} $ and $\ind(A)\in\{0,1\}$ as the indicator function for any event $A$. 
Let $\zero_n,\one_n\in\RR^n$ be the vectors of all zeros and ones, respectively, and $\bI_n\in\{0, 1\}^{n\times n}$ the $n$th order identity matrix. For a vector $\bx\in\RR^n$, denote by $\|\bx\|_{p}$ its $l_p$-norm with $p\in\NN\cup\{\infty\}$. Conventionally, we write $\|\bx\|$ as the $l_2$-norm of $\bx$ without the subscript. In addition, $\diag(\bx)\in\RR^{n\times n}$ denotes the diagonal matrix whose diagonals are $x_1, \ldots, x_n$. For a matrix $\bA\in\RR^{m\times n}$, $\bA_{i,\cdot}\in\RR^{n}$ and $\bA_{\cdot,j}\in\RR^{m}$  respectively represent its $i$th row and $j$th column, and we denote $\bA^\dag$ as its Moore-Penrose pseudoinverse. Moreover, we denote $\lambda_k(\bA)$ as the $k$th largest eigen-value of a symmetric matrix $\bA$, and the smallest and largest eigenvalues are also denoted by $\lambda_{\min}(\bA)$ and $\lambda_{\max} (\bA)$, respectively. 
If $\bA$ is positive definite, we have $\lambda_{\max} (\bA) = \|\bA\|$, the spectral norm of $\bA$, while $\lambda_{\min}(\bA) = \|\bA^{-1}\|^{-1}$. The regular matrix product, Hadamard product, Kronecker product, and Khatri–Rao product (column-wise Kronecker product) between two matrices $\bA$ and $\bB$ are denoted by $\bA \bB$, $\bA * \bB$, $\bA \otimes \bB$, and $\bA \odot \bB$, respectively. For convenience, we place the lowest operation priority on Hadamard products among the above products, e.g., $\bA \bB * \bC$ =  $(\bA \bB) * \bC$. Suppose $\bA$ and $\bB$ are conformable symmetric matrices, we write $\bA \preceq \bB$ if $\bB - \bA$ is positive semi-definite. Finally, for two positive sequences $a_n$ and $b_n$, $a_n = O(b_n)$ implies there exists an absolute constant $C$ such that $a_n \le Cb_n$ for all $n$, $a_n = \Omega(b_n)$ means $b_n = O(a_n)$, and $a_n = o(b_n)$ stands for $\lim_{n\rightarrow \infty} a_n/b_n = 0$. For convenience, we denote $a_n \asymp b_n$ if $a_n = O(b_n)$ and $a_n = \Omega(b_n)$. The terms ``module'', ``cluster", and  ``community'' will be used interchangeably. 

\section{Network-based neighborhood regression}\label{sec:NNR}

\subsection{Genetic risk and differential expressed scores modeling}\label{sec:model}
    
    Consider two sources of evidence from statistical tests: the genetic risk (GR) score $\bx = (x_i)_{i \in [n]} \in \RR^n$ and the differentially expressed (DE) score $\by = (y_i)_{i \in [n]} \in \RR^n$ for $n$ genes. These scores can be derived from previous genetic studies \citep{fu2022rare} and genomic analyses \citep{gandal2022broad}. Additionally, a gene co-expression network, revealing bivariate dependencies between gene expression patterns and their corresponding sub-networks (modules) \citep{liu2015network}, often serves as auxiliary information. Let $\bA = (A_{i, j})_{i, j \in [n]} \in \{0, 1\}^{n \times n}$ denote the adjacency matrix of an undirected and unweighted gene co-expression network. To characterize the directional effect from genetic evidence to genomic evidence with network information, we consider the following network-based neighborhood regression model:
    \begin{align}
        y_i = \sum_{j\in N_i} \tilde{\beta}_{i,j}x_j + \epsilon_i, \text{ for }i\in [n] \label{equ: neighborhood_regression}, 
    \end{align}
    where $N_i =\{j\in [n]: A_{i, j} = 1\} $ is the neighborhood of gene $i\in [n]$, $\tilde{\beta}_{i, j}$ is the effect sizes from gene $j$ to gene $i$ for $i,j\in[n]$, and $\bepsilon=(\epsilon_i)_{i\in[n]}$ contains independent additive noises.
    We further denote the coefficient matrix by $\tilde{\bbeta} = (\tilde{\beta}_{i, j})_{i,j\in[n]} \in\RR^{n\times n}$.

    Model \eqref{equ: neighborhood_regression} formulates $y_i$ as a linear combination of its neighbors' covariate $x_j$'s, up to additive noise.
    Specifically, the DE score of gene $i$ is affected by the GR scores of its neighboring $j$'s who is connected to $i$, for $i, j\in [n]$. 
    Because $y_i$ is supposed to be affected by $x_i$, we assume $A_{i, i} \equiv 1$ and thus $i \in N_i$. 
    When the network information is not available, this model reduces to the simple linear regression model with $N_i = \{i\}$ and $\tilde{\beta}_{i, i} = \beta$ for some constant $\beta\in\RR$, and $i\in [n]$. 
    For simplicity, model \eqref{equ: neighborhood_regression} does not include an intercept term as one can always center the data prior to model fitting; see \Cref{app:intercept} for more details.

    Compared to high-dimensional regressions, estimating the coefficient matrix $\tilde{\bbeta}$ in model \eqref{equ: neighborhood_regression} is particularly challenging, even with the presence of only a \emph{single} covariate. The fundamental difficulty stems from the model's overparameterization: while the model includes $n^2$ unknown parameters, there are merely $n$ pairs of GR and DE scores available, creating a significant disparity. 
    This imbalance makes the estimation of $\tilde{\bbeta}$ impractical without introducing additional structural constraints. To address this challenge, we turn to an inherent characteristic in network data—community structure. Community structure reflects the tendency of nodes within the same community to exhibit similar linking patterns and usually more intense connections compared to nodes in different communities, particularly in assortative networks. By utilizing community information from the gene co-expression network, we can enhance our predictive modeling and address the over-parameterization issue effectively.


    Suppose there are $K$ communities among the genes. We use $\bZ\in\{0,1\}^{n\times K}$ to denote the community membership such that $Z_{i, k} = 1$ if gene $i$ is in the $k$th community. Given the motivation above, we impose a block structure of $\tilde{\bbeta}$ according to the community structure
    \begin{align}
    \label{equ: SBM_beta}
        \tilde{\bbeta} = \bZ \bbeta \bZ^\top,
    \end{align}
    for $\bbeta \in \RR^{K\times K}$. The diagonal entries of $\bbeta$ reflect the within-community causal effects, while the off-diagonal entries represent the between-cluster effect strengths. We remark that different from the conventional stochastic block model for network data, the core coefficient matrix $\bbeta$ is not necessarily symmetric as the effect $\beta_{k_2, k_1}$ from the GR scores in the $k_1$th community to the DE scores in the $k_2$th community are directional, for $k_1, k_2\in [K]$. In genetic studies, genes within functional modules or pathways often behave similarly and exhibit coordinated expression and regulation \citep{langfelder2008wgcna}. For instance, in an etiologically active community, genes frequently co-express, leading to a strong positive causal effect from GR to DE scores, whereas this relationship may be weaker or even opposite in an etiologically inactive community. Therefore, imposing block structure in the regression coefficient matrix, or module-level genetic studies in general, could increase statistical power and biological insight compared to individual gene analyses, especially for single-cell data analysis with limited sample sizes \citep{adamson2016multiplexed,jin2020vivo}. With the block structure of $\tilde{\bbeta}$ in \eqref{equ: SBM_beta}, the neighborhood regression model \eqref{equ: neighborhood_regression} can be rewritten as 
    \begin{align}
        \by = (\bZ\bbeta \bZ^\top *\bA)\bx + \bepsilon, \label{eq:model}
    \end{align}
    where $\epsilon = (\epsilon_1, \ldots, \epsilon_n)^\top$ is the noise vector. The proposed network-based neighborhood regression framework is illustrated in \Cref{fig:CLSE}. 
    Compared to fused lasso in high-dimensional regression \citep{hesamian2024explainable} and graph-fused lasso \citep{yu2025graph}, model \eqref{eq:model} directly accommodates neighborhood effects and community structure by sparsity and community-wise constancy, which reduces model complexity and enables computational feasibility. Relaxation to soft sparsity and constancy would be interesting and requires further exploration.
    
    In what follows, we assume the community membership $\bZ$ is known.
    If it is unknown, it can be exactly recovered with high probability, which is called strong consistency \citep{yunpeng2012Consistency}, or recovered up to a vanishing fraction with high probability, which is called weak consistency \citep{yunpeng2012Consistency}, from the network data under relatively mild conditions, provided the averaged degree of the nodes diverges as $n$ goes to infinity.
    We will inspect the issue of membership misspecification in \Cref{thm: mis_Z}.

    \begin{figure}[!t]
    \centering\includegraphics[width=\textwidth]{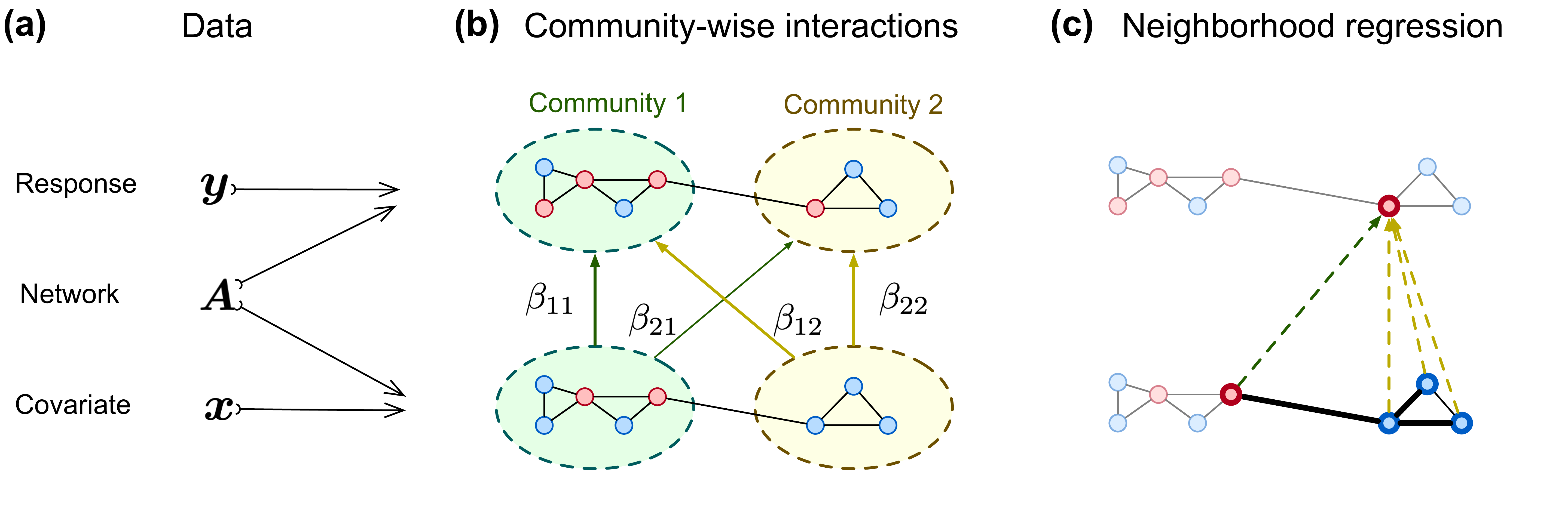}
    \caption{Network-based neighborhood regression.
    (a) The covariate $\bx$ and response $\by$ are observed for each node in a network $\bA$.
    (b) The community-wise interactions are modeled by the coefficient matrix $\bbeta$.
    (c) The conditional mean of a particular response is the average of the covariates in its neighborhood weighted by the community-wise interaction strengths. 
    } \label{fig:CLSE}
\end{figure}
    
\subsection{Community-wise least square estimation}
   To estimate the core community-level coefficient $\bbeta$ in \eqref{equ: SBM_beta}, we consider the following least square objective as in conventional linear regression framework
\begin{equation}
\label{equ: objective}
    \cL(\bbeta) = \frac{1}{2n}\|\by - (\bm{Z\bbeta Z}^\top *\bA)\bx \|_2^2.
\end{equation}
At first glance, the objective function \eqref{equ: objective} looks a bit counterintuitive since the parameters to be estimated $\bbeta$ and the predictor vector $\bx$ respectively take the places of the design matrix and regression coefficients in the classical multiple linear regression setup. 
Fortunately, simple algebra yields that the objective function \eqref{equ: objective} can be decomposed into the weighted average of $K$ non-overlapping objective functions in that
\begin{align*}
    \mathcal{L}(\bbeta) = \sum_{k = 1}^K \frac{n_k}{n}\mathcal{L}_k(\bbeta_{k, .}), \text{ with }     \mathcal{L}_k(\bbeta_{k, .}) &= \frac{1}{2n_k} \| \by * \bZ_{., k} - \bM_k \bbeta_{k, .} \|_2^2, 
\end{align*}    
where $n_k = \sum_{i=1}^n Z_{i, k}$ is the size of the $k$th community and the design matrix for the linear regression objective with transformed response $\by * \bZ_{., k} $ is
\begin{align}
    \bM_k &= \diag\left(\bZ_{\cdot, k}\right)\left(\one_n \bx^\top * \bA\right) \bZ .\label{eq:Mk}
\end{align}

We say the objective functions $\mathcal{L}_k$'s are non-overlapping because $\mathcal{L}_k$ is solely a function of the $k$th row of the core coefficient matrix $\bbeta$ but not any other entries. Therefore, minimizing $\mathcal{L}(\bbeta)$ is equivalent to minimizing each $\mathcal{L}_k(\bbeta_{k, .})$ individually, for $k\in [K]$. Intuitively, only those $y_i$'s with $i$ inside the $k$th community contain useful information in estimating $\bbeta_{k, .}$ while every $x_j$ can contribute to $y_i$. This is why, in $\mathcal{L}_k(\cdot)$, we can zero out the responses not in the $k$th community by $\by*\bZ_{., k}$ and the corresponding rows in the design matrix $\bM_k$ by a factor $\diag(\bZ_{., k})$, while the information of every $x_j$ is encoded in the $j$th column of the factor $\one_n \bx^\top * \bA$ in $\bM_k$. Moreover, $(\one_n \bx^\top * \bA)\bZ$ aggregates the samples' neighborhood effects according to their community memberships. One can essentially eliminate $n-n_k$ zero entries of $\by * \bZ_{., k} - \bM_{k} \bbeta_{k, .}$ to save memory and speed up the computation, but we stick with this expression to avoid introducing excessive notations.
Before ending this paragraph, we remark that the community-wise least square estimation can be readily extended to multiple regression settings through tensor decomposition, as demonstrated in \Cref{app:multiple}.

\subsection{Fixed-design analysis}
From the discussion above, it is sufficient to minimize $\mathcal{L}_k(\bbeta_{k, .})$ for each community individually. 
We, therefore, coin the proposed optimization problem as community-wise least square optimization. The corresponding solution, \emph{{C}ommunity-wise {L}east {S}quare {E}stimator} (CLSE), is defined as 
\begin{align}
    \hat{\bbeta} = (\hat{\bbeta}_{1, .}, \ldots, \hat{\bbeta}_{K, .})^\top \text{ with } \hat{\bbeta}_{k, .} = \argmin_{\bbeta_{k, .}\in\RR^{K}} \cL_k(\bbeta_{k, .}). \label{eq:hbeta}
\end{align}
In the fixed-design scenario, both the covariate $\bx$ and the network $\bA$, and hence $\bM_k$, are treated as deterministic while only the responses are random, we can recover various desired properties as in the classical multiple linear regression setups. For instance, the objective function $\mathcal{L}_k$ is convex with respect to $\bbeta_{k,\cdot}\in \RR^{K}$ and is strongly convex with parameter $\lambda_{\min}(\bM_k^\top \bM_k)$ if $\bM_k$ has full column rank. 
Moreover, a closed-form expression for the CLSE estimator \eqref{eq:hbeta} can be derived as in the following proposition.

\begin{proposition}[Stationary point]\label{prop:stationary-point}
    The stationary point of $\partial \cL_k(\bbeta_{k, .})/ \partial\bbeta_{k, .}$ is the solution to the following system of normal equations:
    \begin{align}
    \label{equ:normal_equation}
    \bM_k^{\top}\bM_k \bbeta_{k, .} = \bM_k^{\top}\by,
    \end{align}
    for any $k\in[K]$. Particularly, when $\bM_k$ has rank $K$, it follows that $\hat{\bbeta}_{k,\cdot} = (\bM_k^\top\bM_k)^{-1}\bM_k^{\top}\by$. 
\end{proposition}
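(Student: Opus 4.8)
The plan is to treat $\cL_k$ as an ordinary least-squares objective in the variable $\bbeta_{k,\cdot}\in\RR^K$, with transformed response $\by*\bZ_{\cdot,k}$ and design matrix $\bM_k$, and to differentiate directly. Since $\cL_k(\bbeta_{k,\cdot}) = \tfrac{1}{2n_k}\|\by*\bZ_{\cdot,k} - \bM_k\bbeta_{k,\cdot}\|_2^2$ is a convex quadratic in $\bbeta_{k,\cdot}$, expanding the squared norm and applying the standard matrix-calculus identities for gradients of quadratic forms gives $\partial\cL_k(\bbeta_{k,\cdot})/\partial\bbeta_{k,\cdot} = -\tfrac{1}{n_k}\bM_k^\top(\by*\bZ_{\cdot,k} - \bM_k\bbeta_{k,\cdot})$. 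Equating this to zero yields the normal equation $\bM_k^\top\bM_k\bbeta_{k,\cdot} = \bM_k^\top(\by*\bZ_{\cdot,k})$, i.e.\ the stationary-point condition with the transformed response on the right-hand side.

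The only step requiring a small observation is the replacement of $\bM_k^\top(\by*\bZ_{\cdot,k})$ by $\bM_k^\top\by$ in the stated equation \eqref{equ:normal_equation}. This follows from the projection structure built into the design matrix in \eqref{eq:Mk}: since $\bM_k = \diag(\bZ_{\cdot,k})(\one_n\bx^\top*\bA)\bZ$ carries the left factor $\diag(\bZ_{\cdot,k})$, transposing gives $\bM_k^\top = \bZ^\top(\one_n\bx^\top*\bA)^\top\diag(\bZ_{\cdot,k})$, so that $\bM_k^\top\by = \bZ^\top(\one_n\bx^\top*\bA)^\top(\diag(\bZ_{\cdot,k})\by)$. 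Because $\bZ_{\cdot,k}$ has entries in $\{0,1\}$, we have $\diag(\bZ_{\cdot,k})\by = \by*\bZ_{\cdot,k}$, and hence $\bM_k^\top\by = \bM_k^\top(\by*\bZ_{\cdot,k})$. This justifies writing \eqref{equ:normal_equation} with the plain response $\by$.

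For the closed-form claim, when $\bM_k$ has rank $K$ its $K\times K$ Gram matrix $\bM_k^\top\bM_k$ is symmetric positive definite, hence invertible, so the normal equation admits the unique solution $\hat{\bbeta}_{k,\cdot} = (\bM_k^\top\bM_k)^{-1}\bM_k^\top\by$; convexity of $\cL_k$ (with strong-convexity modulus $\lambda_{\min}(\bM_k^\top\bM_k)>0$) ensures this stationary point is the global minimizer. The argument is entirely routine once the decomposition of $\cL$ into the non-overlapping objectives $\cL_k$ is in hand, so I anticipate no genuine obstacle; the only mildly non-obvious point is the idempotence-type argument of the previous paragraph that lets $\by$ replace $\by*\bZ_{\cdot,k}$ on the right-hand side.
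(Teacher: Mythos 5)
Your proof is correct, and it takes a genuinely more elementary route than the paper's. You differentiate the per-community objective $\cL_k$ directly as an ordinary least-squares problem with response $\by * \bZ_{\cdot,k}$ and design $\bM_k$, which gives the normal equations in the form $\bM_k^\top \bM_k \bbeta_{k,\cdot} = \bM_k^\top(\by * \bZ_{\cdot,k})$, and you then pass to the stated form \eqref{equ:normal_equation} via the idempotence $\diag(\bZ_{\cdot,k})^2 = \diag(\bZ_{\cdot,k})$ absorbed into $\bM_k^\top$ from \eqref{eq:Mk} — that observation is valid and is exactly the right bookkeeping step (the paper states \eqref{equ:normal_equation} with plain $\by$ for precisely this reason, since the rows of $\bM_k$ outside community $k$ are zero). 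The paper instead differentiates the \emph{full} objective \eqref{equ: objective} with respect to the matrix $\bbeta$, using trace/inner-product manipulations and Khatri--Rao and Kronecker identities to show that $\|(\bZ\bbeta\bZ^\top * \bA)\bx\|_2^2$ separates into $K$ independent quadratic forms in the rows of $\bbeta$, with per-row Hessian $\bH_k = \bZ^\top(\bx\one_n^\top * \bA)\diag(\bZ_{\cdot,k})(\bx\one_n^\top * \bA)^\top\bZ = \bM_k^\top\bM_k$. What the paper's heavier computation buys is self-containedness: it simultaneously \emph{proves} the decomposition $\cL(\bbeta) = \sum_k (n_k/n)\cL_k(\bbeta_{k,\cdot})$, which the main text asserts by "simple algebra" and which your argument presupposes, and it derives the explicit Hessian formula reused in the random-design analysis. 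What your route buys is brevity and transparency — standard OLS calculus plus one idempotence step — and it is fully legitimate here because the proposition is stated in terms of $\cL_k$, whose definition you are entitled to take as given; your closing argument (positive definiteness of the $K\times K$ Gram matrix under $\rank(\bM_k)=K$, uniqueness and global optimality by strong convexity) matches the paper's conclusion.
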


Proposition \ref{prop:stationary-point} resembles the solution to the least square estimator in multiple linear regression.
When $\bM_k$ is singular, there are infinitely many solutions to \eqref{equ:normal_equation}.
In this case, one may use the minimum $l_2$-norm solution $\hat{\bbeta}_{k, .} = (\bM_k^\top \bM_k)^{\dag} \bM_k^\top \by$ \citep{hastie2022surprises,patil2024optimal}.
Comparing to imposing general low-rank, sparse, and/or fusion structure on the regression coefficient matrix as in \citet{zhang2020mixed, zhang2023generalized}, where flexibility for data-adaptive learning increases while careful optimization algorithm design is required, the proposed block structure on the regression coefficient matrix directly enables an explicit solution for the CLSE.

Denote the Hessian matrix of $\cL_k(\cdot)$ by $\bH_k = \bM_k^\top \bM_k$ and the underlying true parameter by $\bbeta_{k, 0} =\{ \EE(\bH_k\mid \bM_k)\}^{-1}\EE(\bM_k^{\top}\by\mid \bM_k) = \bH_k^{-1} \bM_k^{\top} \EE(\by\mid \bM_k)$. 
A consequence of \Cref{prop:stationary-point} is the asymptotic normality of the estimator, as stated in the following corollary.  
\begin{corollary}[Asymptotic normality]\label{cor:asym-normal}
    Conditioned on $\bA$ and $\bx$, for any $k\in[K]$, assume that $\rank(\bM_k)=K$ and $\epsilon_i$'s are independent and identically distributed with $\EE(\epsilon_i) =0$ and $\Var(\epsilon_i) = \sigma_k^2$, for those $i$'s in the $k$th community. 
    It then follows that
    \begin{align*}
        \hbeta_{k,\cdot} \to \cN_K( \bbeta_{k,0}, \sigma_k^2\bH_k^{-1}),
    \end{align*}
    in distribution, where $\cN_K$ stands for $K$-dimensional multivariate normal distribution. 
\end{corollary}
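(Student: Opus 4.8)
The plan is to recognize the estimator \eqref{eq:hbeta}, via the closed form in \Cref{prop:stationary-point}, as an ordinary least squares fit in a fixed-design linear regression with design matrix $\bM_k$, response $\by$, and noise $\bepsilon$, and then to invoke a Lindeberg--Feller central limit theorem. First I would condition on $\bA$ and $\bx$ throughout, so that $\bM_k$ and hence $\bH_k=\bM_k^\top\bM_k$ are deterministic; since $\rank(\bM_k)=K$, $\bH_k$ is invertible and \Cref{prop:stationary-point} gives $\hbeta_{k,\cdot}=\bH_k^{-1}\bM_k^\top\by$. Subtracting the definition $\bbeta_{k,0}=\bH_k^{-1}\bM_k^\top\EE(\by\mid\bM_k)$ yields the basic error representation $\hbeta_{k,\cdot}-\bbeta_{k,0}=\bH_k^{-1}\bM_k^\top(\by-\EE(\by\mid\bM_k))$. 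The key structural observation is that the factor $\diag(\bZ_{\cdot,k})$ in \eqref{eq:Mk} zeroes out every row of $\bM_k$ indexed by a node outside the $k$th community; hence only the residuals $y_i-\EE(y_i\mid\bM_k)=\epsilon_i$ with $i$ in the $k$th community survive the multiplication by $\bM_k^\top$, and the representation collapses to $\hbeta_{k,\cdot}-\bbeta_{k,0}=\bH_k^{-1}\bM_k^\top\bepsilon=\bH_k^{-1}\sum_{i}(\bM_k)_{i,\cdot}^\top\epsilon_i$.

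Given this representation, the first two conditional moments follow by direct computation. Because $\EE(\epsilon_i)=0$, the estimator is conditionally unbiased for $\bbeta_{k,0}$. For the covariance, independence of the $\epsilon_i$ together with $\Var(\epsilon_i)=\sigma_k^2$ on the $k$th community gives
\[
\Var\bigl(\hbeta_{k,\cdot}\mid\bM_k\bigr)=\bH_k^{-1}\bM_k^\top\Var(\bepsilon)\bM_k\bH_k^{-1}=\sigma_k^2\,\bH_k^{-1}\bH_k\bH_k^{-1}=\sigma_k^2\bH_k^{-1},
\]
where the nodes outside the $k$th community contribute nothing since the corresponding rows of $\bM_k$ vanish, so only the single variance $\sigma_k^2$ enters and the middle equality uses $\bM_k^\top\bM_k=\bH_k$. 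This matches the mean and covariance asserted in the statement.

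It remains to upgrade the matched moments to a distributional limit, which is the genuine content of the corollary. Since $\hbeta_{k,\cdot}-\bbeta_{k,0}$ is a weighted sum of the independent, mean-zero but non-identically distributed summands $(\bM_k)_{i,\cdot}^\top\epsilon_i$, I would apply the Cram\'er--Wold device to reduce to the scalar claim that $\ba^\top\bH_k^{-1}\bM_k^\top\bepsilon$ is asymptotically normal for every fixed $\ba\in\RR^K$, and then verify the Lindeberg condition for the triangular array $\ba^\top\bH_k^{-1}(\bM_k)_{i,\cdot}^\top\epsilon_i$ whose total variance is $\sigma_k^2\,\ba^\top\bH_k^{-1}\ba$. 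Normalizing by the covariance, the clean way to phrase the conclusion is $\sigma_k^{-1}\bH_k^{1/2}(\hbeta_{k,\cdot}-\bbeta_{k,0})\to\cN_K(\zero_K,\bI_K)$ in distribution. The main obstacle is precisely the Lindeberg step: it does not hold automatically and requires that no single observation dominate the design, i.e.\ that the maximal leverage $\max_{i}\,(\bM_k)_{i,\cdot}\bH_k^{-1}(\bM_k)_{i,\cdot}^\top$ tend to $0$ as the community size $n_k\to\infty$, which is implied once $\lambda_{\min}(\bH_k)$ grows fast relative to $\max_i\|(\bM_k)_{i,\cdot}\|^2$. This maximal-leverage control is exactly where the random-graph concentration results for $\bH_k$ developed elsewhere in the paper feed in; once it is in force, standard Lindeberg--Feller arithmetic closes the argument.
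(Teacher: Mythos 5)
Your argument is correct and takes essentially the same route as the paper: the paper's entire proof is to identify $\hbeta_{k,\cdot}$ with the least squares estimator of a fixed-design linear regression with design matrix $\bM_k$ and invoke classical asymptotic-normality results \citep{eicker1963asymptotic, fahrmeir1985consistency}, and your Cram\'er--Wold plus Lindeberg--Feller derivation, together with the mean and covariance computations exploiting the zeroed rows of $\bM_k$, is precisely the standard proof underlying those cited results. Your explicit maximal-leverage condition $\max_i\,(\bM_k)_{i,\cdot}\bH_k^{-1}(\bM_k)_{i,\cdot}^\top \to 0$ (and the normalized formulation $\sigma_k^{-1}\bH_k^{1/2}(\hbeta_{k,\cdot}-\bbeta_{k,0})\to\cN_K(\zero_K,\bI_K)$) is the regularity hypothesis left implicit in the paper's citation, so spelling it out is a refinement rather than a deviation.
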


\Cref{cor:asym-normal} can be readily obtained from classical results for linear regression estimator \citep{eicker1963asymptotic, fahrmeir1985consistency}. Under the full column rank assumption on $\bM_k$, both \Cref{prop:stationary-point} and \Cref{cor:asym-normal} suggest that the CLSE $\hbeta_{k, .}$ shares the same form and properties as the least squares estimator for classical multiple linear regression.

The fixed-design analysis mandates the design matrix $\bM_k$ to be non-degenerate. In our formulation, the rank of the Hessian is influenced not only by the covariates $\bx$ but also by the network $\bA$, which is typically a single, noisy sample. For instance, gene co-expression networks are derived by binarizing correlation matrices with measurement errors \citep{liu2015network}. When $\bA$ is considered as a random network, its symmetry causes the rows of $\bM_k$ to be neither independent nor identically distributed. This inherent sampling randomness and unique structure of $\bA$ complicate the rank conditions, consistency, and optimality analysis, setting our neighborhood regression framework apart from conventional random-design linear regression. Therefore, an in-depth investigation of random-design analysis for our proposed framework is essential, as elaborated in the next section.

\section{Random-design analysis}\label{sec:random-design}

    \subsection{Assumptions}
    We begin by introducing several technical assumptions. First, similar to the majority of literature for network data analysis \citep{Abbe2018recent, lee2019review}, we assume the network data $\bA$ follows the stochastic block model up to deterministic self-loops. 

    \begin{assumption}[Stochastic block model]\label{asm:sbm} 
        Assume that $A_{i, j} = A_{j, i}$'s are independent Bernoulli random variables with success probability $P_{i, j}$'s, for $i< j $, where $\bP = \bZ \bB \bZ^\top$ and $\bB \in \mathbb{R}^{K\times K}$ is a symmetric community level probability matrix.
    \end{assumption}

    In \Cref{asm:sbm}, the probability matrix $\bB$ determines the connectivity strengths between and within communities, and $P_{i, j}$ only depends on the community memberships of vertexes $i$ and $j$, for $i\ne j$. Independent connectivities may not faithfully capture the actual relationships among genes, and Bahadur representation has been proposed by \citet{yuan2021community} to model the dependent connectivities for network data under the stochastic block model. In addition, The network sparsity can be characterized by $s_n = \max_{k_1, k_2 \in [K]} B_{k_1, k_2}$. As real-life networks are usually sparse, community detection and other tasks in network data can be feasible only when the average degree of the vertexes diverges. We, therefore, require the following assumption on the network sparsity. 
    \begin{assumption}[Network sparsity]\label{asm:spasity}
       Assume that the network sparsity satisfy $s_n \ge\frac{\log n }{n}$. 
    \end{assumption}
    Similar definitions of network sparsity and assumption have been popularly employed in network data analysis, including hypergraph networks \citep{zhen2023community} and multi-layer networks \citep{lei2020consistent,xu2023covariate}. 

   Unlike most literature on network data analysis that requires the community sizes to be balanced or asymptotically balanced (i.e., $n_k$'s are asymptotically of the same order), we only require the following much weaker assumption that tailors the proposed network-based neighborhood regression framework. 

  \begin{assumption}[Community sizes]\label{asm:community-sizes}
    There exists some positive constant $\delta$ such that 
    \begin{align}
    \label{equ: assum_3}
    \|\bx\|_{\infty} \max_{k\in[K]}\left({\frac{\log n}{n_ks_n(\|\bx\|^2 + \|\bx^{(k)}\|^2)}} \right)^{1/2}\le \delta,
    \end{align}
    where $\bx^{(k)}= \bZ_{., k} * \bx$ for $k \in [K]$.
  \end{assumption}
   \Cref{asm:community-sizes} is relatively mild, and we can understand it through the following examples. 
   As the first example, if $|x_i|$'s are upper bounded and lower bounded away from $0$, then the quantity on the left hand side of \eqref{equ: assum_3} is essentially of the order $\{{\log (n)}/{(n_k n s_n)}\}^{1/2} = \cO(n_k^{-{1/2}})$ according to \Cref{asm:spasity}. 
   As another example, if $x_1, \ldots, x_n$ are independent standard normal random variables, and thus $\|\bx\|^2$ is a Chi-square random variable with degree $n$, then it can be shown that $\|\bx\|_{\infty} \le c \log n $ and $\vert\|\bx\|^2 - n\vert \le c ({n \log n})^{1/2}$ with high probability for some constant $c$. 
   Therefore, the left hand side of \eqref{equ: assum_3} is $\cO_p(\log n \{\log(n)/ (n_k n s_n)\}^{1/2}) = \cO_p(n_k^{-1/2}\log n )$. In both examples, \eqref{equ: assum_3} holds as long as the community size $n_k$ diverges faster than $(\log n)^2$. Moreover, both examples allow the upper bound $\delta$ to vanish. We also remark that including the $\|\bx^{(k)}\|^2$ in the denominator can better accommodate imbalanced communities because the optimal $k$ in the left-hand side of (\ref{equ: assum_3}) does not necessarily lead to $n_{\min}$ in the denominator, the smallest community size.

   The next assumption concerns the tail behavior of the additive noise $\bepsilon$. 

      \begin{assumption}[Additive noise]\label{asm:epsilon}
        Conditional on $\bx$ and $\bA$, we assume $\epsilon_1,\ldots,\epsilon_n$ are 
        independent centered sub-exponential random variables with uniform parameters $(\sigma_{\epsilon}^2, b_{\epsilon})$. Precisely, 
        \begin{equation*}
        \EE(e^{t \epsilon_i}) \le e^{t^2 \sigma_{\epsilon}^2 /2}, \text{  for any } \vert t \vert < b_{\epsilon}^{-1} \text{ and } i\in [n]. 
        \end{equation*}
    \end{assumption}

    The conditional independent assumption on $\epsilon_i$'s still allows for the dependence between $\epsilon_i$ and $\epsilon_j$, for any $i\ne j$. This makes the assumption more realistic to genomic data, where the genes are adjacency with each other in a chromatin that has a 3-dimensional spatial structure. Alternatively, we may directly model the dependence of $\epsilon_1, \ldots, \epsilon_n$ as a weakly dependence sequence satisfying $\tau$-mixing property \citep{merlevede2011bernstein}. The sub-exponential assumption on $\epsilon_i$'s is mild and includes the classes of bounded random variables and sub-Gaussian random variables \citep[Definition 2.7]{wainwright2019high}. Also, \Cref{asm:epsilon} does not require $\epsilon_i$'s to be identically distributed, while most classic regression setups do. Finally, to better illustrate the rate of convergence, sometimes it will be convenient to assume the following tali bound for the maximum deviation of $\bx$'s entries. 
        \begin{assumption}
        \label{assum: x}
        Conditional on $\bA$, $x_i$'s are independent zero-mean random variables, and there exists a constant $\gamma$ and a quantity $\kappa_n$ vanishing with $n$ such that with probability at least $1-\kappa_n$, it holds that $\|\bx\|_{\infty} \le \gamma (\log n)^{1/2}$. 
  \end{assumption}
  Note that the $x_i$'s are not necessarily independent under \Cref{assum: x}, and a wide range of classes of distributions, such as sub-exponential random variables,  shall satisfy such exponentially decaying probabilistic tail bound. 

    \subsection{Non-asymptotic analysis of the Hessian}
    \label{sec: non-asym}
    In this section, we study the spectral property of the Hessian $\bH_k$ by investigating its mean $\EE(\bH_k)$ and its concentration behavior of $\bH_k$ to $\EE(\bH_k)$. We first decompose $\bH_k$ as
    \begin{align}
    \label{equ:I_1+I_2+I_3+I_4}
        & \bH_k = \bI_1+ \bI_2 +  \bI_3 +\bI_4,   
    \end{align}
    where $\bI_1 =  \bZ^\top \left[ \bx \bx^\top * (\bA-\EE(\bA)) \diag(\bZ_{., k}) (\bA-\EE(\bA))\right]\bZ$ is a matrix quadratic form of $\bA - \EE(\bA)$,  $\bI_2 = \bZ^\top \left[ \bx \bx^\top * (\bA-\EE(\bA)) \diag(\bZ_{., k}) \EE(\bA)\right]\bZ$ has zero-mean, $\bI_3 = \bI_2^\top$,
    and $\bI_4 =\bZ^\top \left[ \bx \bx^\top * \EE(\bA)\diag(\bZ_{., k}) \EE(\bA) \right]\bZ $ is deterministic and positive semi-definite when $\bx$ is given. Moreover, we can further decompose $\bI_1 = \bS_1 + \bS_2$, where 
    \begin{align}
    \label{equ:S1+S2}
    &\bS_1  = \sum_{k_1=1}^K \sum_{k_2 =1}^K  \sum_{\psi_i = k_1} \sum_{\psi_{i^\prime} = k_2} \sum_{\psi_j = k} x_i x_{i^\prime} \ind\left(i\ne i^\prime \right)(\bA - \EE(\bA) )_{i, j} (\bA - \EE(\bA))_{i^\prime, j} \be_{k_1} \be_{k_2}^\top, \text {and }\nonumber\\
    &\bS_2 = \sum_{k_1=1}^K  \sum_{\psi_i = k_1} \sum_{\psi_j = k} x_i^2  (A_{i, j} - \EE(A_{i, j}))^2 \be_{k_1} \be_{k_1}^\top. 
    \end{align}
   Herein, $\be_{j}\in\RR^n$ is a unit vector whose $j$th entry being one, $\psi_{i} = \argmax_{k \in[K]}Z_{i,k}$ is the community assignment of the $i$th sample. Note that the notation $\sum_{\psi_i = k_!}$ is a shorthand notation for $\sum_{i:, \psi_i = k_1}$, which sums terms across all the indexes $i$ in the $k_1$-th community, similarly for $\sum_{\psi_{i^\prime} = k_2}$ and $\sum_{\psi_j = k}$. Clearly, $\bS_1$ is a zero-mean symmetric matrix, and $\bS_2$ is a diagonal matrix that has non-negative diagonals. Intuitively, the spectral information of $\bH_k$ is mainly encoded in $\bS_2$ and $\bI_4$, while the zero-mean terms shall have small spectral norms. In fact, it follows from the above decomposition that $\EE(\bH_k) = \EE(\bS_2) + \bI_4$, and 
    \begin{align}
        \bH_k - \EE(\bH_k)  = \bS_1 + \{\bS_2 - \EE(\bS_2)\} + (\bI_2 + \bI_2^{\top})  \label{eq:decomp}
    \end{align}
    Since $\bI_4$ is positive semi-definite, we can verify straightforwardly that 
    \begin{align}
    \lambda_{\min}\left(\EE(\bH_k)\right)\ge \lambda_{\min}\left(\EE(\bS_2)\right) \ge \min_{k^\prime \in [K]}   B_{k^{\prime}, k}(1 - B_{k^{\prime}, k})(n_k - 1)\|\bx^{(k^{\prime})}\|^2. \label{eq:lam-min-EHk}
    \end{align}

    From \eqref{eq:lam-min-EHk}, the lower bound of $\lambda_{\min}(\EE(\bH_k))$ shall be of order $\Omega( n_k n_{\min}s_n)$ with $n_{\min} = \min_{k^\prime \in [K]} n_{k^\prime}$ if $B_{k^\prime, k}$ is of order $\Omega(s_n)$.
    This indicates that $\EE(\bH_k)$ is guaranteed to have full rank.
    If the perturbation of $\bH_k - \EE(\bH_k)$ can be further controlled, we are able to infer that $\bH_k$ is nonsingular with high probability. 
    The following theorem provides a careful perturbation analysis of $\bH_k - \EE(\bH_k)$. 
    
  \begin{theorem}[Fisher information concentration]
  \label{thm:eig-Hk}
  Under \Crefrange{asm:sbm}{asm:community-sizes}, for the Hessian $\bH_k=\bM_k^{\top}\bM_k$ with $\bM_k$ defined in \eqref{eq:Mk}, it holds that $\|\bH_k - \EE(\bH_k)\| \leq r_k$ with probability at least $1 - 2K(C_1n_k +2C_1 + 2)/n^2$, for some universal constant $C_1$, where 
  \begin{align*}
        r_k &=  \alpha(\delta)(C_1 + {s_n}^{1/2}) ({ \|\bx\|^2 + \|\bx^{(k)}\|^2 })^{1/2}\left\{\left({\sum_{k^\prime=1}^K \|\bx^{(k^\prime)}\|_\infty^2}\right)^{1/2}+\frac{\|\bx\|}{{n}^{1/2}}\right\} s_n ({n_k n})^{1/2} \log n \\
        & + 2^{3/2} \left(\frac{2}{3^{1/2}}C_1 {s_n}^{1/2} + 1\right)  \max_{k^\prime\in [K]}\|\bx^{(k^\prime)}\|_4^2({n_ks_n\log n})^{1/2}+  4(C_1+1) \|\bx\|_{\infty}^2 \log n,
    \end{align*}
  and $\alpha(\delta)= \{ (8\delta + 4({4\delta^2 + 9})^{1/2}) \}/3$. 
  \end{theorem}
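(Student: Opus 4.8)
The plan is to control $\|\bH_k - \EE(\bH_k)\|$ term by term through the decomposition \eqref{eq:decomp}, namely $\bH_k - \EE(\bH_k) = \bS_1 + \{\bS_2 - \EE(\bS_2)\} + (\bI_2 + \bI_2^\top)$, and then to combine the pieces by the triangle inequality $\|\bH_k - \EE(\bH_k)\| \le \|\bS_1\| + \|\bS_2 - \EE(\bS_2)\| + 2\|\bI_2\|$. Each of the three resulting bounds will furnish one of the three additive groups comprising $r_k$: the quadratic term $\bS_1$ (together with the cross term $\bI_2$, which turns out to be of the same or lower order) produces the first group carrying the prefactor $\alpha(\delta)$ and the scaling $s_n(n_k n)^{1/2}\log n$; the diagonal fluctuation $\bS_2 - \EE(\bS_2)$ produces the second group with $\max_{k'}\|\bx^{(k')}\|_4^2 (n_k s_n\log n)^{1/2}$; and the Bernstein correction terms produce the final $\|\bx\|_\infty^2 \log n$ group. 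The stated failure probability will be assembled at the end by a union bound over the individual tail events.

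The two more tractable terms are handled by standard concentration. Since $\bS_2 - \EE(\bS_2)$ is diagonal, its spectral norm equals the largest absolute deviation among its $K$ diagonal entries; conditional on $\bx$, the $(k_1,k_1)$ entry is a sum of the independent bounded variables $x_i^2 (A_{i,j} - \EE A_{i,j})^2$ over $\{i : \psi_i = k_1\}$ and $\{j : \psi_j = k\}$, so a scalar Bernstein inequality (with variance proxy of order $\|\bx^{(k_1)}\|_4^4\, n_k s_n$ and envelope $\|\bx\|_\infty^2$) followed by a union over the $K$ diagonals yields the second and third groups of $r_k$ directly. For the cross term, the key observation is that $\bI_2$ is \emph{linear} in the centered entries $(A_{i,j} - \EE A_{i,j})$, hence a sum of independent mean-zero random matrices indexed by the edges $i<j$; the matrix Bernstein inequality then applies once the matrix-variance proxy and the uniform operator-norm envelope of the summands are computed, with the sparsity entering through $\EE(\bA)$ and \Cref{asm:spasity} ensuring the $\log n$-scaled deviation dominates. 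This produces a contribution of the first-group order.

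The main obstacle is the genuinely bilinear term $\bS_1$. Grouping its summands by the shared index $j$ over community $k$ writes $\bS_1 = \sum_{\psi_j = k}\bT_j$ as a sum of $n_k$ rank-structured matrices, but the symmetry $A_{i,j}=A_{j,i}$ makes the columns $(\bA-\EE\bA)_{\cdot,j}$ for distinct $j$ dependent, destroying the row-independence that ordinary random-design arguments exploit. My plan is to decouple the two centered copies of $\bA-\EE(\bA)$ via a decoupling inequality, condition on one copy, and treat the remaining sum as independent mean-zero matrices amenable to matrix Bernstein, carefully tracking how the $n_k$ column indices in community $k$ and the $n$ inner indices combine into the $(n_k n)^{1/2} s_n\log n$ scaling (the union over the $\sim n_k$ column events is the source of the $C_1 n_k$ factor in the probability). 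Because the operator-norm bound inherited this way is self-referential through $\|\bx\|_\infty$ and the community quantities of \Cref{asm:community-sizes}, it takes the form ``deviation $\le$ constant $+\ \delta\times$ (deviation)$^{1/2}$''; resolving the implied quadratic yields the prefactor $\alpha(\delta)$. Indeed one can check that $\alpha(\delta)=\{8\delta+4(4\delta^2+9)^{1/2}\}/3$ is precisely the positive root of $3\alpha^2 - 16\delta\alpha - 48 = 0$, with $\alpha(0)=4$, confirming that \Cref{asm:community-sizes} is what closes this self-bounding step.

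Finally, I would collect the three high-probability bounds, invoke the triangle inequality, and take a union bound over all the failure events — the $K$ diagonal events for $\bS_2$, the two directions of $\bI_2+\bI_2^\top$, and the $\sim n_k$ column events arising in the decoupled treatment of $\bS_1$ — to obtain the claimed probability $1 - 2K(C_1 n_k + 2C_1 + 2)/n^2$. The delicate bookkeeping will be to verify that the constants from the three separate Bernstein-type applications can be absorbed into a single universal constant $C_1$ and that the various $(s_n)^{1/2}$ factors appear exactly as stated; this is routine once the three component bounds are in hand, so the crux of the argument remains the decoupled concentration of the quadratic form $\bS_1$.
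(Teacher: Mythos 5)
Your proposal matches the paper's proof essentially step for step: the same decomposition $\bH_k-\EE(\bH_k)=\bS_1+\{\bS_2-\EE(\bS_2)\}+(\bI_2+\bI_2^\top)$ with the triangle inequality, decoupling plus conditional matrix Bernstein for $\bS_1$ (including the correct identification of $\alpha(\delta)$ as the positive root of $3\alpha^2-16\delta\alpha-48=0$ closed by Assumption 3, and of the union over the $n_k$ column indices in community $k$ as the source of the $n_k$ factor in the failure probability), scalar Bernstein with a union over the $K$ diagonals for $\bS_2-\EE(\bS_2)$, and matrix Bernstein for the linear term $\bI_2$, which is indeed absorbed into the first group as the $s_n^{1/2}$ part of $(C_1+s_n^{1/2})$. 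The only cosmetic deviations are that in the paper $C_1$ is exactly the decoupling constant of de la Pe\~na and Montgomery-Smith rather than an absorption of several Bernstein constants, and that after decoupling the paper reinstates the diagonal via $\tilde{\bS}_1=\tilde{\bI}_1-\tilde{\bS}_2$, so the decoupled diagonal $\tilde{\bS}_2$ (not only $\bS_2-\EE(\bS_2)$) contributes the $C_1 s_n^{1/2}$ portion of the second group of $r_k$ — a bookkeeping detail your plan would encounter and resolve the same way when handling the $\ind(i\neq i')$ restriction.
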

  The universal constant $C_1$ comes from the decoupling constant in \citet{de1995decoupling}, and our result and technical proof do not induce any other unclear constant. In addition, $\delta$ comes from \Cref{asm:community-sizes} and $\alpha(\delta)$ will decrease to $4$ if $\delta$ vanishes. As non-asymptotic analysis is conducted, the probabilistic upper bound looks a bit complicated. The next corollary details the asymptotic order of $\|\bH
_k -\EE(\bH_k)\|$. 

\begin{corollary}\label{cor:subGau-x}
    Under Assumptions \ref{asm:sbm}-\ref{asm:community-sizes} and \ref{assum: x}, there exists a universal constant $C_2$, such that 
    \begin{align*}
    \|\bH_k - \EE(\bH_k)\| \le C_2 \gamma^2 K^{1/2}s_n n n_k^{1/2}  (\log n)^2, 
    \end{align*}
     with probability at least $1 - 2K(C_1n_k +2C_1 + 2)/n^2 - \kappa_n$. 
\end{corollary}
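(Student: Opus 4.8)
The plan is to substitute the high-probability $\ell_\infty$ control of $\bx$ supplied by \Cref{assum: x} into the explicit bound $r_k$ of \Cref{thm:eig-Hk}, and then simplify each of its three summands using the sparsity lower bound $s_n \ge \log n / n$ of \Cref{asm:spasity}. First I would work on the event $G = \{\|\bx\|_\infty \le \gamma(\log n)^{1/2}\}$, which by \Cref{assum: x} occurs with probability at least $1-\kappa_n$. On $G$, since $|x_i| \le \gamma(\log n)^{1/2}$ for all $i$ and $n_{k'}\le n$, the relevant norms obey
\begin{align*}
\|\bx\|^2 \le \gamma^2 n\log n,\quad \|\bx^{(k)}\|^2 \le \gamma^2 n_k\log n, \\
\|\bx^{(k')}\|_\infty \le \gamma(\log n)^{1/2},\quad \|\bx^{(k')}\|_4^2 \le \gamma^2 n_{k'}^{1/2}\log n,
\end{align*}
together with $\sum_{k'=1}^{K}\|\bx^{(k')}\|_\infty^2 \le \gamma^2 K\log n$. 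These are the only properties of $\bx$ I will use, so the remainder of the argument is purely deterministic.

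For the leading summand of $r_k$ I would use $\alpha(\delta) = \cO(1)$ (as $\delta$ is a fixed constant), $s_n \le 1$, and $\|\bx\|^2 + \|\bx^{(k)}\|^2 \le 2\gamma^2 n\log n$. The bracketed factor is at most $\gamma(\log n)^{1/2}(K^{1/2}+1) \le 2\gamma K^{1/2}(\log n)^{1/2}$, because $K\ge 1$ makes the $\sum_{k'}\|\bx^{(k')}\|_\infty^2$ contribution dominate $\|\bx\|/n^{1/2} \le \gamma(\log n)^{1/2}$. Multiplying these against $s_n(n_k n)^{1/2}\log n$ and collecting exponents yields order $\gamma^2 K^{1/2} s_n n n_k^{1/2}(\log n)^2$, which matches the target rate exactly; the leading term is therefore what determines the stated bound.

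It then remains to verify that the other two summands are of strictly lower order. Bounding $\max_{k'}\|\bx^{(k')}\|_4^2 \le \gamma^2 n^{1/2}\log n$ gives the second summand order $\gamma^2 n^{1/2} n_k^{1/2} s_n^{1/2}(\log n)^{3/2}$; its ratio to the target is $\{K^{1/2} s_n^{1/2} n^{1/2}(\log n)^{1/2}\}^{-1}$, which is $\cO(1/\log n)$ once \Cref{asm:spasity} is invoked through $s_n^{1/2} n^{1/2}\ge(\log n)^{1/2}$. The third summand has order $\gamma^2(\log n)^2$, with ratio $\{K^{1/2} s_n n n_k^{1/2}\}^{-1} = \cO(1/\log n)$, again from $s_n n \ge \log n$. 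Absorbing the finitely many universal constants ($C_1$, the numerical factors, and the bounded $\alpha(\delta)$) into a single $C_2$, I conclude that on $G$ the deterministic inequality $r_k \le C_2\gamma^2 K^{1/2} s_n n n_k^{1/2}(\log n)^2$ holds.

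Finally I would combine the probabilities. \Cref{thm:eig-Hk} gives $\|\bH_k - \EE(\bH_k)\| \le r_k$ with probability at least $1 - 2K(C_1 n_k + 2C_1 + 2)/n^2$, and on $G$ we have just shown $r_k$ is bounded by the target, so a union bound over the complements of these two events delivers the claimed probability $1 - 2K(C_1 n_k + 2C_1 + 2)/n^2 - \kappa_n$. The main obstacle I anticipate is the bookkeeping in the leading summand---tracking the powers of $n$, $n_k$, $K$, and $\log n$ through the product so that they land on precisely $n n_k^{1/2}(\log n)^2$---and, relatedly, confirming through \Cref{asm:spasity} that the remaining two summands are genuinely dominated rather than merely comparable. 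A secondary subtlety is the conditioning: \Cref{thm:eig-Hk} treats $\bx$ as fixed while \Cref{assum: x} controls $\bx$ conditionally on $\bA$, so the union bound should be read conditionally on $\bx$ for the Hessian concentration and then integrated, using that the probability bound of \Cref{thm:eig-Hk} is uniform in $\bx$.
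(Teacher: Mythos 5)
Your proposal is correct and follows exactly the route the paper intends for this corollary: substitute the event $\{\|\bx\|_\infty\le\gamma(\log n)^{1/2}\}$ from \Cref{assum: x} into the explicit bound $r_k$ of \Cref{thm:eig-Hk}, check that the leading summand collapses to $\gamma^2 K^{1/2}s_n n\, n_k^{1/2}(\log n)^2$ while \Cref{asm:spasity} (via $s_n n\ge \log n$) makes the other two summands lower order, and finish with a union bound yielding the extra $\kappa_n$ in the failure probability. Your bookkeeping of the norms $\|\bx\|^2$, $\|\bx^{(k')}\|_4^2$, and $\sum_{k'}\|\bx^{(k')}\|_\infty^2$, and your remark on the conditioning between $\bx$ and $\bA$, are all consistent with the paper's argument.
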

Since $\lambda_{\min}(\bH_k) =\Omega(s_n n_k n_{\min})$, \Cref{cor:subGau-x} allows us to conclude concentration if $n (\log n)^2 = O(n_k^{1/2} n_{\min})$ and $K = O(1)$. It then follows from Weyl's inequality that $\lambda_{\min}(\bH_k)$ is asymptotically of the same order as $\lambda_{\min}\left(\EE(\bH_k)\right)$.  

Based on the decomposition \eqref{eq:decomp}, the proof for \Cref{thm:eig-Hk} relies on a decoupling approach \citep{de1995decoupling} to bound the matrix quadratic form  $\bS_1$, the usual Bernstein's inequality together with the union bound to bound the diagonal matrix $\bS_2 - \EE(\bS_2)$ and matrix Bernsten's inequality \citep{tropp2012user} to bound  $\bI_2$, which are done in \Crefrange{lem:Nk-tI1}{lem:Nk-tS2}, \Cref{lem:Nk-S2}, and \Cref{lem:Nk-I2} in the supplementary materials, respectively.
  Our proof technique is related to, but substantially different from, the technique in \citet{lei2023bias} or Hanson-Wright type inequality \citep{Hanson2013} for matrix quadratic forms. This is because every summand in the decomposition \eqref{equ:I_1+I_2+I_3+I_4} contains a left factor $\bZ^{\top}$ and a right factor $\bZ$ that aggregates the random variables according to their community memberships, while the Theorems in \citet{lei2023bias} work for the matrix quadratic form $\bF \bG\bF^\top$ for a deterministic matrix $\bG$ and a random matrix $\bF$ that has independent entries or is symmetric with independent upper triangle entries. Apparently, the appearance of $\bZ$ makes $\bZ^\top[\bx \bx^\top * (\bA - \EE(\bA))]\diag(\bZ_{\cdot,k})$, the random matrix in $\bI_1$ for example, neither symmetric nor have independent entries. Additionally, the Hadamard factor $\bx \bx^\top$ also adds an extra layer of difficulty to derive the probabilistic concentration bound. All of these require a subtle and careful analysis.  

      \begin{remark}[Matrix quadratic form]
        In the simple scenario that $\bx = \bm{1}_n$, if we upper bound $\|\bI_1\|$ by $n_{\max} \|(\bA-\EE(\bA)) \diag(\bZ_{., k}) (\bA-\EE(\bA))\|$ and employ Theorem 5 of \citet{lei2023bias} to upper bound $\|(\bA-\EE(\bA)) \diag(\bZ_{., k}) (\bA-\EE(\bA))\| = \Op(ns_n \log n)$, it leads to $\|\bI_1\| = \Op(n_{\max} n s_n \log n )$ with $n_{\max} = \max_{k^\prime \in [K]} n_{k^\prime}$, which fails to conclude concentration since  $\lambda_{\min}(\EE(\bH_k))= \Omega(n n_{\min}s_n)$. Therefore, it is vital to make full use of the aggregation structure in $\bI_1$ while studying its concentration behavior. 
    \end{remark}

    \subsection{Consistency}
    \label{sec: consistency}
    Under the random-design setting, the oracle coefficient $\bbeta_{k, .}^*$ is defined as the solution to the population-level normal equation.
    More specifically, by taking expectation on both sides of \Cref{equ:normal_equation}, we obtain
  \begin{align*}
   \bbeta^*_{k, .} = \EE(\bH_k)^{-1}\EE(\bM_k^{\top}\by),
   \end{align*}
   where the expectation is taken with respect to both $\bA$ and $\bepsilon$. Conditional on $\bx$, $\bA$ and the event that $\lambda_{\min}(\bH_k)$ diverges, when $\Var(\epsilon_i) = \sigma^2$, for all $i$ in the $k$-th community, Corollary 2 provides asymptotic normality and thus concludes the asymptotic consistency of $\hat{\bbeta}_{., k}$ without concerning the rate of convergence. For general random-design analysis, the next theorem shows that the estimator $\hbeta_{k, .}$ is an unbiased of $\bbeta^*_{k, .}$, and provides the rate of convergence.
    
    \begin{theorem}[Unbiasedness and Consistency]\label{thm:consistency}
    Under Assumption \ref{asm:epsilon}, it holds that $\EE(\hbeta_{k, .}) = \bbeta^*_{k, .}$. Moreover, with probability at least $1 - (2K)/n^2$, we have
    \begin{align*}
      \|\hat{\bbeta}_{k, .} - \bbeta_{k, .}^* \|\le  2 \left( 2K /\lambda_{\min}(\bH_k)\right)^{1/2}(\sigma_{\epsilon}+ b_{\epsilon})\log n.
    \end{align*}
     Additionally under the assumptions in \Cref{thm:eig-Hk}, we have
    \begin{align*}
        \|\hat{\bbeta}_{k, .} - \bbeta_{k, .}^* \| & \le 2 \left( 2K\right)^{1/2}(\sigma_{\epsilon}+ b_{\epsilon})\left\{ \min_{k^\prime \in [K]}   B_{k^{\prime}, k}(1 - B_{k^{\prime}, k})(n_k - 1)\|\bx^{(k^{\prime})}\|^2  -  r_k \right\}^{-1/2} \log n.
    \end{align*}
    with probability at least  $1 - 2K(C_1n_k +2C_1 + 3)/n^2$. 
\end{theorem}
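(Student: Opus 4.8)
The plan is to first show that the community-wise least squares problem is (conditionally) well specified, so that the estimation error collapses to a pure noise term with no bias, and then to control that noise term by a whitening argument combined with sub-exponential concentration. The stated explicit rate then follows by inserting the Hessian lower bound of \Cref{thm:eig-Hk} and the mean bound \eqref{eq:lam-min-EHk}.

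I would begin by recording two identities for $\bM_k=\diag(\bZ_{\cdot,k})(\one_n\bx^\top*\bA)\bZ$. Let $\bbeta_{k,0}$ denote the $k$th row of the true coefficient matrix $\bbeta$ in \eqref{eq:model}. A direct row-by-row computation gives $\diag(\bZ_{\cdot,k})(\bZ\bbeta\bZ^\top*\bA)\bx=\bM_k\bbeta_{k,0}$, and since $\diag(\bZ_{\cdot,k})$ is idempotent one also has $\bM_k^\top\diag(\bZ_{\cdot,k})=\bM_k^\top$, whence $\bM_k^\top(\bZ\bbeta\bZ^\top*\bA)\bx=\bM_k^\top\bM_k\bbeta_{k,0}=\bH_k\bbeta_{k,0}$. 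Because $\EE(\by\mid\bA,\bx)=(\bZ\bbeta\bZ^\top*\bA)\bx$ under \eqref{eq:model} and \Cref{asm:epsilon}, this yields, on the event $\rank(\bM_k)=K$, that $\EE(\hbeta_{k,\cdot}\mid\bA,\bx)=\bH_k^{-1}\bM_k^\top\EE(\by\mid\bA,\bx)=\bbeta_{k,0}$. Since $\bbeta_{k,0}$ is deterministic (it does not depend on $\bA$ or $\bx$), the tower property gives $\EE(\hbeta_{k,\cdot})=\bbeta_{k,0}$, and taking expectations in the same identity gives $\EE(\bM_k^\top\by)=\EE(\bH_k)\bbeta_{k,0}$, so that $\bbeta_{k,\cdot}^*=\EE(\bH_k)^{-1}\EE(\bM_k^\top\by)=\bbeta_{k,0}$. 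This establishes the unbiasedness claim $\EE(\hbeta_{k,\cdot})=\bbeta_{k,\cdot}^*$.

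The identification $\bbeta_{k,\cdot}^*=\bbeta_{k,0}$ then lets me write the error exactly as $\hbeta_{k,\cdot}-\bbeta_{k,\cdot}^*=\bH_k^{-1}\bM_k^\top(\by-\EE(\by\mid\bA,\bx))=\bH_k^{-1}\bM_k^\top\bepsilon$. The crux for obtaining the square-root (rather than inverse) dependence on $\lambda_{\min}(\bH_k)$ is a whitening step: set $\bQ=\bM_k\bH_k^{-1/2}$, which has orthonormal columns because $\bQ^\top\bQ=\bH_k^{-1/2}\bM_k^\top\bM_k\bH_k^{-1/2}=\bI_K$. Then $\hbeta_{k,\cdot}-\bbeta_{k,\cdot}^*=\bH_k^{-1/2}\bQ^\top\bepsilon$, so
\[
\|\hbeta_{k,\cdot}-\bbeta_{k,\cdot}^*\|\le\lambda_{\min}(\bH_k)^{-1/2}\,\|\bQ^\top\bepsilon\|\le\lambda_{\min}(\bH_k)^{-1/2}\,K^{1/2}\max_{l\in[K]}|\bQ_{\cdot,l}^\top\bepsilon|.
\]
Each coordinate $\bQ_{\cdot,l}^\top\bepsilon$ is a linear combination of the conditionally independent sub-exponential noises with weight vector $\bQ_{\cdot,l}$ satisfying $\|\bQ_{\cdot,l}\|=1$ and hence $\|\bQ_{\cdot,l}\|_\infty\le1$; its moment generating function factorizes to show it is sub-exponential with parameters $(\sigma_\epsilon^2,b_\epsilon)$ under \Cref{asm:epsilon}. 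Applying the standard sub-exponential (Bernstein) tail bound with $t=2\sqrt2(\sigma_\epsilon+b_\epsilon)\log n$ and a union bound over the $K$ coordinates gives $\max_{l\in[K]}|\bQ_{\cdot,l}^\top\bepsilon|\le t$ with conditional probability at least $1-2K/n^2$, which is the first stated bound.

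Finally, for the explicit rate I would lower bound $\lambda_{\min}(\bH_k)\ge\lambda_{\min}(\EE(\bH_k))-\|\bH_k-\EE(\bH_k)\|$ by Weyl's inequality, insert the deterministic lower bound $\lambda_{\min}(\EE(\bH_k))\ge\min_{k'\in[K]}B_{k',k}(1-B_{k',k})(n_k-1)\|\bx^{(k')}\|^2$ from \eqref{eq:lam-min-EHk}, and replace $\|\bH_k-\EE(\bH_k)\|$ by $r_k$ via \Cref{thm:eig-Hk}. A union bound over the noise event (failure $\le2K/n^2$) and the Hessian concentration event (failure $\le2K(C_1n_k+2C_1+2)/n^2$) yields the overall failure probability $2K(C_1n_k+2C_1+3)/n^2$. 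The main obstacle here is not any single estimate but the setup: one must verify the two $\bM_k$-identities so that the oracle $\bbeta_{k,\cdot}^*$ exactly coincides with the true parameter (removing all bias from the error term), and then execute the whitening so that the rate scales as $\lambda_{\min}(\bH_k)^{-1/2}$ while keeping each whitened coordinate sub-exponential through the bound $\|\bQ_{\cdot,l}\|_\infty\le1$; the genuinely heavy probabilistic work has already been discharged in \Cref{thm:eig-Hk}.
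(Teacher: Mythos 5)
Your proof is correct and follows essentially the same route as the paper's: both reduce the error exactly to $\bH_k^{-1}\bM_k^{\top}\bepsilon$, apply a coordinatewise Bernstein bound for sub-exponential linear combinations with a union bound over the $K$ coordinates (your whitened matrix $\bQ=\bM_k\bH_k^{-1/2}$ with $\|\bQ_{\cdot,l}\|=1$ is a clean repackaging of the paper's direct bounds $\|\bH_k^{-1}\bM_k^{\top}\|_{\fro}\le (K/\lambda_{\min}(\bH_k))^{1/2}$ and $\|\bH_k^{-1}\bM_k^{\top}\|_{\max}\le \lambda_{\min}(\bH_k)^{-1/2}$, and reproduces the constant $2(2K)^{1/2}$ exactly), and then invoke Weyl's inequality with \eqref{eq:lam-min-EHk} and \Cref{thm:eig-Hk}, with matching failure probabilities. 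Your explicit verification of the two $\bM_k$-identities, showing that the oracle $\bbeta^*_{k,\cdot}$ coincides with the true row $\bbeta_{k,0}$, is a welcome addition that makes rigorous a well-specification step the paper uses implicitly when writing $\by$ as $\bM_k\bbeta^*_{k,\cdot}+\bepsilon$.
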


Clearly, the upper bound of $\|\hat{\bbeta}_{k, .} - \bbeta_{k, .}^* \|$ comes from the lower bound of $\lambda_{\min}(\bH_k)$ and the sub-exponential concentration behaviors on the $\epsilon_i$'s. The following corollary elaborates on this non-asymptotic upper bound in terms of asymptotic order. 
\begin{corollary}
\label{cor: root_n}
    Under Assumptions \ref{asm:sbm}-\ref{assum: x}, if additionally $B_{k^\prime, k} = \Omega(s_n)$, $\min_{k^\prime \in [K]} \|\bx^{(k^\prime)}\| = \Omega(n_{\min})$, and $n (\log n)^2 = O(n_k^{1/2} n_{\min})$, then there exists a constant $C_3$, such that 
    \begin{align*}
    \|\hat{\bbeta}_{k, .} - \bbeta_{k, .}^* \| \le C_3 \frac{ K^{1/2}(\sigma_{\epsilon} + b_{\epsilon})\log n }{(s_n n_k n_{\min})^{1/2}},
    \end{align*}
     with probability at least $1 - 2K(C_1n_k +2C_1 + 3)/n^2 - \kappa_n$. 
\end{corollary}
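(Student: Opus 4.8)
The plan is to derive \Cref{cor: root_n} directly from the second, explicit bound in \Cref{thm:consistency} by substituting the asymptotic orders of its two ingredients: the deterministic signal lower bound $\lambda^{\mathrm{LB}} := \min_{k'\in[K]} B_{k',k}(1-B_{k',k})(n_k-1)\|\bx^{(k')}\|^2$ coming from \eqref{eq:lam-min-EHk}, and the concentration radius $r_k$ from \Cref{thm:eig-Hk}. Since \Cref{thm:consistency} already gives, on the relevant high-probability event,
\[
\|\hat{\bbeta}_{k,\cdot} - \bbeta_{k,\cdot}^*\| \le 2(2K)^{1/2}(\sigma_{\epsilon}+b_{\epsilon})\{\lambda^{\mathrm{LB}} - r_k\}^{-1/2}\log n,
\]
the entire task reduces to showing $\{\lambda^{\mathrm{LB}} - r_k\}^{-1/2} = O((s_n n_k n_{\min})^{-1/2})$, after which the $(2K)^{1/2}$ and $(\sigma_{\epsilon}+b_{\epsilon})\log n$ factors carry over verbatim into the claimed rate.

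First I would lower bound the signal. Under $B_{k',k} = \Omega(s_n)$ together with $s_n$ bounded away from one (so that $1-B_{k',k} = \Omega(1)$), the factor $(n_k-1)=\Omega(n_k)$, and the assumption controlling $\min_{k'}\|\bx^{(k')}\|$ (which supplies $\min_{k'}\|\bx^{(k')}\|^2 = \Omega(n_{\min})$), these four quantities multiply to give $\lambda^{\mathrm{LB}} = \Omega(s_n n_k n_{\min})$. Next I would invoke \Cref{cor:subGau-x}, whose order analysis of $r_k$ under \Cref{assum: x} yields $r_k = O(\gamma^2 K^{1/2} s_n n n_k^{1/2}(\log n)^2)$. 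With $K=O(1)$ and the scaling hypothesis $n(\log n)^2 = O(n_k^{1/2} n_{\min})$, this becomes $r_k = O(s_n n_k^{1/2}\cdot n_k^{1/2} n_{\min}) = O(s_n n_k n_{\min})$, i.e.\ $r_k$ is of the same order as $\lambda^{\mathrm{LB}}$; arranging the implicit constants so that $r_k \le \frac{1}{2}\lambda^{\mathrm{LB}}$ for all large $n$ (the cleanest sufficient condition being a strict separation $n(\log n)^2 = o(n_k^{1/2}n_{\min})$) then gives $\lambda^{\mathrm{LB}} - r_k \ge \frac{1}{2}\lambda^{\mathrm{LB}} = \Omega(s_n n_k n_{\min})$.

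Combining the two estimates yields $\{\lambda^{\mathrm{LB}} - r_k\}^{-1/2} = O((s_n n_k n_{\min})^{-1/2})$, and substituting into the displayed bound from \Cref{thm:consistency} produces the stated rate with an explicit $C_3$. For the probability, the event of \Cref{thm:consistency} already accounts for $1 - 2K(C_1 n_k + 2C_1 + 3)/n^2$, and the only additional failure comes from invoking \Cref{assum: x} through \Cref{cor:subGau-x}, contributing the extra $-\kappa_n$; a union bound delivers $1 - 2K(C_1 n_k + 2C_1 + 3)/n^2 - \kappa_n$. The main obstacle is not any single inequality but the careful order bookkeeping in the denominator—verifying that the perturbation $r_k$ is dominated by a constant fraction of the signal $\lambda^{\mathrm{LB}}$—which is exactly where all three extra hypotheses ($B_{k',k}=\Omega(s_n)$, $\min_{k'}\|\bx^{(k')}\|^2 = \Omega(n_{\min})$, and $n(\log n)^2 = O(n_k^{1/2}n_{\min})$ with $K=O(1)$) are consumed; every remaining factor then transfers mechanically from \Cref{thm:consistency}.
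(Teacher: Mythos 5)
Your proposal is correct and takes essentially the same route the paper intends: \Cref{cor: root_n} follows by substituting into the second bound of \Cref{thm:consistency} the signal order $\Omega(s_n n_k n_{\min})$ from \eqref{eq:lam-min-EHk} (using $B_{k',k}=\Omega(s_n)$, $1-B_{k',k}=\Omega(1)$, and $\min_{k'}\|\bx^{(k')}\|^2=\Omega(n_{\min})$) together with the order of $r_k$ from \Cref{cor:subGau-x}, with the extra $\kappa_n$ failure probability charged to the event in \Cref{assum: x}. Your observation that the hypothesis $n(\log n)^2 = O(n_k^{1/2} n_{\min})$ makes $r_k$ only of the \emph{same} order as the signal, so that the implicit constants must be arranged (or the condition strengthened to $o(\cdot)$) to secure $\lambda^{\mathrm{LB}} - r_k \ge \tfrac{1}{2}\lambda^{\mathrm{LB}}$, accurately reflects the same bookkeeping the paper performs implicitly.
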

If $s_n\asymp 1$ and the community sizes are balanced such that $n = O(n_{\min})$, surprisingly, \Cref{cor: root_n} suggests linear consistency, $\|\hat{\bbeta}_{k, .}-{\bbeta}_{k, .}^*\|=\cO(n^{-1})$, instead of the canonical root-$n$ consistency for regression.
This shows the blessing of incorporating network neighborhood information. Intuitively, after incorporating the network data, the effective sample size increases from $n$ to $n + s_n n(n-1)/2$, which is of the order $s_n n^2$, while the number of parameters to be estimated is of constant order. This makes linear consistency possible as $(s_n n^2)^{1/2} = s_n^{1/2}n$ is linear in $n$ if $s_n$ is of constant order. This also suggests that the network sparsity $s_n$ plays an important role as the variation of $s_n$ smoothly transforms $\hat{\bbeta}_{k, .}$ from canonical regime to blessing of neighborhood information regime.

We next investigate the influence of community detection error on the convergence rate of $\hat{\bbeta}_{k, .}$ when the community membership is unknown. 
Let $\alpha_n$ be the number of mis-clustering nodes from community detection, which is potentially diverging.
In most consistent community detection literature, $\alpha_n = \Op\left(\log n/s_n\right)$; see, for example, \citet{zhen2023community}.
We have the following result.

\begin{theorem}[Community membership misspecification]
\label{thm: mis_Z}
Denote $\hat{\bbeta}_{k, .}^{\mis}$ be the community-wise least square estimator with $\alpha_n$ nodes misspecified.  Under the condition of \Cref{cor: root_n}, there exists some positive constant $C_4$ and $C_5$, such that with probability at least $1 - C_4K/n - \kappa_n$, we have
\begin{align*}
     \|\hat{\bbeta}_{k, .}^{\mis} - \bbeta_{k, .}^* \| \le C_5 \frac{ K^{1/2}(\sigma_{\epsilon} + b_{\epsilon} + \gamma s_n \sqrt{n \alpha_n} \log n)\log n }{(s_n n_k n_{\min})^{1/2}}. 
\end{align*}
\end{theorem}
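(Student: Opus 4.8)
The plan is to benchmark the misspecified estimator against the correctly specified CLSE $\hat{\bbeta}_{k,\cdot}$ rather than directly against the oracle, so that \Cref{cor: root_n} handles most of the error. Writing $\hat{\bZ}$ for the membership matrix with $\alpha_n$ misassigned nodes, $\hat{\bM}_k=\diag(\hat{\bZ}_{\cdot,k})(\one_n\bx^\top*\bA)\hat{\bZ}$, and $\hat{\bH}_k=\hat{\bM}_k^\top\hat{\bM}_k$, I would start from
\begin{align*}
\|\hat{\bbeta}^{\mis}_{k,\cdot}-\bbeta^*_{k,\cdot}\|\le \|\hat{\bbeta}^{\mis}_{k,\cdot}-\hat{\bbeta}_{k,\cdot}\|+\|\hat{\bbeta}_{k,\cdot}-\bbeta^*_{k,\cdot}\|,
\end{align*}
where the second term is $O(K^{1/2}(\sigma_\epsilon+b_\epsilon)\log n/(s_nn_kn_{\min})^{1/2})$ by \Cref{cor: root_n}. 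For the first term a resolvent identity yields the exact decomposition $\hat{\bbeta}^{\mis}_{k,\cdot}-\hat{\bbeta}_{k,\cdot}=\hat{\bH}_k^{-1}(\Delta_M^\top\by-\Delta_H\hat{\bbeta}_{k,\cdot})$ with $\Delta_M=\hat{\bM}_k-\bM_k$ and $\Delta_H=\hat{\bH}_k-\bH_k=\hat{\bM}_k^\top\Delta_M+\Delta_M^\top\bM_k$, so that
\begin{align*}
\|\hat{\bbeta}^{\mis}_{k,\cdot}-\hat{\bbeta}_{k,\cdot}\|\le \lambda_{\min}(\hat{\bH}_k)^{-1}\big(\|\Delta_M\|\,\|\by\|+(\|\hat{\bM}_k\|+\|\bM_k\|)\,\|\Delta_M\|\,\|\hat{\bbeta}_{k,\cdot}\|\big).
\end{align*}
The problem thus reduces to (i) a perturbation bound on $\|\Delta_M\|$ in terms of $\alpha_n$, (ii) a lower bound on $\lambda_{\min}(\hat{\bH}_k)$, and (iii) the crude magnitudes $\|\by\|$, $\|\bM_k\|$, $\|\hat{\bbeta}_{k,\cdot}\|$ (the last being $O(1)$ once combined with \Cref{cor: root_n}).

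The engine of the proof is the bound on $\Delta_M$. Writing $\Delta_Z=\hat{\bZ}-\bZ$, which has at most $2\alpha_n$ nonzero entries since each misassigned node merely relocates the single unit in its row of $\bZ$, I would split
\begin{align*}
\Delta_M=\diag(\hat{\bZ}_{\cdot,k})(\one_n\bx^\top*\bA)\Delta_Z+\diag\!\big((\hat{\bZ}-\bZ)_{\cdot,k}\big)(\one_n\bx^\top*\bA)\bZ.
\end{align*}
The first summand equals $\sum_{j\in\mathcal{M}}x_j\bA_{\cdot,j}(\be_{\hat\psi_j}-\be_{\psi_j})^\top$ over the set $\mathcal{M}$ of misassigned nodes, and the second has at most $\alpha_n$ nonzero rows, each a row of $(\one_n\bx^\top*\bA)\bZ$. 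For both I would pass to the Frobenius norm and control the aggregated entries $\sum_{j:\psi_j=k'}x_jA_{i,j}$ by Bernstein's inequality: conditionally on $\bA$ these are sums of $O(n_{k'}s_n)$ independent centered terms bounded by $\gamma(\log n)^{1/2}$ under \Cref{assum: x}, giving $O(\gamma(n_{k'}s_n)^{1/2}\log n)$ uniformly over $i,k'$ with high probability. Summing over the $O(\alpha_n)$ active rows/columns then produces $\|\Delta_M\|=O(\gamma(\alpha_n n s_n)^{1/2}\log n)$, which—once divided by $\lambda_{\min}(\hat{\bH}_k)^{1/2}\asymp(s_nn_kn_{\min})^{1/2}$ and combined with the magnitude of $\by$—is the origin of the extra $\gamma s_n(n\alpha_n)^{1/2}\log n$ factor; a sharper accounting that exploits $\Delta_M^\top\by$ touching only the $O(\alpha_n)$ affected coordinates of $\by$ should recover the precise power of $s_n$.

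For (ii) I would \emph{not} re-apply \Cref{thm:eig-Hk} to $\hat{\bZ}$, because $\bA$ is generated from the true-membership stochastic block model while $\hat{\bH}_k$ aggregates it by $\hat{\bZ}$, so $\EE(\hat{\bH}_k)$ is not the block-model Hessian associated with $\hat{\bZ}$. Instead I would use Weyl's inequality, $\lambda_{\min}(\hat{\bH}_k)\ge\lambda_{\min}(\bH_k)-\|\Delta_H\|$, together with the lower bound $\lambda_{\min}(\bH_k)=\Omega(s_nn_kn_{\min})$ from \eqref{eq:lam-min-EHk} and \Cref{thm:eig-Hk}. As long as $\|\Delta_H\|\le(\|\hat{\bM}_k\|+\|\bM_k\|)\|\Delta_M\|$ is of smaller order than $\lambda_{\min}(\bH_k)$—which holds in the consistent-detection regime $\alpha_n=O(\log n/s_n)$—the misspecified Hessian retains the same order and the denominator is unchanged.

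The main obstacle I anticipate is precisely the $\|\Delta_M\|$ bound. The columns $\bA_{\cdot,j}$ for distinct misassigned nodes are correlated, since their overlaps $\langle\bA_{\cdot,j},\bA_{\cdot,j'}\rangle$ count common neighbors and are of order $ns_n^2$, and the symmetry of $\bA$ destroys independence across rows; the Frobenius/Bernstein argument must therefore control these cross terms carefully so as to obtain the $\alpha_n^{1/2}$ rather than an inflated $\alpha_n$ dependence. A secondary difficulty is bounding $\|\by\|$, which is random through both $\bA$ and $\bx$ and hence requires its own concentration (a Bernstein bound on each $y_i=\sum_j\beta_{\psi_i,\psi_j}A_{i,j}x_j+\epsilon_i$ giving $\|\by\|=O(\gamma n s_n^{1/2}\log n)$ under \Cref{assum: x}). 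Assembling (i)–(iii) and adding the \Cref{cor: root_n} term then yields the stated bound on a high-probability event of the form $1-C_4K/n-\kappa_n$.
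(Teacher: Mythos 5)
Your algebra is sound: the resolvent identity $\hat{\bbeta}^{\mis}_{k,\cdot}-\hat{\bbeta}_{k,\cdot}=\hat{\bH}_k^{-1}(\Delta_M^\top\by-\Delta_H\hat{\bbeta}_{k,\cdot})$ is correct, your order for $\|\Delta_M\|$ is right, and the Weyl step for $\lambda_{\min}(\hat{\bH}_k)$ is fine. The genuine gap is the step you defer to ``sharper accounting,'' and it is not a loose end but the entire difficulty. Plugging your own estimates into your own bound — $\lambda_{\min}(\hat{\bH}_k)\asymp s_n n_k n_{\min}\asymp s_n n^2$, $\|\Delta_M\|\asymp\gamma(\alpha_n n s_n)^{1/2}\log n$, $\|\by\|\asymp\gamma n s_n^{1/2}\log n$ — gives $\gamma^2(\alpha_n/n)^{1/2}(\log n)^2$ for the first term, while the theorem's extra term is $\gamma s_n^{1/2}(\alpha_n/n)^{1/2}(\log n)^2$: you overshoot by a factor $\gamma/s_n^{1/2}$, which in the sparse regime $s_n\asymp \log n/n$ is polynomially large in $n$. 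Worse, the specific fix you propose is based on a false premise: $\Delta_M^\top\by$ does \emph{not} touch only $O(\alpha_n)$ coordinates of $\by$. Only the row-support piece $\diag((\hat{\bZ}-\bZ)_{\cdot,k})(\one_n\bx^\top*\bA)\bZ$ has $O(\alpha_n)$ nonzero rows; the column-perturbation piece $\diag(\hat{\bZ}_{\cdot,k})(\one_n\bx^\top*\bA)\Delta_Z$ is supported on essentially every row of community $k$ that neighbors a misassigned node, so $\Delta_M^\top\by$ genuinely aggregates all $y_i$ with $\psi_i=k$. Recovering the missing $s_n^{1/2}$ requires exploiting cancellation in bilinear sums such as $\sum_{j\in\mathcal{M}}x_j\sum_i A_{i,j}y_i$, where the columns $\bA_{\cdot,j}$ are correlated and $\by$ depends on $\bx$ through $\bA$ — precisely the cross-term structure your Frobenius/Bernstein accounting never engages.

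The paper sidesteps all of this with a different decomposition that is worth internalizing: it rewrites the model in the misspecified coordinates, $\by=(\hat{\bZ}\bbeta\hat{\bZ}^\top*\bA)\bx+\tilde{\bepsilon}$ with $\tilde{\bepsilon}=(\btau*\bA)\bx+\bepsilon$ and $\btau=\bE\bbeta\bE^\top-\hat{\bZ}\bbeta\bE^\top-\bE\bbeta\hat{\bZ}^\top$ supported on the $\alpha_n$ misassigned rows and columns, so that the misspecification becomes an inflated noise and the error is $\bG\tilde{\bepsilon}$ with $\bG=\bH_k^{-1}\bM_k^\top$. Two structural features then produce the stated rate: first, $\|\bG\|=\lambda_{\min}(\bH_k)^{-1/2}$, i.e.\ one factor of $\bM_k$ cancels half of the inverse Hessian — exactly where your $\hat{\bH}_k^{-1}$ applied to $\Delta_M^\top\by$ loses the factor $s_n^{-1/2}$; second, the misspecification enters only through the \emph{spectral} norm $\|\btau*\bA\|$, which matrix Bernstein concentrates around $\|\btau*\bP\|=\Op(s_n\sqrt{n\alpha_n})$ (note the full $s_n$, inherited from the entries of $\bP$, rather than the $s_n^{1/2}$ your entrywise Bernstein bound yields), combined with sub-Gaussianity of $\bx$ conditional on $\bA$ from \Cref{assum: x}. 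Unless you restructure your first term so that the comparison is made at the level of $\bG$ (or its misspecified analogue) applied to a centered vector, rather than a full inverse Hessian applied to $\Delta_M^\top\by$, your route cannot reach the stated bound; and once you do restructure it that way, it essentially becomes the paper's proof.
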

Clearly, misspecification of the community membership leads to an extra term $\gamma s_n \sqrt{n \alpha_n} \log n$ in the numerator, depending on the network sparsity. When $\alpha_n = \cO(1)$, if the network is extremely sparse with $s_n = \cO (\frac{\log n}{n})$ or relatively dense with $s_n = \cO(1)$, \Cref{thm: mis_Z} maintains the canonical $\sqrt{n}$-consistency, up to logarithm factor. For moderate sparse network, for example $s_n = \cO(\frac{\log n}{\sqrt{n}})$, \Cref{thm: mis_Z} improves the $\sqrt{n}$-consistency to $n^{3/4}$-consistency, up to logarithm factor. The intuition behind this is as follows. When the community membership of $1$ node is misspecified, it leads to potential misspecification of $n$ regression coefficients in $\bZ^\top \bbeta \bZ * \bA$. However, a sparser network helps to reduce the actual number of misspecified regression coefficients, though it also carries weaker signals.
Before ending this subsection, in analogy to the Gauss-Markov Theorem for linear regression, we have the following theorem. 

    \begin{theorem}[Community-wise best linear unbiased estimator]\label{thm:BLUE} 
    Under \Crefrange{asm:sbm}{asm:epsilon}, further assume that the variance of $\epsilon_i$'s are homogeneous within the $k$th community, then $\hat{\bbeta}_{k, .}$ is the best linear unbiased estimator of $\bbeta_{k, .}^*$, i.e., for any linear unbiased estimator $\bar{\bbeta}_{k}$, we have $\Var( \hat{\bbeta}_{k,\cdot} ) \preceq \Var( \bar{\bbeta}_{k} )$. 
    \end{theorem}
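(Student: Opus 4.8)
The plan is to mirror the classical Gauss--Markov argument but carried out community-by-community, exploiting the non-overlapping decomposition already established for the objective $\cL(\bbeta)$. The key observation is that, conditional on $\bA$ and $\bx$, the transformed problem for the $k$th community is an ordinary linear model with response $\by * \bZ_{\cdot,k}$, design matrix $\bM_k$, and noise vector $\bepsilon * \bZ_{\cdot,k}$. By \Cref{thm:consistency} the CLSE $\hat{\bbeta}_{k,\cdot} = \bH_k^{-1}\bM_k^\top \by$ is linear in $\by$ and unbiased for $\bbeta_{k,\cdot}^*$, and under the within-community homogeneous-variance assumption the relevant noise covariance restricted to the $k$th community is $\sigma_k^2 \bI$ (after discarding the identically-zero coordinates). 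Thus the setup reduces exactly to the hypotheses of the standard Gauss--Markov theorem, and the whole statement should follow by applying that theorem inside each community.

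First I would fix $k$ and condition throughout on $\bA$ and $\bx$ so that $\bM_k$ and $\bH_k$ are deterministic; I would also invoke the full-rank event $\rank(\bM_k)=K$ (guaranteed with high probability by \Cref{thm:eig-Hk} and \eqref{eq:lam-min-EHk}) so that $\bH_k^{-1}$ exists. Next I would write an arbitrary linear unbiased estimator as $\bar{\bbeta}_k = \bC_k \by$ for some deterministic $\bC_k \in \RR^{K\times n}$, and translate the unbiasedness requirement $\EE(\bar{\bbeta}_k)=\bbeta_{k,\cdot}^*$ for all admissible $\bbeta_{k,\cdot}^*$ into the moment condition $\bC_k \EE(\by) = \bbeta_{k,\cdot}^*$, which forces $\bC_k \bM_k = \bI_K$ on the community's response coordinates. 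Writing $\bD_k = \bC_k - \bH_k^{-1}\bM_k^\top$, the constraint gives $\bD_k \bM_k = \zero$, which is precisely the orthogonality that makes the cross term vanish. Computing $\Var(\bar{\bbeta}_k) = \sigma_k^2 \bC_k \bC_k^\top$ and expanding $\bC_k = \bH_k^{-1}\bM_k^\top + \bD_k$ then yields $\Var(\bar{\bbeta}_k) = \sigma_k^2 \bH_k^{-1} + \sigma_k^2 \bD_k \bD_k^\top = \Var(\hat{\bbeta}_{k,\cdot}) + \sigma_k^2 \bD_k \bD_k^\top$, and since $\bD_k \bD_k^\top \succeq \zero$ we obtain $\Var(\hat{\bbeta}_{k,\cdot}) \preceq \Var(\bar{\bbeta}_k)$, which is the claim.

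The one genuinely delicate point, and the step I expect to require the most care, is handling the degenerate structure of the response vector: the coordinates of $\by * \bZ_{\cdot,k}$ outside the $k$th community are identically zero, the corresponding rows of $\bM_k$ are zeroed out by $\diag(\bZ_{\cdot,k})$, and the noise covariance of the \emph{full} vector $\bepsilon$ is only $\sigma_k^2$ on the within-community block rather than on all of $\RR^n$. I would address this by restricting attention to the $n_k$ active coordinates, i.e. passing to the sub-vector and submatrix indexed by $\{i : \psi_i = k\}$, so that the effective noise covariance becomes a genuine $\sigma_k^2 \bI_{n_k}$ and the reduced design retains full column rank $K$; the Gauss--Markov conclusion is then literally applicable, and the submatrix $\bH_k$ is unchanged because the zeroed rows contribute nothing to $\bM_k^\top\bM_k$. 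A secondary subtlety is making precise what ``linear unbiased'' means when $\bbeta_{k,\cdot}^*$ itself depends on $\bA$ through $\EE(\by \mid \bA)$; I would phrase unbiasedness conditionally on $\bA$ and $\bx$, matching the conditioning under which $\hat{\bbeta}_{k,\cdot}$ was shown unbiased in \Cref{thm:consistency}, so that the comparison of variance matrices is made within the same conditional model. Assembling the per-community comparisons gives the stated matrix inequality for each $k$, completing the proof.
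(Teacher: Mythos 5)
Your overall route---parameterize an arbitrary linear unbiased estimator, extract an orthogonality constraint from unbiasedness, expand the variance, and discard a positive semi-definite remainder---is exactly the paper's Gauss--Markov argument. But there is one genuine flaw: the identity $\Var(\bar{\bbeta}_k)=\sigma_k^2\mathbf{C}_k\mathbf{C}_k^\top$ in your second paragraph is false under the theorem's actual hypothesis, which asserts variance homogeneity only \emph{within the $k$th community}; the coordinates of $\bepsilon$ in other communities may have arbitrary (even mutually heterogeneous) variances, so $\Var(\bepsilon)\neq\sigma_k^2\bI_n$ in general. Your proposed repair---passing to the sub-vector indexed by $\{i:\psi_i=k\}$ so the noise covariance becomes $\sigma_k^2\bI_{n_k}$---fixes the formula but silently shrinks the competitor class: unbiasedness does \emph{not} force the columns of $\mathbf{C}_k$ outside community $k$ to vanish. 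Writing $\bar{\bbeta}_k=\hat{\bbeta}_{k,\cdot}+\tilde{\bM}\by$, unbiasedness only forces $\tilde{\bM}\bM_{k^\prime}=\zero$ for every $k^\prime\in[K]$, and since the nonzero rows of $\bM_{k^\prime}$ (those with $\psi_i=k^\prime$) span at most $K$ dimensions, there exist nonzero $\tilde{\bM}$ supported entirely on out-of-community columns satisfying all these constraints (the relevant left null space has dimension at least $n_{k^\prime}-K$). Such unbiased competitors use responses from other communities and are never compared against in your restricted argument, so as assembled you prove optimality only within the subclass of estimators linear in the within-community responses---strictly weaker than the theorem as stated.

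The patch is short and is effectively what the paper does: keep a general $\tilde{\bM}\in\RR^{K\times n}$ and exploit the support structure of $\bM_k$ rather than of the estimator. Because the $i$th row of $\bM_k$ is $\zero$ whenever $\psi_i\neq k$, you have $\bM_k^\top\Var(\bepsilon)=\sigma_k^2\bM_k^\top$, so the cross terms in $\Var(\hat{\bbeta}_{k,\cdot}+\tilde{\bM}\by)$ equal $\sigma_k^2\bH_k^{-1}\bM_k^\top\tilde{\bM}^\top$ and its transpose, which vanish by $\tilde{\bM}\bM_k=\zero$; the remainder $\tilde{\bM}\Var(\bepsilon)\tilde{\bM}^\top$ is positive semi-definite no matter what the variances are outside community $k$, yielding $\Var(\hat{\bbeta}_{k,\cdot})\preceq\Var(\bar{\bbeta}_k)$ over the full class. (The paper writes this remainder as $\sigma_k^2\tilde{\bM}\tilde{\bM}^\top$, a cosmetic overstatement of the same kind; only its positive semi-definiteness is used, so its proof is unaffected.) Your explicit conditioning on $\bA$ and $\bx$, and your remark about invoking the full-rank event so that $\bH_k^{-1}$ exists, are correct and in fact slightly more careful than the paper's implicit treatment.
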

    Herein, a linear estimator means the estimator is linear in the response $\by$.

\subsection{Minimax optimality}
To investigate the minimax optimality of the community-wise least square estimator $\hat{\bbeta}_{k, .}$, we first introduce a class of data distributions:
\begin{align*}
& \mathcal{P}_k(P_X, P_{\bA}, \sigma) = \left\{ P_{X, \bA, Y}: \bx \sim P_X, \bA \sim P_{\bA}, \bbeta^*_{k, .} \in \RR^K, \EE(\bepsilon) = \zero_n, \EE(\epsilon_i^2) \le \sigma_{\max}^2\text{ for } i\in [n]\right\}, 
\end{align*}
for $k \in [K]$, where $\sigma = (\sigma_1, \ldots, \sigma_n)^\top$ and $\sigma_{\max} = \max_{i \in [n]} \sigma_i$, for $\sigma_i = \EE(\epsilon_i^2)$, $i\in [n]$. Define the following discrepancy between any estimator $\bar{\bbeta}_{k, .}$ and $\bbeta^*_{k, .}$ 
\begin{align*}
\mathcal{E}(\bar{\bbeta}_{k, .}) = \|\bar{\bbeta}_{k, .} -\bbeta_{k, .}^* \|_{\bSigma}^2,  
\end{align*}
for any positive definite matrix $\bSigma$, where $\|\cdot\|_{\bSigma}$ denotes the Mahalanobis distance of a vector to the origin. When $\bSigma = \bI_n$, $\mathcal{E}(\bar{\bbeta}_{k, .})$ characterizes the estimation error of $\bar{\bbeta}_{k, .}$. When $\bSigma =\EE(\bH_k)$, $\mathcal{E}(\bar{\bbeta}_{k, .})$ captures the generalization performance of $\bar{\bbeta}_{k, .}$ in that $\mathcal{E}(\bar{\bbeta}_{k, .}) = \cR_k(\bar{\bbeta}_{k, .}) - \cR_k(\bbeta_{k, .}^*) $ represents the excess risk, for the risk function $\cR_k(\bbeta_{k, .}) = \EE(\|\bY * \bZ_{., k} - \bM_{k, .}\bbeta_{k, .}\|_2^2)$. The minimax expected discrepancy for the $k$th sub-problem is then defined as 
\begin{align*}
\inf_{\bar{\bbeta}_{k, .}} \sup_{\bbeta^*_{k, .} \in \mathcal{P}_k(P_X, P_{\bA}, \bsigma^2)}\EE\left\{\mathcal{E}(\bar{\bbeta}_{k, .})\right\}, 
\end{align*}
where the infimum is taken with respect to all estimators from the data. The next theorem provides an exact expression for the minimax expected discrepancy. 
\begin{theorem}[Minimax optimality]\label{thm:minimax}
The exact minimax risk can be expressed as 
\begin{align*}
\inf_{\hat{\bbeta}_{k, .}\in\RR^K} \sup_{\bbeta^*_{k, .} \in \mathcal{P}_k(P_X, P_{\bA}, \bsigma^2)}\EE\left(\mathcal{E}(\hat{\bbeta}_{k, .})\right) = \sigma_{\max}^2\tr\left(\bSigma \EE (\bH_k^{-1} ) \right) . 
\end{align*}
\end{theorem}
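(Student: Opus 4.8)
The plan is to prove the claimed identity by establishing two matching bounds: an upper bound showing that the CLSE $\hat{\bbeta}_{k,.}$ attains the value $\sigma_{\max}^2\tr(\bSigma\,\EE(\bH_k^{-1}))$, and a lower bound, via a Bayesian argument over a Gaussian sub-family, showing that no estimator can do better. Throughout I condition on the design $(\bx,\bA)$, hence on $\bM_k$, and work on the high-probability event $\{\rank(\bM_k)=K\}$ supplied by \Cref{thm:eig-Hk}, so that $\bH_k^{-1}$ (and thus $\EE(\bH_k^{-1})$) is well defined; on the complement one interprets $\bH_k^{-1}$ through the pseudoinverse as in \Cref{prop:stationary-point}.

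For the \emph{upper bound}, recall that in the $k$th subproblem the working model is $\by*\bZ_{.,k}=\bM_k\bbeta_{k,.}^*+\bepsilon*\bZ_{.,k}$, so by \Cref{prop:stationary-point} the CLSE $\hat{\bbeta}_{k,.}=\bH_k^{-1}\bM_k^\top(\by*\bZ_{.,k})$ is conditionally unbiased with conditional covariance $\bH_k^{-1}\bM_k^\top \bm{D}\bM_k\bH_k^{-1}$, where $\bm{D}=\diag(\sigma_i^2 Z_{i,k})_{i\in[n]}$. Since $\diag(\bZ_{.,k})\bM_k=\bM_k$ (the rows of $\bM_k$ outside community $k$ vanish and $\diag(\bZ_{.,k})$ is idempotent) and $\bm{D}\preceq\sigma_{\max}^2\diag(\bZ_{.,k})$, I obtain $\bM_k^\top\bm{D}\bM_k\preceq\sigma_{\max}^2\bH_k$. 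Congruence by $\bH_k^{-1}$ and trace monotonicity under the positive semidefinite order (with $\bSigma\succ 0$) then give the conditional discrepancy $\tr(\bSigma\,\bH_k^{-1}\bM_k^\top\bm{D}\bM_k\bH_k^{-1})\le\sigma_{\max}^2\tr(\bSigma\bH_k^{-1})$, with equality precisely when $\sigma_i^2\equiv\sigma_{\max}^2$. As the unbiased CLSE's risk does not depend on $\bbeta_{k,.}^*$, taking expectation over the design yields $\sup_{\mathcal{P}_k}\EE\{\mathcal{E}(\hat{\bbeta}_{k,.})\}=\sigma_{\max}^2\tr(\bSigma\,\EE(\bH_k^{-1}))$, so the minimax risk is at most this value.

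For the \emph{lower bound}, I restrict the supremum to the sub-family of $\mathcal{P}_k$ with $\bepsilon\sim\mathcal{N}(\zero_n,\sigma_{\max}^2\bI_n)$ and place the conjugate prior $\bbeta_{k,.}^*\sim\mathcal{N}(\zero_K,\tau^2\bI_K)$; the resulting Bayes risk lower bounds the minimax risk for every $\tau>0$. Conditioning on $\bM_k$, the subproblem is a Gaussian linear model whose posterior covariance is $(\sigma_{\max}^{-2}\bH_k+\tau^{-2}\bI_K)^{-1}$, the Bayes estimator is the posterior mean, and the conditional Bayes risk equals $\tr\{\bSigma(\sigma_{\max}^{-2}\bH_k+\tau^{-2}\bI_K)^{-1}\}$. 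Averaging over the design and letting $\tau\to\infty$, monotone convergence gives
\begin{align*}
\lim_{\tau\to\infty}\EE\,\tr\{\bSigma(\sigma_{\max}^{-2}\bH_k+\tau^{-2}\bI_K)^{-1}\}=\sigma_{\max}^2\,\tr(\bSigma\,\EE(\bH_k^{-1})),
\end{align*}
so the minimax risk is at least this value. Combining the two bounds proves the identity.

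The main obstacle is that this is an \emph{exact} minimax statement, not merely a rate, so every constant must be tracked: the upper bound hinges on realizing the worst-case homoscedastic noise $\sigma_i^2\equiv\sigma_{\max}^2$ within the class, and the lower bound hinges on the Gaussian-location minimaxity that pins the constant down through the $\tau\to\infty$ limit of the conjugate Bayes risk. The two remaining technical points are justifying the interchange of the limit with the design expectation (dominated/monotone convergence, using integrability of $\tr(\bSigma\bH_k^{-1})$) and handling the rank-deficient event so that $\EE(\bH_k^{-1})$ is meaningful; both are controlled by \Cref{thm:eig-Hk}. The remaining manipulations are routine trace algebra.
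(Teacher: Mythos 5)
Your proposal is correct and follows essentially the same route as the paper: the lower bound via a conjugate Gaussian prior on $\bbeta_{k,.}^*$ within the Gaussian sub-family, with the Bayes risk $\tr\{\bSigma(\sigma_{\max}^{-2}\bH_k+\tau^{-2}\bI_K)^{-1}\}$ driven to $\sigma_{\max}^2\tr(\bSigma\,\EE(\bH_k^{-1}))$ by monotone convergence (the paper's $\lambda\to 0$ in the ridge parametrization is your $\tau\to\infty$), and the upper bound by evaluating the risk of the unbiased CLSE at the worst-case homoscedastic noise $\sigma_i^2\equiv\sigma_{\max}^2$. Your explicit treatment of heteroscedasticity via $\bM_k^\top\bm{D}\bM_k\preceq\sigma_{\max}^2\bH_k$ and of the rank-deficient event are minor refinements the paper leaves implicit, not a different argument.
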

Since $\tr(\bSigma \bS^{-1})$ is strictly convex with respect to $\bS$ in the positive definite cone, Jensen's inequality yields that $\tr(\bSigma \EE(\bH_k^{-1}))\ge \tr(\bSigma (\EE (\bH_k))^{-1})$. Taking $\bSigma = \bI$, we have
\begin{align*}
\inf_{\hat{\bbeta}_{k, .}} \sup_{\bbeta^*_{k, .} \in \mathcal{P}_k(P_X, P_{\bA}, \bsigma^2)}\EE\big( \|\hat{\bbeta}_{k, .} -\bbeta_{k, .}^* \|^2\big) \ge \sum_{k^\prime=1}^K\frac{\sigma_{\max}^2}{\lambda_{k^\prime}(\EE(\bH_k))} \ge \frac{\sigma_{\max}^2}{\lambda_{\min}(\EE(\bH_k))}, 
\end{align*}
which matches up with the probabilistic upper bound in \Cref{thm:consistency}. This indicates the non-asymptotic analysis in the previous sections is sharp.

\section{Simulation}\label{sec:simu}

    \subsection{Impact of network neighborhood information}\label{subsec:simu-net}

    The first simulation study analyzes how the network information helps in the estimation. Specifically, for any $n\in\{100,200,\ldots,1000\}$ and $K\in\{2,3,4\}$, we begin with randomly and uniformly sampling the community memberships for $n$ genes, resulting in the membership matrix $\bZ \in \{0, 1\}^{n\times K}$. 
    Next, the network $\bA$ is generated according to the stochastic block model stated in \Cref{asm:sbm} with $B_{k_1, k_2}\sim\mathrm{Uniform}(0,0.5)+0.5*\ind\{k_1=k_2\}$ and $B_{k_2, k_1}=B_{k_1, k_2}$, for $1\leq k_1<k_2\leq K$.
    We subsequently generate the response $\by$ according to model \eqref{eq:model}, where the entries of both the covariate $\bx$ and the coefficient matrix $\bbeta\in\RR^{K\times K}$ are sampled independently from the standard normal distribution, and the additive noises are drawn from $\epsilon_1,\ldots,\epsilon_n\overset{\iid}{\sim}\cN(0,0.5^2)$.
    We then evaluate 4 methods that utilize the network information in different ways: (i) CLSE that utilizes the network neighborhood information appropriately, (ii) CLSE with $\bA$ replaced by $\bI_n$ which completely ignores the interactions between vertices, (iii) CLSE with $\bA$ replaced by $\one_n\one_n^{\top}$ which includes all potential interactions among vertices, and (iv) netcoh, the network cohesion estimator proposed by \citet{li2019prediction}, which minimizes the penalized loss function $
    \mathcal{L}(\balpha, \beta) = \|\by - \balpha - \beta \bx\|^2 + \lambda \balpha^\top \left [\diag(\bA \bm{1}_n) - \bA\right] \balpha$.
    
    Clearly, it equips each node with an individual intercept to adjust a linear regression model and penalizes these intercepts according to the graph Laplacian. In practice, we select the hyperparameter by cross-validation over a grid of $100$ regularization parameters from $10^{-3}$ to $10$ uniformly spaced in the $\log_{10}$-scale.

    In all our experiments of \Cref{sec:simu,sec:case-study}, we employ a variation of SCORE method \citep{Jin2015fast, ke2023score} to estimate the community memberships for each node when it is not directly available. 
    Precisely, we compute $\bU\in\RR^{n\times K}$ that contains the eigenvectors of $\bA$ corresponding to the $K$ leading singular values, subsequently normalize each row of $\bU$ to have unit $l_2$-norm, and finally apply K-means algorithm to obtain the cluster assignments $\hat{\bZ}$. 
    The number of communities is determined by identifying the elbow point in the singular value distribution of $\bA$ as in \citep{CoauthorshipJin2016, rohe2016co}. 
    A scree plot of the leading singular values of the real-life autism gene co-expression network is provided in \Cref{fig:data} (b).
    After obtaining any estimator $\hat{\bbeta}$ by community-wise least square estimation, the predicted values for $\by$ is given by $\hat{\by} = (\hat{\bZ}\hat{\bbeta} \hat{\bZ}^\top *\bA)\bx$.
    We inspect the estimation error and the prediction error, defined as
   \[\mathrm{Err}_{\mathrm{est}}= \frac{1}{K^2}\|\hat{\bQ}^\top\hat{\bbeta}\hat{\bQ}-\bbeta^*\|_{\fro}^2 \text{ and } \mathrm{Err}_{\mathrm{pred}} = \frac{1}{n}\|\hat{\by}-\by\|_2^2, \text{ where } \hat{\bQ} =  \argmin_{\bQ \in \mathbb{G}_K} \|\hat{\bZ}\bQ - \bZ\|_{\fro}^2.\]
    Herein, $\mathbb{G}_K$ denotes the set of $K\times K$ permutation matrices. Clearly, $\hat{\by}$ is invariant to the permutation among the communities, while $\hat{\bbeta}$ is not. That is why we need to search for the best permutation that minimizes the Hamming distance of the community memberships encoded in $\hat{\bZ}$ and $\bZ$, and then define the estimation error accordingly.
    
    Results in the logarithm scale are shown in \Cref{fig:simu-1}, where the trends for small values have been zoomed in while those for large values have been zoomed out.
    As we can see, both the estimation and prediction errors of CLSE for different community sizes are decreasing in the sample size $n$ and approaching zero quickly.    
    On the other hand, without properly utilizing the network information $\bA$, both the estimation and prediction errors cannot be controlled even with large sample sizes.
    In summary, this simulation study showcases the asymptotic consistency of the CLSE estimator and suggests the necessity of utilizing the network neighborhood information appropriately.

    \begin{figure}
        \centering
        \includegraphics[width=0.9\textwidth]{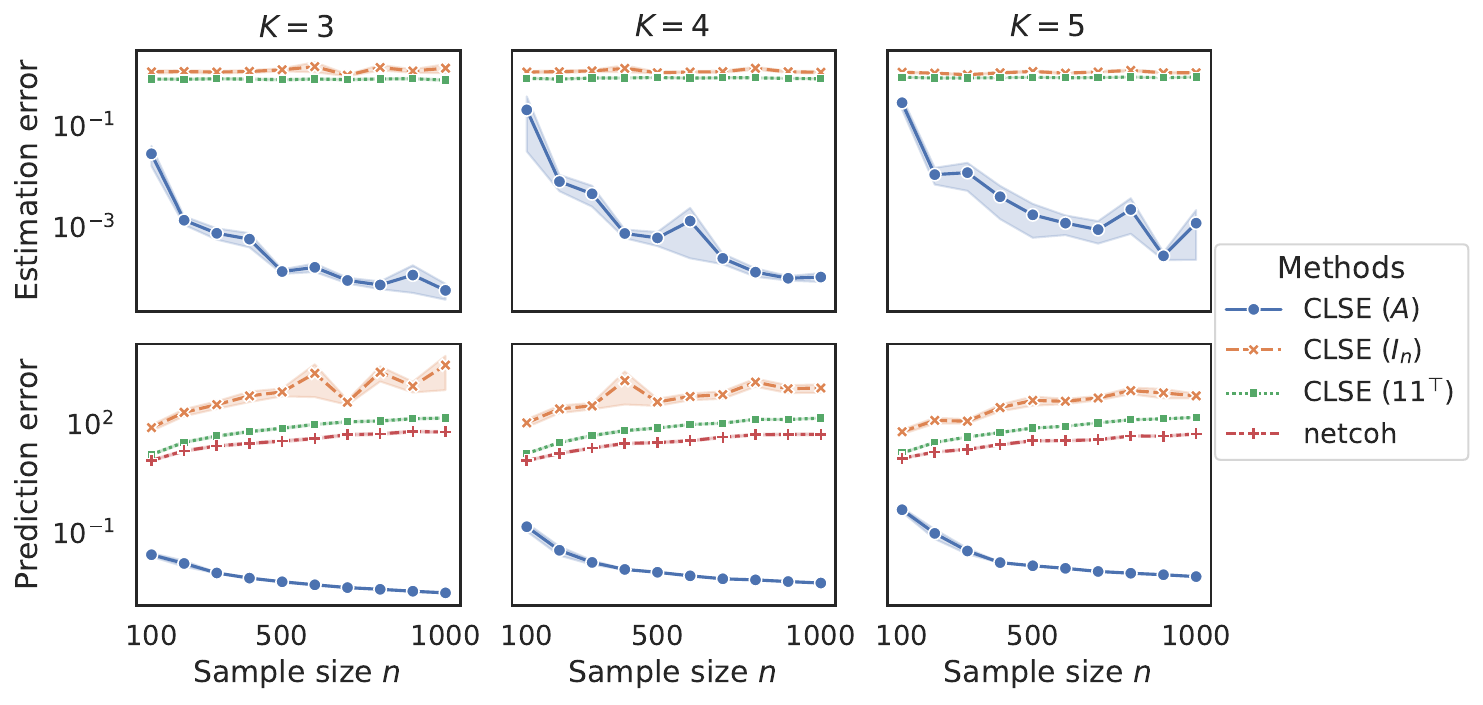}
        \caption{
        Estimation and prediction errors in the log scale of experiments in \Cref{subsec:simu-net} with varying numbers of communities.        
        The shaded regions represent the standard errors around the average values computed over 200 simulated datasets.
        For netcoh, only the prediction errors are available and presented.}
        \label{fig:simu-1}
    \end{figure}

    \subsection{Impact of community structure}\label{subsec:simu-coef}

    The second simulation study considers various regression coefficient models,  including (i) full model: $\bbeta\in\RR^{K\times K}$, (ii) row model: $\bbeta= \one_K\bbeta_0^{\top}$  for $\bbeta_0\in\RR^K$, and (iii) singleton model: $\bbeta=\beta_0 \one_K\one_K^{\top}$ for $\beta_0\in\RR$. Denote by $\hbeta^{\full} $ the solution to \eqref{eq:hbeta} corresponding to the full model. The counterparts to the row and singleton models are derived as follows. 
    Under the setting of row model, \eqref{eq:hbeta} reduces to a multiple linear regression problem:
        \begin{align*}
            \hbeta_0^{\row} = \argmin_{\bbeta_0\in\RR^K} \frac{1}{2n}\|\by - (\bZ\one_K\bbeta_0^{\top}\bZ^{\top} * \bA)\bx\|_2^2 
            = (\bZ^{\top} \diag(\bx) \bA^2 \diag(\bx) \bZ)^{-1} \bZ^{\top} \diag(\bx) \bA \by,
        \end{align*}
        which yields the row estimator $\hbeta^{\row} = \one_K({\hbeta}_0^{\row})^{\top}$.
        Under the singleton model setup, \eqref{eq:hbeta} reduces to a simple linear regression problem:
         \begin{align*}
            \hat{\beta}_0^{\sing}= \argmin_{\beta_0\in\RR} \frac{1}{2n}\|\by - (\bA\bx) \beta_0\|_2^2 = \frac{\bx^\top \bA \by}{\bx^\top \bA^2 \bx} .
        \end{align*}
        which yields the singleton estimator $\hbeta^{\sing} = \hat{\beta}_0^{\sing}\one_K\one_K^{\top}$.
        Different estimators utilize the community information at different levels.

    \begin{figure}
        \centering
        \includegraphics[width=0.9\textwidth]{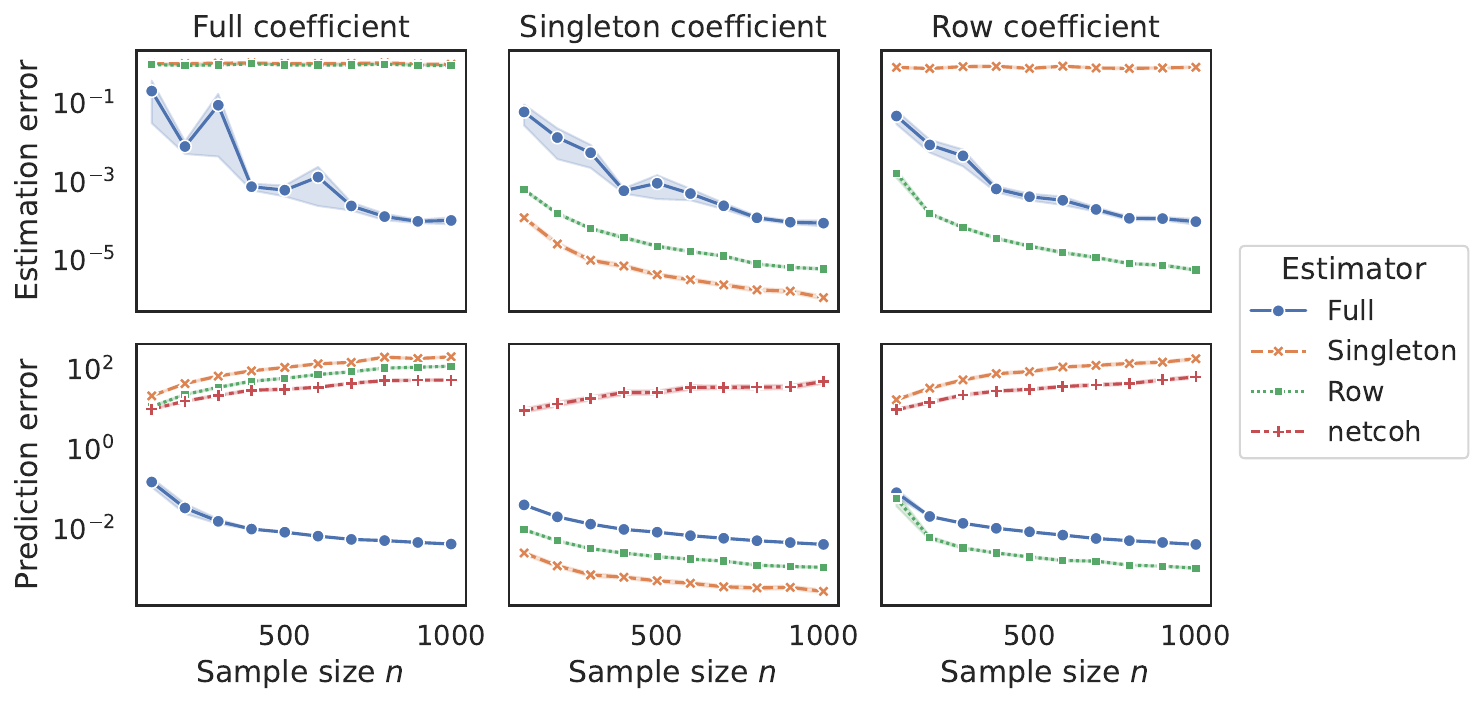}
        \caption{
        Estimation and prediction errors in the log scale of experiments in \Cref{subsec:simu-coef} with different coefficient structures.        
        The shaded regions represent the standard errors around the average values computed over 200 simulated datasets.}
        \label{fig:simu-2}
    \end{figure}

    We compare the estimation and prediction errors of the three estimators in all three generating schemes of regression coefficients with $K =4$ and varying $n \in \{100, \ldots, 1000\}$.
    The results are shown in \Cref{fig:simu-2}.
    It is expected that each estimator works best under its own well-specified setting.
    However, both the singleton and row estimators are sensitive to model mis-specifications, while the full estimator adapts well due to its generality, with errors tending to zero relatively fast. These results suggest that the full estimator is capable of utilizing the community information, and it does not suffer much when only a partial of this information is relevant. 
    Note that netcoh fails to provide satisfying results in all scnarios, including in the singleton model setup. Though it penalizes the intercept vector such that nodes sharing similar linking pattern have similar intercept adjustments, it ignores the more subtle neighborhood information, and  
 the predicted value of $y_i$ given by netcoh is constructed from the estimate of $\alpha_i, \beta$ and $x_i$ only, without considering the covariates of the nodes in the neighborhood of node $i$.

\section{Autism spectrum disorder genetic association}\label{sec:case-study}
    \subsection{Background}
    Autism spectrum disorder (ASD) primarily stems from genetic variations, either inherited or arising spontaneously in individuals. 
    This genetic diversity plays a crucial role in ASD's prevalence, with de novo exonic variations being particularly valuable for linking specific genes to the disorder \citep{fu2022rare}.
    The Transmission and De Novo Association (TADA) method \citep{he2013integrated} has been pivotal in pinpointing genes susceptible to ASD by analyzing mutation frequencies in family trios, leading to the identification of numerous ASD-associated genes, yet many remain undiscovered. 
    Despite the identification of thousands of genes with differential expression in ASD \citep{gandal2022broad}, there is minimal overlap between DE genes and the GR  genes deemed significant by TADA in the two studies.
    In this section, we attempt to utilize the gene co-expression networks to integrate these disparate data sources and disentangle the impact of the GR scores from TADA on the DE scores.

    \begin{figure}[!t]
        \centering
        \includegraphics[width=\textwidth]{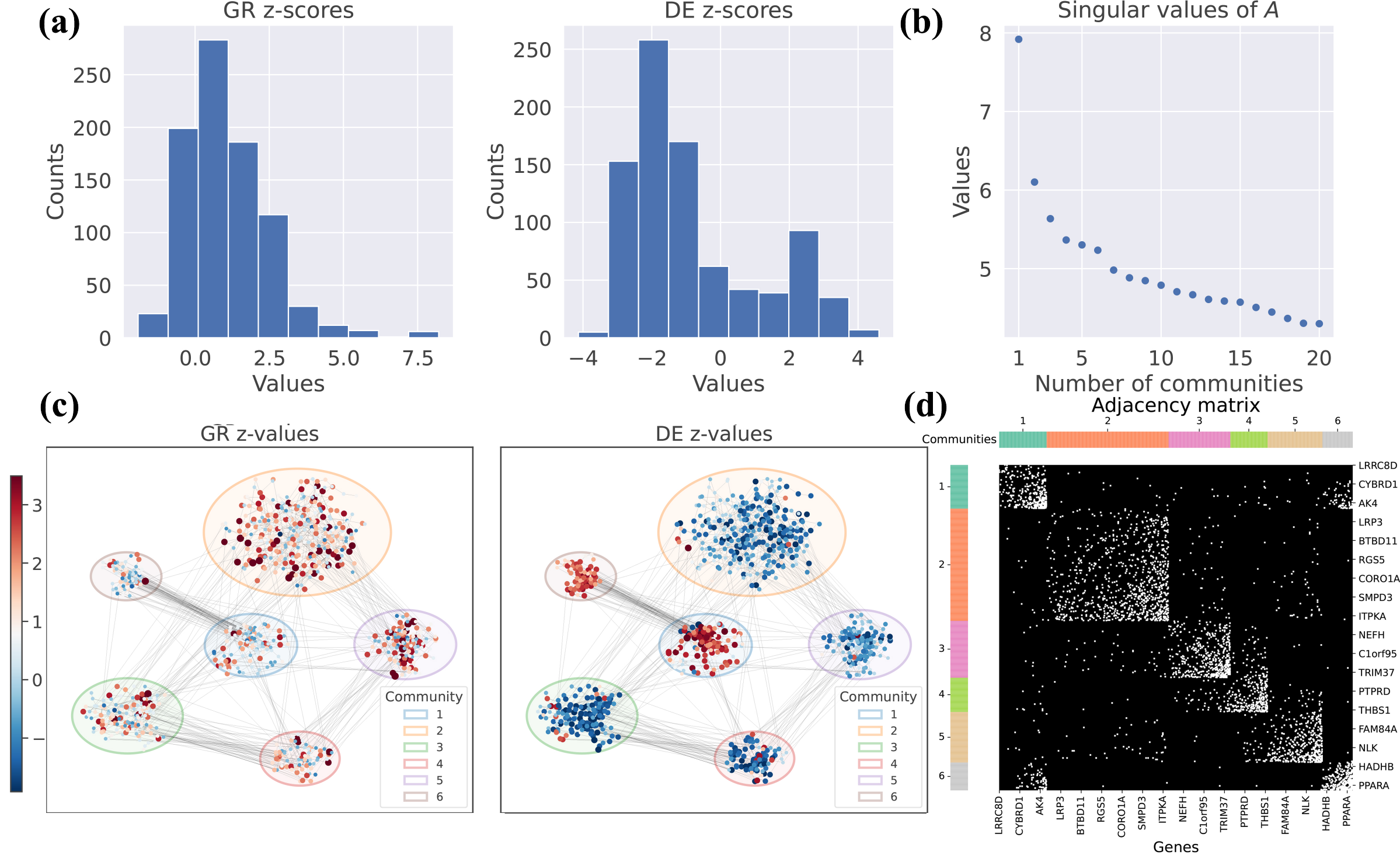}
        \caption{Visualization of ASD data and detected communities.
        (a) The histograms of GR one-sided z-scores ($\bx$) and DE two-sided z-scores ($\by$) for 864 substantial autism genes.
        (b) The scree plot of the singular values of $\bA$. 
        (c) The gene co-expression network colored by z-values and grouped by estimated communities.
        (d) The adjacency matrix $\bA$ colored by connectivity (white for 1 and black for 0) and ordered by estimated assignments.
        }
        \label{fig:data}
    \end{figure}
    
    \subsection{Data and preprocessing}

    We use two types of genomics data:
    (1) The DE and GR test statistics are originally obtained from the differential expression analysis by \citet{gandal2022broad} and the TADA analysis by \citet{fu2022rare}, respectively.
    We use the GR and DE z-values as the covariate $\bx\in\RR^n$ and the response $\by\in\RR^n$, respectively.
    The histograms of the two scores are shown in \Cref{fig:data}(a).
    (2) A whole cortex gene expression data (bulk RNA-sequencing data) on neurotypical individuals is also available from the previous study by \citet{gandal2022broad}.
    Based on gene expression data, \citet{liu2015network} use the partial neighborhood selection algorithm to obtain a sparse network $\bA\in\{0,1\}^{n\times n}$ of approximately scale-free form; though one can also use other networks, such as the protein-protein interaction networks.
    Finally, we restrict the analysis to a subset of 864 substantial autism genes, which is identified by using the partial neighborhood selection (PNS) algorithm of DAWN \citep{liu2014dawn}.

    Based on the network $\bA$, we first perform community detection to uncover the genes' community memberships. We first visualize the 20 leading singular values of $\bA$ in \Cref{fig:data}(b).
    It is clear that the singular values decay quickly and become smaller than 5 after the first $6$ leading singular values, suggesting that there shall be $6$ communities. Consequently, we select $K=6$. We have then utilized the same community detection method as in \Cref{sec:simu} to obtain the estimated cluster membership $\hat{\bZ}$. Grouping the genes into the detected communities, we observe a clear block structure of the adjacency matrix, as shown in \Cref{fig:data}(d). 

    The estimated cluster membership matrix allows us to visually compare the two sources of statistical evidence.
    As shown in \Cref{fig:data}(c), for most of the communities, the GR and DE scores share similar patterns.
    The genes in a cluster with enriched GR scores in terms of absolute values, such as communities 1 and 2, typically also have large DE scores, while the genes in communities 5 and 6 have both small GR and DE scores simultaneously.
    These suggest that the genes in such clusters may be positively regulated by their genetic variations of the same gene module.
    On the other hand, genes in communities 3 and 4 have moderate GR scores while much larger DE ones.
    Given the scarce evidence of genetic variation, the genes in these two communities are likely to be regularized by other genetic modules.
    These observations motivate us to analyze the interplay of different gene modules and understand how genetic variations affect gene expressions among different clusters by performing network-based neighborhood regression coupled with community-wise analysis.

    \begin{table}[!t]\scriptsize
        \centering\setlength{\tabcolsep}{3.5pt}
        \scalebox{0.9}{
        \begin{tabularx}{0.98\textwidth}{crrrrrr}
            \toprule
            \multirow{2}{2cm}{\textbf{Target Comm.} ($\by$)} & \multicolumn{6}{c}{\textbf{Source Comm.} ($\bx$)} \\\cmidrule(lr){2-7}
            & 1 & 2 &3 & 4 & 5 & 6\\\midrule
        \multirow{2}{*}{1} & \cellcolor{magenta!10} $0.325 \pm 0.070$ & $-0.037 \pm 0.347$ & $-1.057 \pm 1.026$ & $-0.171 \pm 1.440$ & $-0.193 \pm 0.572$ & \cellcolor{magenta!10} $0.379 \pm 0.146$\\
         & \cellcolor{magenta!10} (${*}{*}{*}$) 0.0000 &  0.9155 &  0.3029 &  0.9057 &  0.7362 & \cellcolor{magenta!10}    (${*}{*}$) 0.0092\\ \addlinespace[0.5ex] 
          \multirow{2}{*}{2} & $-0.256 \pm 0.270$ & \cellcolor{Cyan!10} $-0.197 \pm 0.014$ & \cellcolor{Cyan!10} $-0.537 \pm 0.148$ & $0.041 \pm 1.096$ & \cellcolor{Cyan!10} $-0.222 \pm 0.050$ & $-2.067 \pm 6.931$\\
         &  0.3422 & \cellcolor{Cyan!10} (${*}{*}{*}$) 0.0000 & \cellcolor{Cyan!10} (${*}{*}{*}$) 0.0003 &  0.9704 & \cellcolor{Cyan!10} (${*}{*}{*}$) 0.0000 &  0.7655\\ \addlinespace[0.5ex] 
          \multirow{2}{*}{3} & $-0.947 \pm 0.650$ & $-0.179 \pm 0.183$ & \cellcolor{Cyan!10} $-0.305 \pm 0.027$ & $-0.107 \pm 0.107$ & $0.390 \pm 16.968$ & $1.511 \pm 2.472$\\
         &  0.1448 &  0.3262 & \cellcolor{Cyan!10} (${*}{*}{*}$) 0.0000 &  0.3135 &  0.9816 &  0.5411\\ \addlinespace[0.5ex] 
          \multirow{2}{*}{4} & $0.359 \pm 0.428$ & $-0.724 \pm 0.568$ & \cellcolor{Cyan!10} $-0.224 \pm 0.098$ & \cellcolor{Cyan!10} $-0.346 \pm 0.043$ & \cellcolor{Cyan!10} $-0.274 \pm 0.098$ & \cellcolor{Cyan!10} $-1.925 \pm 0.362$\\
         &  0.4021 &  0.2023 & \cellcolor{Cyan!10}       (${*}$) 0.0221 & \cellcolor{Cyan!10} (${*}{*}{*}$) 0.0000 & \cellcolor{Cyan!10}    (${*}{*}$) 0.0051 & \cellcolor{Cyan!10} (${*}{*}{*}$) 0.0000\\ \addlinespace[0.5ex] 
          \multirow{2}{*}{5} & $0.054 \pm 0.494$ & \cellcolor{Cyan!10} $-0.227 \pm 0.046$ & $-0.500 \pm 2.795$ & \cellcolor{Cyan!10} $-0.239 \pm 0.093$ & \cellcolor{Cyan!10} $-0.154 \pm 0.014$ & $-0.520 \pm 0.917$\\
         &  0.9123 & \cellcolor{Cyan!10} (${*}{*}{*}$) 0.0000 &  0.8581 & \cellcolor{Cyan!10}       (${*}$) 0.0101 & \cellcolor{Cyan!10} (${*}{*}{*}$) 0.0000 &  0.5707\\ \addlinespace[0.5ex] 
          \multirow{2}{*}{6} & $0.407 \pm 0.265$ & $0.356 \pm 0.239$ & \cellcolor{magenta!10} $2.164 \pm 0.134$ & $1.158 \pm 4.500$ & $-1.277 \pm 0.867$ & \cellcolor{magenta!10} $0.490 \pm 0.068$\\
         &  0.1239 &  0.1361 & \cellcolor{magenta!10} (${*}{*}{*}$) 0.0000 &  0.7969 &  0.1406 & \cellcolor{magenta!10} (${*}{*}{*}$) 0.0000\\ \addlinespace[0.5ex] 
             \bottomrule
        \end{tabularx}
        }
        \caption{Estimation and inference results of  $\hbeta$. The $k$th row of the table corresponds to $\hbeta_{k,\cdot}$.
        Within each cell, the point estimate and the estimated standard deviation are given on top of the cell, while the significance level and the corresponding p-value are given at the bottom.
        For the significance levels, (${*}{*}{*}$), (${*}{*}$), and (${*}$), indicate that the p-value locates in $[0,0.001]$, $(0.001,0.01]$, and  $(0.01,0.05]$, respectively.
        The significant positive and negative coefficients are highlighted in magenta and cyan, respectively.
        }
        \label{tab:test}
    \end{table}

    \subsection{Neighborhood regression on ASD genetic association}

    Based on the evidence $(\bx,\by)$ and network $\bA$, along with the estimated cluster membership $\hat{\bZ}$, we compute the estimated values of the CLSE and that given by netcoh. We find that the in-sample prediction error of CLSE is $0.5781$, which is substantially smaller than that of netcoh (3.0719) in terms of their respective objective functions, showing that CLSE delivers more trustworthy result than netcoh. This also reflects that the genes actually co-express with each other, which can be efficiently characterized by the network-based neighborhood regression model, while netcoh fails to leverage the GR scores in the neighborhood to the DE score of any particular gene.
    
    We next focus on CLSE only and perform individual hypothesis testing on $H_0:\beta_{k_1,k_2}=0$ versus $H_1:\beta_{k_1,k_2}\neq 0$ for $k_1,k_2\in[K]$. 
    The heteroskedasticity-consistent (HC) standard errors \citep{mackinnon1985some} of the estimators are used to compute p-values. 
    All of the results are summarized in \Cref{tab:test}.
    From \Cref{tab:test}, we observe that the GR scores have significant effects on the DE scores within the same community, which is expected because if a gene module is associated with genetic variations of ASD, then the gene expression levels of this module will also be affected.
    Though the intra-cluster interaction of the two modalities is important, we are more interested in the inter-cluster interaction, as it will shed new light on how one gene module regulates the others. 
    By exploring the regulatory mechanisms and potential influences between distinct groups of genes, we can gain insights into the broader network dynamics at play.
    For this purpose, we further restrict our analysis to the two most significant inter-cluster coefficients $\beta_{4,6}$ and $\beta_{6,3}$, corresponding to directional effects from community $6$ to community 4 and from community $3$ to community $6$, respectively. 
    
    By matching the gene modules identified in \citet[Fig. 7]{gandal2022broad}, clusters 3 and 4 contain genes that are mostly enriched in three cell types: excitatory neurons, inhibitory neurons, and oligodendrocytes, and cluster 6 contains genes that are mostly enriched in astrocytes, which have a potential impact on neuronal function and connectivity and are critical in the pathology of ASD \citep{gandal2022broad}.

    \begin{figure}[!t]
        \centering
        \includegraphics[width= 0.9\textwidth]{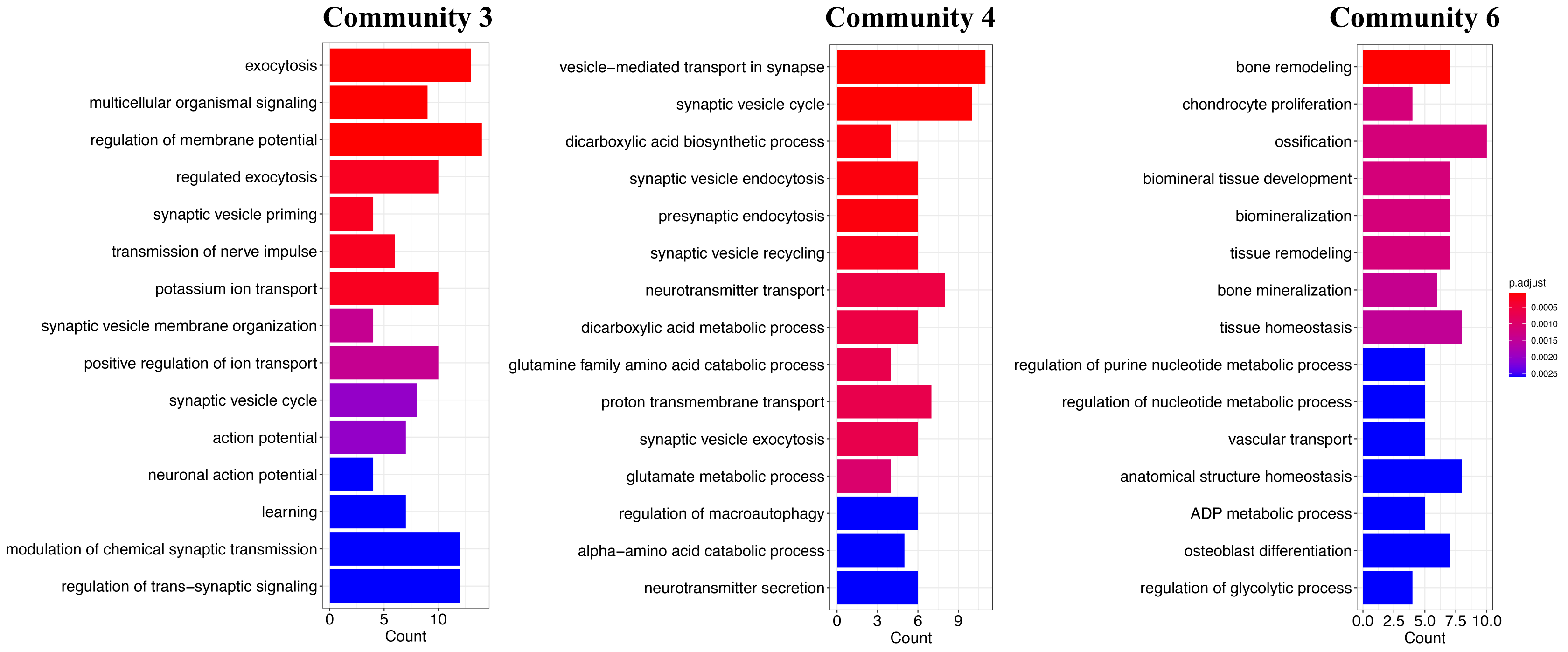}
        \caption{Top gene ontology terms for genes in community 3, 4, and 6.}
        \label{fig:GO}
    \end{figure}
    
    From the gene ontology (GO) analysis result in \Cref{fig:GO}, community 6 is enriched for GO terms related to bone remodeling, ossification, biomineralization, and regulation of nucleotide metabolism, while community 4 is enriched for GO terms involved in synaptic vesicle cycling, neurotransmitter transport, and proton transport processes at the synapse.
    The estimated coefficient in \Cref{tab:test} suggests a negative impact of community 6 on community 4, which could potentially be explained by the fact that excessive bone remodeling and mineralization processes (community 6) may disrupt normal synaptic functions (community 4) by altering the ionic balance or metabolic processes required for neurotransmission for ASD.

    In addition, community 3, which is enriched for GO terms related to exocytosis, synaptic vesicle priming, nerve impulse transmission, and ion transport regulation, has a positive impact on community 6. This positive impact could be due to the fact that regulated exocytosis and ion transport processes (community 3) may facilitate the release of factors or signaling molecules that promote bone remodeling, ossification, and biomineralization (community 6).
    The interpretation is sorely based on the GO term enrichment of biological processes.
    The specific molecular mechanisms underlying the observed impacts would require further experimental validation and investigation.

\subsection{ A novel adjusted coefficient of determination}
Since our primary goal is to understand the module-wise interactions, we employ $\beta_{k_1, k_2}$ to capture the average causal effect from GR scores among genes in one community to the DE scores of those in another. Biologically, genes within functional modules or pathways often behave similarly and exhibit coordinated expression and regulation. Admittedly, not all genes within a module behave identically, while the assumption reasonably leverages genetic heterogeneity and allows for meaningful and broad biological insights.

To verify the validity of the assumption in a data-driven way, analogous to linear regression, we proposed the following adjusted coefficient of determination $R^{2, net}_{adj}$ for network-based neighborhood regression
\begin{align*}
R^{2, net}_{adj} = 1 -  \frac{n - 1}{n - K^2} \cdot \frac{\sum_{k=1}^K \|\by^{(k)} - \hat{\by}^{(k)}\|^2}{\|\by - \hat{\by}^{(ols)})\|^2},
\end{align*}
where $\by^{(k)}, \hat{\by}^{(k)}$ and $\hat{\by}^{(k, ols)}$ are the actual response, the predicted response by the proposed method, and predicted response by ordinary least square in the $k$-th community, respectively. Note that since the proposed network-based neighborhood regression does not contain an intercept, we chose the ordinary least square as the baseline for fair comparison. The value of $R^{2, net}_{adj}$ for the real data being studied is greater than $0.5867$. This implies that when the block structure of the coefficient matrix is imposed, over $58.67$\% of the uncertainty of the actual response $\by$ can be explained by the neighborhood regression model, demonstrating the rationale behind the block-wise effects.

\section{Discussion}

This paper incorporates network-based neighborhood information to bridge the predictor and response in the regression setup. Potential extensions of the current framework include allowing multivariate predictors and multivariate responses, extending the neighborhood regression to generalized neighborhood regression for binary or counting responses as in generalized linear models \citep{du2023simultaneous}, and considering more general network structures, such as weighted, directed, multi-layer \citep{lei2023bias}, and hypergraph networks \citep{zhen2023community}. In the supplement, we suggest a potential extension for multiple covariates, while its statistical property remains unclear. Moreover, it will be interesting to slightly relax the exact stochastic block model for the network data and the block structure for the coefficient matrix to allow more heterogeneity and flexibility among the entities, such as using the latent space model for network modeling. Besides, the proposed neighborhood regression framework is also closely related to the sum aggregator of graph neural network (GNN) \citep{xu2018powerful}, which is provably the most expressive among a number of classes of GNNs and is as powerful as the Weisfeiler-Lehman graph isomorphism test.
Exploring connections between the proposed method and other aggregation operators in GNN with heterogeneous structures presents a promising avenue for future research. 


\section*{Supplementary material}
The supplementary material includes all technical proofs, necessary lemmas, and Python implementations of the paper's numerical experiments.

\section*{Acknowledgement}
We thank the associate editor, three anonymous referees, and the reproducibility reviewer, whose constructive comments and suggestions have led to significant improvements of the article. 
We would also like to express our great gratitude to Kathryn Roeder, Bernie Devlin, Jinjin Tian, and Maya Shen 
for enlightening discussion and for sharing the processed data.
This paper was initiated when the authors met at the 2023 {\it IMS Young Mathematical Scientists Forum — Statistics and Data Science}, hosted by the National University of Singapore (NUS). We truly appreciate NUS for the warm invitation.

The authors report there is no competing interest to declare.

{
\spacingset{1.1}\small
\putbib[references]}
\end{bibunit}

\clearpage
\appendix
\begin{bibunit}[apalike]


\begin{center}
\bf\Large{Supplementary material for\\
\titletext}
\end{center}

\bigskip

This serves as an appendix to the paper ``\titleRLB.''
The organization for the appendix is as below:
\begin{itemize}[leftmargin=5mm]

    \item In \Cref{app:intercept}, we extend the neighborhood regression model to incorporate the intercept term.


    \item In \Cref{app:multiple}, we extend the neighborhood regression model to incorporate multiple covariates.
    
    \item In \Cref{sec:proof}, we present the technical proofs of all the theoretical results.
    
    \item In \Cref{app:lemmas}, we provide supporting lemmas used in \Cref{sec:proof}.
\end{itemize}    
\bigskip

\section{Extension to include intercepts}\label{app:intercept}



Recall that in the multiple linear regression model, $y = \beta_0 + \bx^\top \bbeta + \epsilon$ with $\EE(\epsilon)=0$, we can centralize the data to reduce the model to the one without an intercept. The rationality behind this is as follows. In population level, we have
\begin{align*}
    y - \EE(y) = \beta_0 + \bx^\top \bbeta + \epsilon - (\beta_0 + \EE(\bx)^\top \beta) = \left(\bx - \EE(\bx)\right)^\top \bbeta + \epsilon.
\end{align*}
Therefore, regressing $y- \EE(y)$ on $\bx - \EE(\bx)$ results in a zero intercept. In sample level, let $(\bx_i, y_i)_{i=1}^n$ be the sample points, the sample mean $(\bar{\bx} = n^{-1}\sum_{i=1}^n\bx_i, \bar{y} = n^{-1}\sum_
{i=1}^n y_i)$ always satisfies the estimated regression function given by least square estimation. This is because, 
\begin{align*}
\frac{\partial\sum_{i=1}^n(y_i - \beta_0 - \bx_i^\top\bbeta)^2/(2n)}{\partial \beta_0} = -\frac{1}{n}\sum_{i=1}^n(y_i - \beta_0 - \bx_i^\top\bbeta) = 0
\end{align*}
implies the estimator satisfies $\hat{\beta}_0 = \bar{y} - \bar{\bx}^\top \hat{\bbeta}$. Hence, once the data have zero sample mean, we do not need to fit the intercept. 

In the proposed network-based neighborhood regression model, since the data $y_i$ only corresponds to the regression function given by $\bbeta_{k, .}$ if $i$ belongs to the $k$th community, we extend the model to include $K$ intercepts, one for each community-wise regression function. Let $\beta_0^{(k)}$ be the intercept accompanied with $\bbeta_{k, .}$.  For any node $i$ in the $k$th community, we extend the neighborhood regression model as 
\begin{align*}
y_i = \beta_0^{(k)} + \sum_{j \in N_i}\beta_{k, \psi_j} x_j + \epsilon_i = \beta_0^{(k)} + \sum_{j=1}^n \beta_{k, \psi_j} A_{i, j} x_j + \epsilon_i = \beta_0^{(k)} + \sum_{k^\prime = 1}^K \beta_{k, k^\prime} \sum_{\psi_j = k^\prime} A_{i, j} x_j + \epsilon_i. 
\end{align*}
The partial derivative of the least square objective function with respective to $\beta_0^{(k)}$ reads 
\begin{align*}
& \quad \frac{\partial \sum_{\psi_i = k}(y_i - \beta_0^{(k)} - \sum_{k^\prime = 1}^K \beta_{k, k^\prime} \sum_{\psi_j = k^\prime} A_{i, j} x_j)^2/(2n_k)}{\partial \beta_0^{(k)}}\\
& = -\frac{1}{n_k}\sum_{\psi_i = k} \left(y_i - \beta_0^{(k)} - \sum_{k^\prime = 1}^K \beta_{k, k^\prime} \sum_{\psi_j = k^\prime} A_{i, j} x_j\right). 
\end{align*}
Setting the above partial derivative to zero yields the estimator satisfies 
\begin{align*}
\hat{\beta}_0^{(k)} = \frac{1}{n_k} \sum_{\psi_i = k} y_i - \frac{1}{n_k}\sum_{k^\prime = 1}^K \hat{\beta}_{k, k^\prime} \sum_{\psi_i = k} \sum_{\psi_j = k^\prime} A_{i, j} x_j. 
\end{align*}
Therefore, if we centralize the data in such a way that 
\begin{align*}
\tilde{y}_i = y_i -  \frac{1}{n_k} \sum_{\psi_i = k} y_i, \text{ and } \tilde{x}_j = x_j - \frac{\sum_{\psi_j = k^\prime} (\sum_{\psi_i=k} A_{i, j}) x_j}{\sum_{\psi_j = k^\prime} (\sum_{\psi_i = k} A_{i, j})}, 
\end{align*}
for node $j$ belongs to the $k^\prime$ community, $k^\prime \in [K]$, the corresponding estimated intercept become zero. Herein, $\bar{y}^{(k)} = \sum_{\psi_i = k}y_i/n_k$ is the average for the responds in the $k$th community, and 
\begin{align*}
\mu_{k, k^\prime} =  \frac{\sum_{\psi_j = k^\prime} (\sum_{\psi_i=k} A_{i, j}) x_j}{\sum_{\psi_j = k^\prime} (\sum_{\psi_i = k} A_{i, j})} =  \sum_{\psi_j = k^\prime}\frac{ \sum_{\psi_i=k} A_{i, j}}{\sum_{\psi_j^\prime = k^\prime} \sum_{\psi_i = k} A_{i, j^\prime}}x_j
\end{align*}
is the weighted average for the covariates in the $k^\prime$ community, and $x_j$ is weighted by the proportion of connections between communities $k^\prime$ and $k$ made by node $j$. 

This is also reflected in the population level. Suppose that given $\bA$, the conditional mean of $x_j$'s are the same for those $j$'s within the same community. Thus, we have 
\begin{align*}
&\EE(\mu_{k, k^\prime} \vert \bA) =  \sum_{\psi_j = k^\prime}\frac{ \sum_{\psi_i=k} A_{i, j}}{\sum_{\psi_j^\prime = k^\prime} \sum_{\psi_i = k} A_{i, j^\prime}} \EE(x_j\vert\bA) = \EE(x_j\vert\bA), 
\end{align*}
for any node $j$ inside the $k^\prime$th community. 
In addition, 
\begin{align*}
\EE(y_i | \bA) = \beta_0^{(k)} + \sum_{k^\prime = 1}^K \beta_{k, k^\prime} \sum_{\psi_j = k^\prime} A_{i, j} \EE(x_j|\bA). 
\end{align*}
Therefore, 
\begin{align*}
y_i - \EE(y_i | \bA) &= \sum_{k^\prime = 1}^K \beta_{k, k^\prime} \sum_{\psi_j = k^\prime} A_{i, j} \left(x_j - \EE(x_j|\bA)\right) + \epsilon_i \bx \\
& =  \sum_{k^\prime = 1}^K \beta_{k, k^\prime} \sum_{\psi_j = k^\prime} A_{i, j} \left(x_j - \EE(\mu_{k, k^\prime} \vert \bA)\right) + \epsilon_i, 
\end{align*}
for node $i$ locates in the $k$th community.

\section{Extension to neighborhood regression with multiple covariates}\label{app:multiple}
In the case that the covariates for the nodes are multivariate, we use $p$ to denote the number of covariates and $\bX \in \RR^{n\times p}$ to denote the covariate data such that $X_{i, l}$ is the $l$th covariate of node $i$. Following the same spirit of the network-based neighborhood regression model in \Cref{sec:NNR}, we propose to model the response $y_i$ for node $i$ as 
\begin{align}
\label{equ: mutl_varaita_scalar_form}
y_i = \sum_{j\in N_i} \sum_{l=1}^p \tilde{\beta}_{i, j, l}X_{j, l} + \epsilon_i = \sum_{j=1}^n \sum_{l=1}^p A_{i, j}\tilde{\beta}_{i, j, l}X_{j, l} + \epsilon_i
\end{align}
for $i\in [n]$, where $\tilde{\bbeta} = (\tilde{\beta}_{i, j, l}) \in \RR^{n\times n \times p}$ is the regression coefficient tensor while $\bA$ and $y_i$ is the network data as defined before. In vector format, we can rewrite \Cref{equ: mutl_varaita_scalar_form} as 
\begin{align*}
 \by= \sum_{l=1}^p (\bA * \tilde{\bbeta}_{\cdot, \cdot, l}) \bX_{., l} + \bepsilon = \cM_1 \left(\bA \circ \bm{1}_p * \tilde{\bbeta}\right) \Vec(\bX)+ \bepsilon, 
\end{align*}
where $\tilde{\bbeta}_{\cdot, \cdot, l}$ is the $l$th frontal slide of $\tilde{\bbeta}$, $\circ$ is the outer product such that the $(i, j, l)$th entry of $\bA \circ \bm{1}_p$ is $A_{i, j} (\bm{1}_p)_l = A_{i, j}$, $\cM_1(\cdot)$ is the mode-1 matricization operator that stacks the mode-1 fibers of the input tensor as the columns of the output matrix, and $\vec(\cdot)$ is the vectorization operator.

To incorporate the community structure into the coefficient tensor, we assume 
\begin{align*}
\tilde{\bbeta} = \bbeta \times_1 \bZ \times_2 \bZ,
\end{align*}
for a core tensor $\bbeta \in \RR^{K \times K \times p}$. Herein, the bilinear product means 
\begin{align*}
\tilde{\bbeta}_{i, j, l} = \sum_{k_1, k_2} \beta_{k_1, k_2, l} \bZ_{i, k_1} \bZ_{j, k_2} = \beta_{\psi_i, \psi_j, l}.
\end{align*}
More general definitions for multi-linear products can refer to \citet{kolda2009tensor}. Therefore, the full model in terms of the $\bbeta$ is 
\begin{align*}
\by = \cM_1 \left(\bA \circ \bm{1}_p * \bbeta \times_1 \bZ \times_2 \bZ\right) \Vec(\bX)+ \bepsilon. 
\end{align*}

The objective function for least square estimation is 
\begin{align*}
\min_{\bbeta \in \RR^{K \times K \times p}}\|\by - \cM_1 \left(\bA \circ \bm{1}_p * \bbeta \times_1 \bZ \times_2 \bZ\right) \Vec(\bX)\|^2, 
\end{align*}
which can be separated into the following $K$ non-overlapping optimization problems
\begin{align*}
\min_{\bbeta_{k, ., .}\in\RR^{K\times p}}\left\|\by * \diag(\bZ_{., k}) - \diag(\bZ{., k})\left\{ \bm{1}_n \vec(\bX)^\top * \bm{1}_p^\top \otimes \bA \right\}(\bI_p \otimes \bZ) \vec(\bbeta_{k, ., .})\right\|^2,
\end{align*}
for $k\in[K]$.This is because 
\begin{align*}
\sum_{l=1}^p (\bA * \tilde{\bbeta}_{\cdot, \cdot, l}) \bX_{., l} &= \sum_{l=1}^p  \sum_{k=1}^K\diag(\bZ_{., k}) (\bm{1}_n \bX_{., l}^\top * \bA) \bZ \bbeta_{k, ., l}\\
& = \sum_{k=1}^K \diag(\bZ_{., k})\left\{[\bm{1}_n \bX_{., 1}^\top \ldots, \bm{1}_n \bX_{., p}^\top] * (\bm{1}_p^\top \otimes \bA)\right\}(\bI_p \otimes \bZ) \vec(\bbeta_{k, ., .})\\
& = \sum_{k=1}^K \diag(\bZ{., k})\left\{ \bm{1}_n \vec(\bX)^\top * \bm{1}_p^\top \otimes \bA \right\}(\bI_p \otimes \bZ) \vec(\bbeta_{k, ., .}), 
\end{align*}
which allows us to estimate the horizontal slices of the tensor $\bbeta$ individually.

\section{Proof of theoretical results}\label{sec:proof}

\subsection{Proof of \Cref{prop:stationary-point}}

\begin{proof}[Proof of \Cref{prop:stationary-point}]
    Note that
    \begin{align}
        L(\bbeta) = \|\by\|_2^2/(2n)  - \langle \by, (\bm{Z\bbeta Z}^\top * \bA)\bx\rangle/n +  \|(\bm{Z\bbeta Z}^\top *\bA)\bx \|_2^2/(2n). \label{eq:loss-decomp}
    \end{align}
    We next analyze the derivative for the last two terms on the right-hand side of the above display.
    For the second term in \eqref{eq:loss-decomp}, note that
    \begin{align*}
        \langle \by, (\bm{Z\bbeta Z}^\top * \bA)\bx\rangle &= \langle \by\bx^{\top}, \bm{Z\bbeta Z}^\top * \bA \rangle \\
        &= \langle \by\bx^{\top} * \bA, \bm{Z\bbeta Z}^\top  \rangle\\
        &= \langle \bZ^\top(\by\bx^{\top} * \bA) \bZ, \bbeta  \rangle.
    \end{align*}
    It then follows that
    \begin{align}
        \frac{\partial}{\partial\bbeta}[-\langle \by, (\bm{Z\bbeta Z}^\top * \bA)\bx\rangle ] &= -\bZ^\top(\by\bx^{\top} * \bA) \bZ = -
        \begin{bmatrix}
            \by^{\top}\bM_1\\
            \vdots\\
            \by^{\top}\bM_K   
        \end{bmatrix}. \label{eq:deri-3}
    \end{align}

    For the last term in \eqref{eq:loss-decomp}, we can rewrite it as 
    \begin{align*}
        \|(\bm{Z\bbeta Z}^\top*\bA)\bx \|_2^2
        & =\left\langle(\bZ \bbeta \bZ^{\top} * \bA)^{\top} (\bZ \bbeta \bZ^{\top} * \bA),\ \bx \bx^{\top}\right\rangle \\
        & =\left\langle [(\bZ \bbeta \bZ^{\top} * \bA)^{\top} \odot (\bZ \bbeta \bZ^{\top} * \bA)^{\top}]\one_n,\ \bx \otimes \bx\right\rangle \\
        & =\left\langle\left\{\left[(\bZ\bbeta^\top\bZ^{\top}) \odot (\bZ \bbeta^\top \bZ^{\top})\right] * (\bA \odot \bA) \right\} \one_n, \  \bx \otimes \bx\right\rangle \\
        & =\left\langle\left[(\bZ \bbeta^\top \bZ^{\top}) \odot(\bZ \bbeta^\top \bZ^{\top})\right] *(\bA \odot \bA),\ (\bx \otimes \bx) \one_n^{\top}\right\rangle \\
        & =\left\langle (\bZ \bbeta^\top \bZ^{\top}) \odot (\bZ \bbeta^\top \bZ^{\top}), (\bx \otimes \bx) \one_n^{\top} * (\bA \odot \bA) \right\rangle \\
        & =\left\langle[(\bZ\bbeta^\top) \odot (\bZ\bbeta^\top)] \bZ^{\top}, (\bx\one_n^\top *\bA) \odot (\bx\one_n^\top *\bA) \right\rangle \\
        & =\left\langle(\bZ \otimes \bZ)(\bbeta^\top \odot \bbeta^\top) \bZ^{\top}, (\bx\one_n^\top *\bA) \odot (\bx\one_n^\top *\bA)      \right\rangle \\
        & =\langle \bbeta^\top \odot \bbeta^\top, (\bZ^{\top} \otimes \bZ^{\top})\left[(\bx\one_n^\top *\bA) \odot (\bx\one_n^\top *\bA)\right] \bZ\rangle,
    \end{align*}
    where in the second equality we use the fact that $\vec(\bQ_1 \diag(\bw)\bQ_2^\top) = (\bQ_2 \odot \bQ_1)\bw$, in the third equality we use the identity $(\bQ_1 * \bQ_2)\odot (\bQ_3 * \bQ_4) = (\bQ_1 \odot \bQ_3) * (\bQ_2 \odot \bQ_4)$, the third-to-last equality follows from the fact that $\bZ$ is a community membership matrix, and in the second-to-last equality the identity $(\bQ_1\bQ_2)\odot(\bQ_3\bQ_4)=(\bQ_1\otimes\bQ_3)(\bQ_2\odot\bQ_4)$. Clearly,  $\|(\bm{Z\bbeta Z}^\top*\bA)\bx \|_2^2$ is now decomposed into $K$ independent quadratic forms in terms of the rows of $\bbeta$. Thus, 
    \begin{align}
        \frac{\partial}{\partial\bbeta}\|(\bm{Z\bbeta Z}^\top*\bA)\bx \|_2^2 
        &=  2\left[\bH_1 \bbeta_{1,\cdot},\ \bH_2 \bbeta_{2, \cdot},\ \ldots,\ \bH_K \bbeta_{K,\cdot}\right]^\top , \label{eq:deri-2}
    \end{align}
    where
    \begin{align*}
        \bH_k 
        &= {\vec}^{-1} \left\{ (\bZ^{\top} \otimes \bZ^{\top})\left[(\bx\one_n^\top *\bA) \odot (\bx\one_n^\top *\bA)\right] \bZ_{., k}\right\} \\
        &= \bZ^\top {\vec}^{-1} \left\{ \left[(\bx\one_n^\top *\bA) \odot (\bx\one_n^\top *\bA)\right] \bZ_{., k}\right\} \bZ \\
        & =\bZ^{\top}\left(\bx\one_n^{\top} * \bA\right) \diag\left(\bZ_{\cdot, k}\right)\left(\bx\one_n^{\top} * \bA\right)^{\top} \bZ\\
        &= \bM_k^{\top}\bM_k 
    \end{align*}
    is a symmetric matrix for $k\in [K]$. Combining \eqref{eq:loss-decomp},\eqref{eq:deri-3} and \eqref{eq:deri-2} yields that
    \begin{align*}
        \frac{\partial L}{\partial \bbeta} &= \frac{1}{n} \left[  (\bI_{K} \odot \bbeta^\top)^\top \bH^\top - \bZ^\top(\by\bx^{\top} * \bA) \bZ \right],
    \end{align*}
    where $\bH = [\bH_1,\ldots,\bH_K]$.
    Setting the above to zero finishes the proof.
\end{proof}

\subsection{Proof of \Cref{thm:eig-Hk}}
\begin{proof}[Proof of \Cref{thm:eig-Hk}]
    Recall the decomposition of $\bH_k$ in \eqref{equ:I_1+I_2+I_3+I_4}.
    Because of \eqref{equ:S1+S2}, we further have $\EE(\bH_k) = \EE(\bS_2) +\bI_4$.
    This implies that
    \begin{align}
        \|\bH_k - \mathbb{E}(\bH_k)\| & \le \|\bI_1 - \mathbb{E}(\bI_1)\| + 2 \|\bI_2\| \le \|\bS_1\| + \|\bS_2 - \mathbb{E}(\bS_2)\| + 2 \|\bI_2\|.\label{eq:Nk-bound}
    \end{align}

    In particular, $\bS_1$ is symmetric with respective to 
    $(\bA - \mathbb{E}(\bA) )_{i, j}$ and $ \left(\bA - \mathbb{E}(\bA)\right)_{i^\prime, j}$, which allows us to bound such a matrix quadratic form by decoupling. 
    
    To proceed, let $\tilde{\bA}$ be an independent copy of $\bA$. Similarly, we have $(\tilde{\bA} - \mathbb{E}(\bA))_{i^\prime, j} = 0$ if $i^\prime = j$.
    We define the following
    \begin{align}
    & \tilde{\bI}_1 = \bZ^\top \left[ \bx \bx^\top * (\bA-\EE(\bA)) \diag(\bZ_{., k}) (\tilde{\bA}-\EE(\bA))\right]\bZ, \label{equ: I_1_tilde}\\
    & \tilde{\bS}_1 = \sum_{k_1=1}^K \sum_{k_2 =1}^K  \sum_{\psi_i = k_1} \sum_{\psi_{i^\prime} = k_2} \sum_{\psi_j = k} x_i x_{i^\prime} \ind\left(i\ne i^\prime \right)(\bA - \EE(\bA) )_{i, j} (\tilde{\bA} - \EE(\bA))_{i^\prime, j} \be_{k_1} \be_{k_2}^\top, \label{equ: 
    S_1_tilde}\\
    & \tilde{\bS}_2 = \sum_{k_1=1}^K  \sum_{\psi_i = k_1} \sum_{\psi_j = k} x_i^2  (\bA - \EE(\bA))_{i, j}  (\tilde{\bA} - \EE(\bA))_{i, j}    \be_{k_1} \be_{k_1}^\top \label{equ: S_2_tilde}. 
    \end{align}
    
    By the decoupling result in Theorem 1 of \citet{de1995decoupling}, we have 
    \begin{equation*}
        \mathbb{P}(\|\bS_1\| \ge t)\le C_1 \cdot \mathbb{P} \left(\|\tilde{\bS}_1\|\ge \frac{t}{C_1}\right),
    \end{equation*}
    for any $t> 0$ and some universal constant $C_1 > 0$. Note that $\tilde{\bI}_1 = \tilde{\bS}_1 + \tilde{\bS}_2$, leading to $\|\tilde{\bS}_1\|\le \|\tilde{\bI}_1\| + \|\tilde{\bS}_2\|$. It then follows from Lemma \ref{lem:Nk-tI1} and Lemma \ref{lem:Nk-tS2} that 
    \begin{align*}
    \|\tilde{\bS}_1\| & \le \alpha(\delta) (\|\bx\|^2 + \|\bx^{(k)}\|^2 )^{1/2}\left( (\sum_{k_2=1}^K \|\bx^{(k_2)}\|_\infty^2)^{1/2}+n^{-1/2}\|\bx\|\right) s_n (n_k n)^{1/2} \log n \\
    & + \frac{4 \times 6^{1/2}}{3} \max_{k^\prime \in [K]}\|\bx^{(k^\prime)}\|_4^2(n_k\log n)^{1/2}s_n+  4\|\bx\|_{\infty}^2 \log n, 
    \end{align*}
    with probability at least $1 - {2K(n_k+2)}/{n^2}$, where $\bx^{(k^\prime)} = \bZ_{., k^\prime} * \bx$ for any $k^\prime \in [K]$ and $\alpha(\delta)= (8\delta + 4(4\delta^2 + 9)^{1/2}/3$. Hence, with probability at least $1 - 2C_1K(n_k +2)/n^2$, we have
    \begin{align*}
    \|\bS_1\| & \le \alpha(\delta)C_1 (\|\bx\|^2 + \|\bx^{(k)}\|^2 )^{1/2}\left( (\sum_{k_2=1}^K \|\bx^{(k_2)}\|_\infty^2)^{1/2}+n^{-1/2}\|\bx\|\right) s_n (n_k n)^{1/2} \log n \\
    & + \frac{4\times 6^{1/2}}{3}C_1 \max_{k^\prime\in [K]}\|\bx^{(k^\prime)}\|_4^2(n_k\log n)^{1/2}s_n+  4C_1\|\bx\|_{\infty}^2 \log n, 
    \end{align*}
    where the first term dominates. 
    
   By the probabilistic upper bound for $\|\bS_2 - \EE(\bS_2) \|$ in \Cref{lem:Nk-S2} and that for $\bI_2$ in \Cref{lem:Nk-I2}, the above upper bound and \eqref{eq:Nk-bound} yields that
    \begin{align*}
       & \quad \|\bH_k - \EE(\bH_k)\| \\
        &\leq \alpha(\delta)C_1 ( \|\bx\|^2 + \|\bx^{(k)}\|^2 )^{1/2}\left((\sum_{k_2=1}^K \|\bx^{(k_2)}\|_\infty^2)^{1/2}+n^{-1/2}\|\bx\|\right) s_n (n_k n)^{1/2} \log n \\
    &\quad\quad  + \frac{4\times 6^{1/2}}{3}C_1 \max_{k^\prime\in [K]}\|\bx^{(k^\prime)}\|_4^2(n_k\log n)^{1/2}s_n+  4C_1\|\bx\|_{\infty}^2 \log n\\
    & \quad\quad + 2 \times 2^{1/2} \max_{k^\prime \in [K]}\|\bx^{(k^\prime)}\|_4^2(n_ks_n\log n)^{1/2}+  4\|\bx\|_{\infty}^2 \log n\\
    &\quad \quad +2 \tilde{\alpha}(\delta) s_n \{( \|\bx\|^2 + \|\bx^{(k)}\|^2 )n_k s_n \sum_{k_2=1}^K \|\bx^{(k_2)}\|_1^2 \log n \}^{1/2}\\
    & \le  \alpha(\delta)(C_1 + s_n^{1/2}) ( \|\bx\|^2 + \|\bx^{(k)}\|^2 )^{1/2}\left((\sum_{k_2=1}^K \|\bx^{(k_2)}\|_\infty^2)^{1/2}+n^{-1/2}\|\bx\|\right) s_n (n_k n)^{1/2} \log n \\
    & \quad\quad + 2\times 2^{1/2} \left(\frac{2\times 3^{1/2}}{3}C_1 s_n^{1/2} + 1\right)  \max_{k^\prime\in [K]}\|\bx^{(k^\prime)}\|_4^2(n_ks_n\log n)^{1/2}+  4(C_1+1) \|\bx\|_{\infty}^2 \log n,
    \end{align*}
    with probability at least $1 - 2K(C_1n_k +2C_1 + 2)/n^2$, where $\tilde{\alpha}(\delta) = \{2\delta + 2(\delta^2 + 9)^{1/2}\}/3 <\alpha(\delta)/2$, and the last inequality use the fact that $\sum_{k_2=1}^K \|\bx^{(k_2)}\|_1^2\le n \sum_{k_2=1}^K \|\bx^{(k_2)}\|_\infty^2$.
\end{proof}

\subsection{Proof of \Cref{thm:consistency}}
\begin{proof}[Proof of \Cref{thm:consistency}]
    By the definition of $\bH_k$, we have 
    \begin{align}
        &\quad \hat{\bbeta}_{k, .} - \bbeta_{k, .}^*  = \bH_k^{-1} \bM_k^\top(\bM_k\beta_k^*+\epsilon) -\beta_k^* = \bH_k^{-1} \bM_k^\top\epsilon, 
        \label{eq:diff-beta}
    \end{align}
yielding that 
    \begin{align*}
     \EE(\hat{\bbeta}_{k, .} - \bbeta_{k, .}^*) =  \EE_{\bA}\left(\EE_{\epsilon}(\hat{\bbeta}_{k, .} - \bbeta_{k, .}^*) \vert \bA \right) = \EE_{\bA}\left(\bH_k^{-1} \bM_k^\top\EE_{\epsilon}(  \epsilon) \right)=  \zero_K,
    \end{align*}
    and we can conclude that $\hat{\bbeta}_{k, .}$ is an unbiased estimator of $\bbeta^*$. 
    Moreover, the estimation error can be expressed as
    \begin{align*}
    \|\hat{\bbeta}_{k, .} - \bbeta_{k, .}^* \| = \left(\sum_{k^\prime =1}^K \left(( \bH_k^{-1} \bM_k^{\top})_{k^\prime, .}^\top \epsilon \right)^2\right)^{1/2} = \left(\sum_{k^\prime =1}^K \left( \sum_{\psi_j = k^\prime}( \bH_k^{-1} \bM_k^{\top})_{k^\prime, j} \epsilon_j \right)^2\right)^{1/2}. 
    \end{align*}
    Given $\bx$ and $\bA$, and thus $\bH_k$ and $\bM_k$, the $\epsilon_j$'s are independent sub-exponential random variables with parameter $(\sigma_{\epsilon}^2, b_{\epsilon})$, we have $\sum_{\psi_j = k^\prime}( \bH_k^{-1} \bM_k^{\top})_{k^\prime, j} \epsilon_j$ is a sub-exponential variable with parameter $(\|(\bH_k^{-1}\bM_k^\top)_{k^\prime, .}\|^2 \sigma_{\epsilon}^2$,
    $\|\bH_k^{-1}\bM_k^{\top}\|_{\max} b_{\epsilon})$. It then follows from Bernstein's inequality that with probability at least $1 - 2/n^2$
    \begin{align*}
    \vert \sum_{\psi_j = k^\prime}( \bH_k^{-1} \bM_k^{\top})_{k^\prime, j} \epsilon_j \vert &\le \left(2\|(\bH_k^{-1}\bM_k^\top)_{k^\prime, .}\|^2 \sigma_{\epsilon}^2 \log n^2\right)^{1/2} + \|\bH_k^{-1}\bM_k^{\top}\|_{\max} b_{\epsilon}\log n^2\\
    & =  2\|(\bH_k^{-1}\bM_k^\top)_{k^\prime, .}\| \sigma_{\epsilon}\left( \log n\right)^{1/2} + 2\|\bH_k^{-1}\bM_k^{\top}\|_{\max} b_{\epsilon}\log n, 
    \end{align*}
    for any $k^\prime \in [K]$. It then follows from the union bound that with probability at least $1 - (2K)/n^2$ that 
    \begin{align}
    \label{equ: epsilon_error}
     \|\hat{\bbeta}_{k, .} - \bbeta_{k, .}^* \| & \le \left(\sum_{k^\prime =1}^K  8\|(\bH_k^{-1}\bM_k^\top)_{k^\prime, .}\|^2 \sigma_{\epsilon}^2 \log n   +  8K\|\bH_k^{-1}\bM_k^{\top}\|_{\max}^2 b_{\epsilon}^2 (\log n)^2 \right)^{1/2} \nonumber\\
     & \le 2\left( 2\|(\bH_k^{-1}\bM_k^\top)\|_F^2 \sigma_{\epsilon}^2 \log n   +  2K\|\bH_k^{-1}\bM_k^{\top}\|_{\max}^2 b_{\epsilon}^2 (\log n)^2 \right)^{1/2} \nonumber\\
    & \le 2  \|(\bH_k^{-1}\bM_k^\top)\|_F \sigma_{\epsilon} ( 2\log n)^{1/2} + 2\|\bH_k^{-1}\bM_k^{\top}\|_{\max} b_{\epsilon} (2K)^{1/2}  \log n \nonumber\\
    & \le 2 \left( 2K /\lambda_{\min}(\bH_k)\right)^{1/2}(\sigma_{\epsilon}+ b_{\epsilon})\log n.
    \end{align}

    We next proceed to lower bound the smallest eigenvalue of $\bH_k$. By Weyl's inequality, we have
    \begin{equation*}
    \lambda_{\min}(\bH_k)\ge \lambda_{\min}(\EE(\bH_k)) - \|\bH_k - \EE(\bH_k)\|. 
    \end{equation*}
    According to the decomposition of $\bH_k$, we have 
    \begin{equation*}
    \lambda_{\min}(\EE(\bH_k)) = \lambda_{\min}(\EE(\bI_1 + \bI_4]) =  \lambda_{\min}(\EE(\bS_2 + \bI_4)) \ge \lambda_{\min}(\EE(\bS_2)), 
    \end{equation*}
    where the inequality comes from the fact that both $\EE(\bS_2)$ and $\EE(\bI_4)$ are positive semi-definite matrices. Note that $\bS_2$ is a diagonal matrix and $\EE(\bS_2)$ is given in \eqref{equ: ES_2}. Therefore, 
    \begin{align*}
   & \quad \lambda_{\min}(\EE(\bS_2)) \nonumber\\
   & = \min \left\{\min_{k_1 \ne k } \sum_{\psi_i = k_1} \sum_{\psi_j = k}x_i^2 \EE(A_{i, j})(1 - \EE(A_{i, j})), \sum_{\psi_i = \psi_j = k, i< j}(x_i^2+x_j^2)\EE(A_{i, j})(1 - \EE(A_{i, j}))    \right\} \nonumber\\
   & = \min \left\{\min_{k_1 \ne k } B_{k_1, k}(1- B_{k_1, k})n_k\|\bx^{(k_1)}\|^2, B_{k, k}(1 - B_{k, k})(n_k - 1)\|\bx^{(k)}\|^2    \right\} \nonumber\\
   &  \ge \min_{k^\prime \in [K]}   B_{k^{\prime}, k}(1 - B_{k^{\prime}, k})(n_k - 1)\|\bx^{(k^{\prime})}\|^2. 
    \end{align*}
    According to \Cref{thm:eig-Hk},
   \begin{align}
    \label{equ: lambda_min_prob}
   & \quad \lambda_{\min}(\bH_k)\\
   & \ge \min_{k^\prime \in [K]}   B_{k^{\prime}, k}(1 - B_{k^{\prime}, k})(n_k - 1)\|\bx^{(k^{\prime})}\|^2 \nonumber\\
    & -  \alpha(\delta)(C + s_n^{1/2}) (\|\bx\|^2 + \|\bx^{(k)}\|^2 )^{1/2}\left\{ \left(\sum_{k_2=1}^K \|\bx^{(k_2)}\|_\infty^2\right)^{1/2}+ n^{-1/2}\|\bx\|\right\} s_n (n_k n)^{1/2} \log n \nonumber\\
    & - 2\times2^{1/2} \left(\frac{2\times 3^{1/2}}{3}C_1 s_n^{1/2} + 1\right)  \max_{k^\prime\in [K]}\|\bx^{(k^\prime)}\|_4^2(n_ks_n\log n)^{1/2}+  4(C_1+1) \|\bx\|_{\infty}^2 \log n,
   \end{align}
    with probability at least $1 - 2K(Cn_k +2C + 2)/n^2$. 
    
Finally, combing \eqref{equ: epsilon_error} and \eqref{equ: lambda_min_prob}, we have
\begin{align*}
& \quad \|\hat{\bbeta}_{k, .} - \bbeta_{k, .}^* \|\\
& \le \left\{ \min_{k^\prime \in [K]}   B_{k^{\prime}, k}(1 - B_{k^{\prime}, k})(n_k - 1)\|\bx^{(k^{\prime})}\|^2 \right.\nonumber\\
    & -  \alpha(\delta)(C_1 + s_n^{1/2}) (\|\bx\|^2 + \|\bx^{(k)}\|^2 )^{1/2}\left
    \{ \left( \sum_{k_2=1}^K \|\bx^{(k_2)}\|_\infty^2\right)^{1/2}+n^{-1/2}\|\bx\|\right\} s_n (n_k n)^{1/2} \log n \nonumber\\
    & \left. - 2\times2^{1/2} \left(\frac{2\times 3^{1/2}}{3}C_1 s_n^{1/2} + 1\right)  \max_{k^\prime\in [K]}\|\bx^{(k^\prime)}\|_4^2(n_ks_n\log n)^{1/2}+  4(C_1+1) \|\bx\|_{\infty}^2 \log n \right\}^{-1/2}\\
    & \times 2 \left( 2K\right)^{1/2}(\sigma_{\epsilon}+ b_{\epsilon})\log n,
\end{align*}
with probability at least $1 - 2K(Cn_k +2C + 3)/n^2$. This completes the proof of \Cref{thm:consistency}. 
\end{proof}

\begin{proof}[Proof of \Cref{thm: mis_Z}] Denote $\hat{\bZ} = \bZ + \bE$ be the estimated community membership matrix. Since there are $\alpha_n$ vertices have been mis-clustered. There are $\alpha_n$-rows of $\bE$ are non-zeros. Moreover, for each non-zero rows in $\bE$, there are exactly one 1 and one -1, while all the other coordinates are 0. Also, $\bE_{i, k} = -1$ must implies $\bZ_{i, k} = 1$ and thus $\hat{\bZ}_{i, k} = 0$. Without loss of generality, assume the first $\alpha_n$ vertices have been mis-clustered, leading to non-zero rows only locating in the first $\alpha_n$ rows of $\bE$. 

When $\bZ$ is misspecified by $\hat{\bZ}$, we build a neighborhood regression model as
\begin{align*}
\by &= (\bZ \bbeta \bZ^{\top} * \bA)\bx + \bepsilon\\
& = \left(\hat{\bZ} - \bE) \bbeta (\hat{\bZ} -\bE)^{\top} * \bA\right)\bx + \bepsilon\\
& = \left(\hat{\bZ} \bbeta \hat{\bZ}^{^\top} * \bA\right)\bx + \left[\left( \bE \bbeta \bE^\top - \hat{\bZ} \bbeta \bE^\top - \bE \bbeta \hat{\bZ}^\top\right) * \bA\right]\bx+ \bepsilon\\
& : =  \left(\hat{\bZ} \bbeta \hat{\bZ}^{^\top} * \bA\right)\bx + \tilde{\bepsilon}
\end{align*}
where $\tilde{\bepsilon} = \left[\left( \bE \bbeta \bE^\top - \hat{\bZ} \bbeta \bE^\top - \bE \bbeta \hat{\bZ}^\top\right) * \bA\right]\bx+ \bepsilon$. Denote $\btau =\bE \bbeta \bE^\top - \hat{\bZ} \bbeta \bE^\top - \bE \bbeta \hat{\bZ}^\top$. It is clear that $\|\btau\|_{\infty} \le 8 \|\bbeta\|_{\infty}$, and $\tau_{i, j} = 0$ if $i>\alpha_n $ and $j> \alpha_n$. For notation simplicity, we denote $\bG = \bH^{-1}_k \bM_k$. It then follows from the proof of \Cref{thm:consistency} that 
\begin{align*}
    \|\hat{\bbeta}_{k, .}^{\mis} - \bbeta_{k, .}^* \| = \left(\sum_{k^\prime =1}^K \left( \sum_{\psi_j = k^\prime} G_{k, k^\prime, j} \tilde{\epsilon}_j \right)^2\right)^{1/2}. 
\end{align*}
Moreover, 
\begin{align*}
\sum_{\psi_j  = k^\prime} G_{k^\prime, j} \tilde{\epsilon}_j & =
\sum_{\psi_j = k^\prime} G_{k^\prime, j} \epsilon_j + 
\sum_{\psi_j = k^\prime} G_{k^\prime, j}\left[(\btau * \bA)\bx\right]_j\\
& =
\sum_{\psi_j = k^\prime} G_{k^\prime, j} \epsilon_j + \sum_{i=1}^n \sum_{\psi_j = k^\prime}G_{k^\prime, j} \tau_{i, j} A_{i, j}x_i
\end{align*}
Under the event in \Cref{assum: x}, condition on $\bA$, $x_i$'s are sub-Gaussian random variables, with variance proxy $\gamma^2 \log n $. Therefore, with probability at least $1 - \kappa_n - \frac{2}{n^2}$, 
\begin{align*}
|\sum_{i=1}^n \sum_{\psi_j = k^\prime}G_{k^\prime, j} \tau_{i, j} A_{i, j}x_i| \le 2\gamma \sqrt{\sum_{i=1}^n (\sum_{\psi_j =k^\prime} G_{k^\prime, j} \tau_{i, j} A_{i, j})^2\  } (\log n)^{3/2} \le 2 \gamma \|\btau * \bA\| \|\bG_{k^\prime, .}\| (\log n)^{3/2}.
\end{align*}
By the matrix Bernstein's inequality, $\|\btau * \bA\|$ is concentration around $\| \btau * \bP\| = O_p(s_n \sqrt{n \alpha_n})$. 
Therefore, by the triangle inequality and the proof of \Cref{thm:consistency}, 
there exists some universal constant $\tilde{C}_4$ and $\tilde{C}_5$, such that with probability at least $1 - \kappa_n - \tilde{C}_4 K/n^2$, we have 
\begin{align*}
\|\hat{\bbeta}_{k, .}^{\mis} - \bbeta_{k, .}^* \| & \le 2 \left( 2K /\lambda_{\min}(\bH_k)\right)^{1/2}(\sigma_{\epsilon}+ \tilde{C}_5 b_{\epsilon})\log n + \gamma s_n \sqrt{n\alpha_n} \log^2 n \|\bG\|_F \\
& \le 2 \left( 2K /\lambda_{\min}(\bH_k)\right)^{1/2}(\sigma_{\epsilon}+ b_{\epsilon})\log n + \tilde{C}_5\gamma s_n \sqrt{n\alpha_n} \log^2 (K /\lambda_{\min}(\bH_k))^{1/2}   \\
& \le \max\{2\sqrt{2}, \tilde{C}_5\} \left(K /\lambda_{\min}(\bH_k)\right)^{1/2}(\sigma_{\epsilon}+ b_{\epsilon} + \gamma s_n \sqrt{n \alpha_n } \log n)\log n. 
\end{align*}
By the result in \Cref{cor: root_n}, we have 
\begin{align*}
     \|\hat{\bbeta}_{k, .}^{\mis} - \bbeta_{k, .}^* \| \le C_5 \frac{ K^{1/2}(\sigma_{\epsilon} + b_{\epsilon} + \gamma s_n \sqrt{n \alpha_n} \log n)\log n }{(s_n n_k n_{\min})^{1/2}}. 
\end{align*}
with probability at least $1 - \tilde{C_4}/n - \kappa_n$, for some constant $C_5$. 
\end{proof}

\subsection{Proof of \Cref{thm:BLUE}}
\begin{proof}[Proof of \Cref{thm:BLUE}]
Recall that $\hat{\bbeta}_{k,\cdot} = \bH_k^{-1}\bM_k^{\top}\by$ and $\by = \sum_{k^\prime=1}^K \bM_{k^\prime} \bbeta_{k^\prime, .}^* + \bepsilon$. Any estimator of $\bbeta_{k, .}^*$ that is linear in $\by$ takes the form $\bar{\bbeta}_k = \hat{\bbeta}_{k, .} + \tilde{\bM}\by = (\bH_k^{-1}\bM_k^\top + \tilde{\bM})\by$, for $\tilde{\bM}\in\RR^{K\times n}$.
The unbiased property of $\hat{\bbeta}_{k,\cdot}$ and $\bar{\bbeta}_k$, together with the zero-mean assumption of $\bepsilon$, implies that 
\begin{align*}
\EE(\bar{\bbeta}_k) = \bbeta^*_{k, .} + \sum_{k^\prime=1}^K \tilde{\bM} \bM_{k^\prime}\bbeta^*_{k^\prime, .} =\bbeta^*_{k,.}, 
\end{align*}
where the expectation is taken with respect to $\by$, leading to $\sum_{k^\prime=1}^K \tilde{\bM} \bM_{k^\prime}\bbeta^*_{k^\prime, .}=\zero$. 
Since $\bbeta^*$ is unknown and could be arbitrary, we have $\tilde{\bM}\bM_{k^\prime}=\zero$, for any $k^\prime \in [K]$. 

Under the assumption that the variance of $\epsilon_i$'s are the same within community $k$, denoted by $\sigma
_k^2$, we can decomposed the covariance matrix of $\bar{\bbeta}_{k, .}$ as 
\begin{align*}
 \Var( \bar{\bbeta}_k )  & = \Var( \hat{\bbeta}_{k,\cdot} ) + \sigma^2_k \bH_k^{-1} \bM_k^\top \tilde{\bM}^\top  + \sigma^2_k \tilde{\bM}\bM_k\bH_k^{-1} + \sigma^2_k\tilde{\bM}\tilde{\bM}^\top \\
& = \Var( \hat{\bbeta}_{k,\cdot} ) + \sigma^2_k\tilde{\bM}\tilde{\bM}^\top 
\end{align*}
where the first equality follows from the fact that the $i$-th row of $\bM_k$ is $\zero$ if $\psi_i \ne k$, and the second one follows from $\tilde{\bM}\bM_k =\zero$. Finally, since $\sigma^2_k\tilde{\bM}\tilde{\bM}^\top$ is positive semi-definite, we conclude that 
\begin{align*}
\Var( \hat{\bbeta}_{k,\cdot} ) \preceq \Var( \bar{\bbeta}_k ). 
\end{align*}
\end{proof}

\subsection{Proof of \Cref{thm:minimax}}

\begin{proof}[Proof of \Cref{thm:minimax}]

We split the proof into two parts.

\paragraph{Part (1) The lower bound.}
For any $\lambda>0$, define the ridge regression estimator for the $k$th sub-problem as 
\begin{align*}
\hat{\bbeta}_{k, \lambda} = \arg \min_{\bbeta \in \RR^K}\left\{ \frac{1}{n_k} \| \by * \bZ_{., k} - \bM_k \bbeta_{k, .}\|_2^2 + \lambda\|\bbeta\|^2 \right\} = \frac{1}{n_k} \left(\frac{1}{n_k} \bH_k + \lambda \bI\right)^{-1}\bM_{k}^{\top} (\by * \bZ_{., k}). 
\end{align*}
Note that 
\begin{align*}
\by * \bZ_{., k} = (\bM_{k, .} \bbeta_{k, .}^* + \bepsilon) * \bZ
_{., k} = \bM_{k, .} \bbeta_{k, .}^* * \bZ_{., k} + \bepsilon * \bZ
_{., k} = \bM_{k, .} \bbeta_{k, .}^* + \bepsilon * \bZ
_{., k}, 
\end{align*}
where the last equality follows from the fact that the $i$th row of $\bM_{k, .}$ is $\zero$ if $\psi_i \ne k $. Moreover, 
\begin{align*}
\bM_{k, .}^\top (\by * \bZ_{., k} ) = \bH_k \bbeta_{k, .}^* + \bM_{k, .}^\top \bepsilon. 
\end{align*}
As such, 
\begin{align}
\label{equ: calE_decom}
\EE\left(\mathcal{E}(\hat{\bbeta}_{k, \lambda})\right)  & = \EE \left( \left\|\frac{1}{n_k} \left(\frac{1}{n_k} \bH_k + \lambda \bI\right)^{-1}\left( \bH_k \bbeta_{k, .}^* + \bM_{k, .}^\top \bepsilon\right) -\bbeta_{k, .}^* \right\|_{\bSigma}^2\right) \nonumber\\
 & = \EE\left(\left\|\frac{1}{n_k} \left(\frac{1}{n_k} \bH_k + \lambda \bI\right)^{-1}  \bM_{k, .}^\top \bepsilon   - \lambda \left(\frac{1}{n_k} \bH_k + \lambda \bI\right)^{-1}\bbeta_{k, .}^* \right\|_{\bSigma}^2\right) \nonumber\\
 & =\EE\left(\EE\left(\left\|\frac{1}{n_k}  \bM_{k, .}^\top \bepsilon   - \lambda\bbeta_{k, .}^* \right\|_{\left(\frac{1}{n_k} \bH_k + \lambda \bI\right)^{-1} \bSigma \left(\frac{1}{n_k} \bH_k + \lambda \bI\right)^{-1} }^2 \Big\vert \bM_{k, .}\right)\right) \nonumber\\
 & = \lambda^2\EE\left( \|\bbeta_{k, .}^* \|_{\bS}^2\right) + \frac{1}{n_k^2}\sum_{\psi_i = k}\Var(\epsilon_i) \EE\left(\|(\bM_{k, .})_{i, .}\|^2_{\bS}\right), 
\end{align}
where $\bS = \left(n_k^{-1} \bH_k + \lambda \bI\right)^{-1} \bSigma \left(n_k^{-1} \bH_k + \lambda \bI\right)^{-1}$, and the last equality can be obtained by expanding.

Let 
\begin{align*}
\mathcal{P}_{k, \text{Gauss}}(P_X, P_{\bA}, \sigma_{\max}) = \left\{ P_{X, \bA, Y}: \bx \sim P_X, \bA \sim P_{\bA}, \bbeta^*_{k,.} \in \RR^K, \bepsilon \sim \mathcal{N}(\zero, \sigma_{\max}^2 \bI) \right\}, 
\end{align*}
Since $ \mathcal{P}_{k, \text{Gauss}}(P_X, P_{\bA}, \bsigma^2) \subset \mathcal{P}_k(P_X, P_{\bA}, \bsigma^2)$, we have
\begin{align*}
\inf_{\hat{\bbeta}_{k, .}} \sup_{\bbeta^*_{k, .} \in \mathcal{P}_k(P_X, P_{\bA}, \bsigma^2)}\EE\left(\mathcal{E}(\hat{\bbeta}_{k, .})\right) \ge \inf_{\hat{\bbeta}_{k, .}} \sup_{\bbeta^*_{k, .} \in \mathcal{P}_{k, \text{Gauss}}(P_X, P_{\bA}, \bsigma^2)}\EE\left(\mathcal{E}(\hat{\bbeta}_{k, .})\right).
\end{align*}
It thus suffices to derive a lower bound for the minimax risk over $\mathcal{P}_{k, \text{Gauss}}$. 

Under the distribution family $\mathcal{P}_{k, \text{Gauss}}$, we have $\Var(\epsilon_i) = \sigma_{\max}^2$. Consider a prior of $\bbeta^*_{k, .}$ $Q_{\lambda} = \mathcal{N}(\zero_K, \sigma_{\max}^2/(\lambda n_k)\bI_K)$, the posterior reads
\begin{align*}
& \quad \Pi_{\psi_i = k} \frac{1}{(2\pi)^{1/2} \sigma_{\max}}\exp\left\{- \frac{(y_i - (\bM_{k})_{i, .}^\top \bbeta_{k, .}^*)^2}{2\sigma_{\max}^2}\right\} \times  \left(\frac{\lambda n_k}{ 2\pi\sigma^2_{\max}}\right)^{K/2} \exp\{- \frac{\lambda n_k (\bbeta_{k, .}^*)^\top \bbeta_{k, .}^*}{2\sigma_{\max}^2}\}\\
& \propto \exp \left\{-\frac{1}{2\sigma_{\max}^2}\left[\sum_{\psi_i = k} \left(y_i - (\bM_{k})_{i, .}^\top \bbeta_{k, .}^*\right)^2 + \lambda n_k(\bbeta_{k, .}^*)^\top \bbeta_{k, .}^*\right] \right\}\\
& \propto \exp \left\{-\frac{1}{2\sigma_{\max}^2}\left[ (\bbeta_{k, .}^*)^\top (\bH_k +  \lambda n_k \bI) \bbeta_{k, .}^* - 2\by^\top \bM_{k} \bbeta_{k, .}^*\right] \right\}\\
& \propto \exp \left\{-\frac{1}{2\sigma_{\max}^2}\left[ (\bbeta_{k, .}^* - \hat{\bbeta}_{k, \lambda})^\top (\bH_k +  \lambda n_k \bI) (\bbeta_{k, .}^* - \hat{\bbeta}_{k, \lambda}) \right] \right\}. 
\end{align*}
Herein, the proportion notation $\propto$ drops some factors that do not depend on $\bbeta_{k, .}^*$. This shows that the posterior of $\bbeta_{k, .}^*$ follows $\mathcal{N}\big(\hat{\bbeta}_{k, \lambda}, \sigma_{\max}^2 (\bH_k + \lambda n_k \bI)^{-1}\big)$. Therefore, maximizing the posterior likelihood yields that $\widehat{\bbeta_{k, .}^*} = \hat{\bbeta}_{k, \lambda}$, leading to 
\begin{align*}
\inf_{\hat{\bbeta}_{k, .}} \sup_{\bbeta^*_{k, .} \in \mathcal{P}_{k, \text{Gauss}}(P_X, P_{\bA}, \bsigma^2)}\EE\left(\mathcal{E}(\hat{\bbeta}_{k, .})\right) & \ge \inf_{\hat{\bbeta}_{k, .}} \EE_{\bbeta^*_{k, . \sim Q_{\lambda}}}\EE\left(\mathcal{E}(\hat{\bbeta}_{k, .})\right)\ge \EE_{\bbeta^*_{k, . \sim Q_{\lambda}}}\EE\left(\mathcal{E}(\hat{\bbeta}_{k, \lambda})\right).
\end{align*}
Moreover, by Fubini's theorem,
\begin{align*}
& \quad \EE_{\bbeta^*_{k, . \sim Q_{\lambda}}}\EE\left(\mathcal{E}(\hat{\bbeta}_{k, \lambda})\right)\\
&= \EE_{\bbeta^*_{k, . \sim Q_{\lambda}}}\left(\lambda^2\EE\left( \|\bbeta_{k, .}^* \|_{\bS}^2\right) + \frac{1}{n_k^2}\sum_{\psi_i = k}\Var(\epsilon_i) \EE\left(\|(\bM_{k, .})_{i, .}\|^2_{\bS}\right)\right)\\
& = \lambda^2 \EE\left( \EE_{\bbeta^*_{k, . \sim Q_{\lambda}}}  \left[ \|\bbeta_{k, .}^* \|_{\bS}^2\right]      \right) + \frac{\sigma_{\max}^2}{n_k^2}\sum_{\psi_i = k}\EE\left(\|(\bM_{k, .})_{i, .}\|^2_{\bS}\right)\\
& = \lambda^2 \EE\left( \EE_{\bbeta^*_{k, . \sim Q_{\lambda}}}  \left( \tr\big(\bbeta_{k, .}^*(\bbeta_{k, .}^*)^\top \bS\big)\right)    \right) + \frac{\sigma_{\max}^2}{n_k^2} \EE\left(\tr \big(\sum_{\psi_i = k}(\bM_{k, .})_{i, .} (\bM_{k, .})_{i, .}^\top \bS\big)\right)\\
& = \lambda^2 \EE\left( \frac{\sigma_{\max}^2}{\lambda n_k} \tr\big(\bS\big)\right)+ \frac{\sigma_{\max}^2}{n_k^2} \EE\left(\tr \big(\bH_k \bS\big)\right)\\
& =\frac{\sigma_{\max}^2}{n_k}  \EE\left( \tr\left(\left(\frac{1}{n_k} \bH_k + \lambda \bI\right)\bS\right)\right)\\
& =\frac{\sigma_{\max}^2}{n_k}  \tr\left( \bSigma \EE\left(\left(\frac{1}{n_k} \bH_k + \lambda \bI\right)^{-1}\right) \right). 
\end{align*}
Finally, since  the function $\lambda\mapsto\tr\left(\bSigma \EE\left(\left(n_k^{-1} \bH_k + \lambda \bI\right)^{-1}\right)\right)$ is positive and decreasing in $\lambda$, the monotone convergence theorem yields that 
\begin{align*}
\lim_{\lambda\rightarrow 0}\EE_{\bbeta^*_{k, . \sim Q_{\lambda}}}\EE\left(\mathcal{E}(\hat{\bbeta}_{k, \lambda})\right) =\sigma_{\max}^2 \tr\left(\bSigma \EE\left( \bH_k^{-1}\right) \right).
\end{align*}
Joining the pieces together, we conclude that 
\begin{align*}
\inf_{\hat{\bbeta}_{k, .}} \sup_{\bbeta^*_{k, .} \in \mathcal{P}_k(P_X, P_{\bA}, \bsigma^2)}\EE\left(\mathcal{E}(\hat{\bbeta}_{k, .})\right) \ge\sigma_{\max}^2\tr\left(\bSigma\EE (\bH_k^{-1}) \right).
\end{align*}

\paragraph{Part (2) The upper bound.} 
Taking $\lambda =  0$ in the decomposition \eqref{equ: calE_decom}, we have
\begin{align*}
\EE\left(\mathcal{E}(\hat{\bbeta}_{k, \lambda})\right) \le \frac{1}{n_k^2}\sum_{\psi_i = k}\Var(\epsilon_i) \EE\left(\|(\bM_{k, .})_{i, .}\|^2_{\bS}\right) \le\sigma_{\max}^2\tr\left(\bSigma\EE (\bH_k^{-1}) \right)
\end{align*}
Combining the lower bound in Part (1) yields the desired result.

\end{proof}

\section{Proof of supporting lemmas}\label{app:lemmas}
\subsection{Proof of \Cref{lem:Nk-tI1}}
\begin{lemma}[Bounding $\tilde{\bI}_1$]\label{lem:Nk-tI1}
    Under the same assumptions as in \Cref{thm:consistency} with $\tilde{\bI}_1$ defined in \eqref{equ: I_1_tilde}, it holds that \begin{align*}
        \|\tilde{\bI}_1\| \le \alpha(\delta) ( \|\bx\|^2 + \|\bx^{(k)}\|^2 )^{1/2}\left\{  \left(\sum_{k_2=1}^K \|\bx^{(k_2)}\|_\infty^2\right)^{1/2}+ n^{-1/2}\|\bx\|\right\} s_n (n_k n)^{1/2} \log n. 
    \end{align*}    
    with probability at least $1 - {2K(n_k+1)}/{n^2}$, where $\bx^{(k^\prime)} = \bZ_{., k^\prime} * \bx$ for any $k^\prime \in [K]$ and $\alpha(\delta)= \{8\delta + 4(4\delta^2 + 9)^{1/2}\}/3$.
\end{lemma}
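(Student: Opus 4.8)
The plan is to exploit the fact that, after decoupling, $\tilde{\bI}_1$ is \emph{bilinear} in two independent copies, so that conditioning on one of them turns the estimate into a matrix-concentration bound for a sum that is \emph{linear} in the remaining randomness. Writing $\bW=\bA-\EE(\bA)$, $\tilde{\bW}=\tilde{\bA}-\EE(\bA)$ and $\bD=\diag(\bZ_{\cdot,k})$, and using $\bx\bx^\top * \bQ=\diag(\bx)\bQ\diag(\bx)$, I would first record the compact form $\tilde{\bI}_1=\bY^\top\bW\bD\tilde{\bW}\bY$, where $\bY=\diag(\bx)\bZ=[\bx^{(1)},\dots,\bx^{(K)}]$ collects the community-masked covariates as its columns. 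Both $\bW$ and $\tilde{\bW}$ are symmetric, mean-zero, have entries bounded by $1$ with variances at most $s_n$, and (thanks to the deterministic self-loops) vanishing diagonals.

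Second, I would condition on $\bW$ and regard $\tilde{\bI}_1$ as a function of the independent upper-triangular entries of $\tilde{\bW}$. Expanding $\tilde{\bW}=\sum_{a<b}\tilde{W}_{ab}(\be_a\be_b^\top+\be_b\be_a^\top)$ gives $\tilde{\bI}_1=\sum_{a<b}\bS_{ab}$ with $\bS_{ab}=\tilde{W}_{ab}(\bg_a\bY_{b,\cdot}+\bg_b\bY_{a,\cdot})$, where $\bY_{a,\cdot}=x_a\be_{\psi_a}^\top$ and $\bg_a=\ind(\psi_a=k)\,\bv_a$ with $\bv_a=\bY^\top\bW\be_a\in\RR^K$, so that $(\bv_a)_{k_1}=\sum_{c:\psi_c=k_1}x_cW_{c,a}$. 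These $K\times K$ summands are independent across pairs and mean-zero, so the matrix Bernstein inequality \citep{tropp2012user} applies conditionally on $\bW$, giving a bound of the form $\|\tilde{\bI}_1\|\le C(\sqrt{\nu\log(2K/p)}+R\log(2K/p))$ once I produce an almost-sure bound $R\ge\max_{a<b}\|\bS_{ab}\|$ and a variance proxy $\nu\ge\max\{\|\sum\EE(\bS_{ab}\bS_{ab}^\top)\|,\|\sum\EE(\bS_{ab}^\top\bS_{ab})\|\}$ over the randomness of $\tilde{\bW}$. A direct computation using $\EE(\tilde{W}_{ab}^2)=\sigma_{ab}^2\le s_n$, $\|\bY_{a,\cdot}\|=|x_a|$, and the disjoint supports of the $\be_{\psi_a}$ shows that the diagonal part of the variance dominates; combining the two one-sided forms yields $\nu\le C\,s_n(\|\bx\|^2+\|\bx^{(k)}\|^2)\,\|\bG\bG^\top\|$ with $\bG\bG^\top=\sum_{a:\psi_a=k}\bv_a\bv_a^\top$, together with $R\le 2\|\bx\|_\infty\max_{a:\psi_a=k}\|\bv_a\|$. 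The crucial structural point is that the factor $\bZ$ aggregates the rows, so $\bv_a$ lives in $\RR^K$ and $\|\bG\bG^\top\|\le n_k\max_a\|\bv_a\|^2$; this is precisely what avoids the lossy $n_{\max}\,ns_n$ bound flagged in the Remark.

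Third, since $\max_a\|\bv_a\|$ (and hence $\nu,R$) is still random through $\bW$, I would control it on a high-probability event. Each coordinate $(\bv_a)_{k_1}=\sum_{c:\psi_c=k_1}x_cW_{c,a}$ is a sum of independent, bounded, mean-zero variables with total variance at most $s_n\|\bx^{(k_1)}\|^2$; Bernstein's inequality with a union bound over the $K$ coordinates and the $n_k$ indices $a$ in community $k$ gives, with probability at least $1-2Kn_k/n^2$, that $\max_a\|\bv_a\|$ is at most a constant multiple of $\sqrt{s_n\log n}\,\|\bx\|+\log n\,(\sum_{k_2}\|\bx^{(k_2)}\|_\infty^2)^{1/2}$. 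These two terms are exactly the two terms inside the braces of the claimed inequality. Substituting into $\nu$ and $R$, taking $p=2K/n^2$ in the conditional matrix Bernstein bound (which adds the remaining failure probability $2K/n^2$, for a total of $2K(n_k+1)/n^2$), the $\sqrt{\nu\log n}$ contribution produces the sub-Gaussian term $(\|\bx\|^2+\|\bx^{(k)}\|^2)^{1/2}\|\bx\|\,s_n\sqrt{n_k}\log n$, which is precisely the $n^{-1/2}\|\bx\|$ branch of the target, while the $R\log n$ contribution produces the sub-exponential branch carrying $(\sum_{k_2}\|\bx^{(k_2)}\|_\infty^2)^{1/2}$.

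Finally, I would merge these into the stated single bound. The $R\log n$ term is where \Cref{asm:community-sizes} does the real work: the bound $\|\bx\|_\infty\le\delta\{n_ks_n(\|\bx\|^2+\|\bx^{(k)}\|^2)/\log n\}^{1/2}$ lets me trade the $\|\bx\|_\infty$ and the surplus $\log n$ factors for the $\delta$-dependent prefactor, and solving the resulting quadratic in $\delta$ yields exactly $\alpha(\delta)=\{8\delta+4(4\delta^2+9)^{1/2}\}/3$ (so $\alpha(0)=4$ recovers the pure matrix-Bernstein constant); \Cref{asm:spasity} (which forces $s_nn\ge\log n$) is used at the very end to confirm the sub-Gaussian branch dominates the sub-exponential one, collapsing both into the single factor $s_n(n_kn)^{1/2}\log n$. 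I expect the main obstacle to be the variance computation of the second step: one must keep the cross terms $\bg_a\bg_b^\top$ and both forms $\bS_{ab}\bS_{ab}^\top,\bS_{ab}^\top\bS_{ab}$ under control and verify that aggregation through $\bZ$ genuinely collapses the ambient dimension $n$ down to $K$ inside $\|\bG\bG^\top\|$, as this is what separates the sharp rate here from the naive spectral bound.
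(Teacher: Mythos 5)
Your proposal is correct and follows essentially the same route as the paper's proof: condition on one of the two decoupled copies, control the inner linear sums $\sum_{\psi_c=k_1} x_c W_{c,a}$ by scalar Bernstein plus a union bound over the $K$ coordinates and $n_k$ indices (the paper's event $E_1$, probability $1-2Kn_k/n^2$), then apply matrix Bernstein conditionally with the same variance-proxy structure and use \Cref{asm:community-sizes} to absorb the $R\,t$ term, which is exactly how the paper's quadratic $3\alpha^2-16\delta\alpha-48=0$ produces $\alpha(\delta)=\{8\delta+4(4\delta^2+9)^{1/2}\}/3$ and the total failure probability $2K(n_k+1)/n^2$. The only differences are cosmetic — you condition on $\bW=\bA-\EE(\bA)$ and sum over rank-one expansions of $\tilde{\bW}$ while the paper conditions on $\tilde{\bA}$ and sums over the entries of $\bA$ (immaterial by symmetry of the i.i.d.\ copies), and your compact form $\bY^\top\bW\bD\tilde{\bW}\bY$ with the bound $\|\bG\bG^\top\|\le n_k\max_a\|\bv_a\|^2$ reproduces the paper's entrywise second-moment computations, cross terms included.
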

\begin{proof}[Proof of \Cref{lem:Nk-tI1}]
    The idea of the proof is to employ the matrix Bernstein inequality for $\tilde{\bI}_1$ conditioned on $\tilde{\bA}$. To do this,  we divide the proof into several steps.

    \paragraph{Step 1. Decomposition of $\tilde{\bI}_1$.}
    Given $\tilde{\bA}$, we rewrite $\tilde{\bI}_1$ into the summation of a series of centered independent random matrices. Specifically, 
    \begin{align*}
    \tilde{\bI}_1 & =\sum_{k_1=1}^K \sum_{k_2 =1}^K  \sum_{\psi_i = k_1} \sum_{\psi_{i^\prime} = k_2} \sum_{\psi_j = k} x_i x_{i^\prime} (\bA - \EE(\bA) )_{i, j} (\tilde{\bA} - \EE(\bA))_{i^\prime, j} \be_{k_1} \be_{k_2}^\top\\
    & =\sum_{k_1=1, k_1 \ne k}^K   \sum_{\psi_i = k_1} \sum_{\psi_j = k} (\bA - \EE(\bA) )_{i, j} \sum_{k_2 =1}^K \sum_{\psi_{i^\prime} = k_2} x_i x_{i^\prime}(\tilde{\bA} - \EE(\bA))_{i^\prime, j} \be_{k_1} \be_{k_2}^\top\\
    & \quad + \sum_{\psi_i = \psi_j = k, i< j} (\bA - \EE(\bA) )_{i, j} \sum_{k_2 =1}^K \sum_{\psi_{i^\prime} = k_2} \left[x_i x_{i^\prime}(\tilde{\bA} - \EE(\bA))_{i^\prime, j} + x_j x_{i^\prime}(\tilde{\bA} - \EE(\bA))_{i^\prime, i} \right] \be_k \be_{k_2}^\top\\
    & := \sum_{k_1=1, k_1 \ne k}^K   \sum_{\psi_i = k_1} \sum_{\psi_j = k} \bY_{k_1, i, j} +  \sum_{\psi_i = \psi_j = k, i< j}\bY_{k, i, j}, 
    \end{align*}
    where 
    \begin{align*}
     & \bY_{k_1, i, j} = (\bA - \EE(\bA) )_{i, j} \sum_{k_2 =1}^K \sum_{\psi_{i^\prime} = k_2} x_i x_{i^\prime}(\tilde{\bA} - \EE(\bA))_{i^\prime, j} \be_{k_1} \be_{k_2}^\top, \text{ for } k_1 \ne k, \text{ and }\\
    & \bY_{k, i, j} = (\bA - \EE(\bA) )_{i, j} \sum_{k_2 =1}^K \sum_{\psi_{i^\prime} = k_2} \left[x_i x_{i^\prime}(\tilde{\bA} - \EE(\bA))_{i^\prime, j} + x_j x_{i^\prime}(\tilde{\bA} - \EE(\bA))_{i^\prime, i} \right] \be_k \be_{k_2}^\top.
    \end{align*}
    Clearly,  $\bDelta_k = \{\bY_{k_1, i, j}: \psi_i=k_1, \psi_j = k, k_1 \ne k \} \cup \{\bY_{k, i, j}:\psi_i = \psi_j = k, i<j \}$ is a set of zero-mean independent random matrices given $\tilde{\bA}$.
    
    \paragraph{Step 2. Uniform bound of spectral norms.} We provide a uniform upper bound for the spectral norms of the matrices in the set $\bDelta$. For $k_1 \ne k$,
    \begin{align*}
        \|\bY_{k_1, i, j}\| & = |(\bA - \EE(\bA) )_{i, j}| \left\| \sum_{k_2 =1}^K \sum_{\psi_{i^\prime} = k_2} x_i x_{i^\prime}(\tilde{\bA} - \EE(\bA))_{i^\prime, j} \be_{k_1} \be_{k_2}^\top\right\|\\
        & \le \|\bx\|_{\infty} \left\{\sum_{k_2 =1}^K \left(\sum_{\psi_{i^\prime} = k_2}x_{i^\prime}(\tilde{\bA} - \EE(\bA))_{i^\prime, j}\right)^2 \right\}^{1/2}. 
    \end{align*}
    Note that $\sum_{\psi_{i^\prime} = k_2} x_{i^\prime} (\tilde{\bA} - \EE(\bA))_{i^\prime, j}$ is a summation of $n_{k_2}$ independent sub-exponential random variables with parameters $\{(x_{i^{\prime}}^2 s_n, |x_{i^\prime}|): \psi_{i^\prime} = k_2\}$.  Thus, $\sum_{\psi_{i^\prime} = k_2} x_{i^\prime} (\tilde{\bA} - \EE(\bA))_{i^\prime, j}$ is sub-exponential with parameter $(\sum_{\psi_{i^\prime} = k_2} x_{i^\prime}^2 s_n,  \|\bx^{(k_2)}\|_{\infty})$. Herein, we can safely regard $0$ as a $(x_{i^\prime}^2 s_n, |x_{i^\prime}|)$-sub-exponential random variable in case $i^\prime = j$. By the conventional Bernstein inequality, we have 
    \begin{equation*}
        \PP \left(   \left|\sum_{\psi_{i^\prime} = k_2}x_{i^\prime}(\tilde{\bA} - \EE(\bA))_{i^\prime, j} \right| >  \left(2 \|\bx^{(k_2)}\|^2  s_n \log n^2 \right)^{1/2} + \|\bx^{(k_2)}\|_{\infty} \log n^2   \right) \le  \frac{2}{n^2}, 
    \end{equation*}
Together with the assumption that $s_n \ge \log n/n$, we have, with probability at least $1- 2/n^2$, 
\begin{align}
\label{equ: first_concentration_bound}
 \left|\sum_{\psi_{i^\prime} = k_2}x_{i^\prime}(\tilde{\bA} - \EE(\bA))_{i^\prime, j} \right| &\le  \left(2 \|\bx^{(k_2)}\|^2  s_n \log n^2 \right)^{1/2} + \|\bx^{(k_2)}\|_{\infty} \log n^2  \nonumber\\
 & \le 2 \left(\|\bx^{(k_2)}\|_{\infty}+n^{-1/2}\|\bx^{(k_2)}\| \right)(n s_n\log n)^{1/2}, 
\end{align}
for any given $j$ such that $\psi_j = k$ and $k_2 \in [K]$. Furthermore, according to the union bound, we have \eqref{equ: first_concentration_bound} holds simultaneously for all $j$ such that $\psi_j = k$ and $k_2\in [K]$, with probability at least $1- 2Kn_k/n^2$. We denote this union event as $E_1$. As such, $E_1$ holds with probability at least $1- 2Kn_k/n^2$. This leads to 
\begin{align*}
  \|\bY_{k_1, i, j}\| & \le 2 \|\bx\|_{\infty} \left\{n s_n\log n  \left(\sum_{k_2= 1}^K \|\bx^{(k_2)}\|_{\infty}^2+ 2n^{-1/2}\sum_{k_2 =1}^K \|\bx^{(k_2)}\|_{\infty}\|\bx^{(k_2)}\| +\frac{\|\bx\|^2}{n}\right) \right\}^{1/2}\\
  & \le 2\|\bx\|_{\infty} \left\{ \left(\sum_{k_2=1}^K \|\bx^{(k_2)}\|_\infty^2\right)^{1/2}+ n^{-1/2}\|\bx\| \right\} (ns_n \log n )^{1/2}, 
\end{align*}
under the event $E_1$. In addition, for any $i< j$ such that $\psi_i = \psi_j = k$, we have
\begin{align*}
\|\bY_{k, i, j}\| & = |(\bA - \EE(\bA) )_{i, j}| \left\| \sum_{k_2 =1}^K \sum_{\psi_{i^\prime} = k_2} \left[x_i x_{i^\prime}(\tilde{\bA} - \EE(\bA))_{i^\prime, j} + x_j x_{i^\prime}(\tilde{\bA} - \EE(\bA))_{i^\prime, i} \right] \be_k \be_{k_2}^\top\right\|\\
& \le \left[ \sum_{k_2 =1}^K \left\{\sum_{\psi_{i^\prime} = k_2}\left(x_i x_{i^\prime}(\tilde{\bA} - \EE(\bA))_{i^\prime, j} + x_j x_{i^\prime}(\tilde{\bA} - \EE(\bA))_{i^\prime, i}\right)\right\}^2 \right]^{1/2}\\ 
& \le \|\bx\|_{\infty} \left[ 2\sum_{k_2 =1}^K \left\{ \left(\sum_{\psi_{i^\prime} = k_2} x_{i^\prime}(\tilde{\bA} - \EE(\bA))_{i^\prime, j}\right)^2  + \left(\sum_{\psi_{i^\prime} = k_2}x_{i^\prime}(\tilde{\bA} - \EE(\bA))_{i^\prime, i}\right)^2\right\}\right]^{1/2}\\
& \le 4\|\bx\|_{\infty} \left\{ \left(\sum_{k_2=1}^K \|\bx^{(k_2)}\|_\infty^2\right)^{1/2}+ n^{-1/2}\|\bx\| \right\}(ns_n \log n)^{1/2}, 
\end{align*}
where the last inequality holds under the event $E_1$. As such, define
\begin{equation}
\label{equ: R_y}
    R_{\bDelta}= 4\|\bx\|_{\infty} \left\{ \left(\sum_{k_2=1}^K \|\bx^{(k_2)}\|_\infty^2\right)^{1/2}+ n^{-1/2}\|\bx\| \right\} (ns_n \log n)^{1/2}, 
\end{equation}
which can serve as a uniform upper bound for the spectral norms of the matrices in $\bDelta$, under the event $E_1$.

\paragraph{Step 3. The second moment bound.} We now turn to derive an upper bound for the summation of the second-order moments for the matrices in $\bDelta$. Under $E_1$,  we have, for $k_1\ne k$,
\begin{align*}
& \quad \EE(\bY_{k_1, i, j}\bY_{k_1, i, j}^{\top})\\
&= \EE((\bA - \EE(\bA) )_{i, j}^2) \sum_{k_2 =1}^K \sum_{\psi_{i^\prime} = k_2} \sum_{\psi_{i^{\prime\prime}} = k_2}  x_i^2 x_{i^\prime}x_{i^{\prime \prime}}(\tilde{\bA} - \EE(\bA))_{i^\prime, j}(\tilde{\bA} - \EE(\bA))_{i^{\prime\prime}, j} \be_{k_1} \be_{k_1}^\top\\
& \preceq s_n x_i^2  \sum_{k_2=1}^K \left( \sum_{\psi_{i^\prime} = k_2} x_{i^\prime} (\tilde{\bA} - \EE(\bA))_{i^\prime, j} \right)^2\be_{k_1} \be_{k_1}^\top\\
& \preceq 4 x_i^2 \left\{ \left( \sum_{k_2=1}^K \|\bx^{(k_2)}\|_\infty^2\right)^{1/2}+ n^{-1/2}\|\bx\| \right\}^2  ns_n^2 \log n \be_{k_1} \be_{k_1}^\top.   
\end{align*}
Herein, the partial order $\bM_1 \preceq \bM_2$ for two positive semi-definite matrices $\bM_1$ and $\bM_2$ means that $\bM_2 - \bM_1$ is positive semi-definite. It then follows that 
\begin{align}
  & \quad \sum_{k_1\ne k}\sum_{\psi_i = k_1} \sum_{\psi_j = k}\\
  & \EE(\bY_{k_1, i, j}\bY_{k_1, i, j}^{\top})   \preceq 4  \left\{ \left( \sum_{k_2=1}^K \|\bx^{(k_2)}\|_\infty^2\right)^{1/2}+ n^{-1/2}\|\bx\| \right\}^2 n_k ns_n^2 \log n \sum_{k_1\ne k} \|\bx^{(k_1)}\|^2  \be_{k_1} \be_{k_1}^\top \nonumber \\
    &  \preceq 4 \left\{ \left( \sum_{k_2=1}^K \|\bx^{(k_2)}\|_\infty^2\right)^{1/2}+ n^{-1/2}\|\bx\| \right\}^2 \max_{k^\prime \in [K]\backslash\{k\}} \|\bx^{(k^\prime)}\|^2n_k  ns_n^2 \log n \sum_{k_1\ne k}  \be_{k_1} \be_{k_1}^\top.   \label{equ:second_k_1_1}
\end{align}
In addition, for any $i<j$ such that $\psi_i = \psi_j = k$, we have
\begin{align*}
& \quad \EE(\bY_{k, i, j}\bY_{k, i, j}^{\top})\\
& \preceq 2 s_n  \sum_{k_2=1}^K \left[ x_i^2 \left\{\sum_{\psi_{i^\prime} = k_2} x_{i^\prime} (\tilde{\bA} - \EE(\bA))_{i^\prime, j}\right\}^2 + x_j^2\left\{  \sum_{\psi_{i^\prime} = k_2} x_{i^\prime} (\tilde{\bA} - \EE(\bA))_{i^\prime, i}\right\}^2\right] \be_k \be_k^\top\\
& \preceq  8(x_i^2 + x_j^2) \left\{ \left(\sum_{k_2=1}^K \|\bx^{(k_2)}\|_\infty^2\right)^{1/2}+n^{-1/2}\|\bx\| \right\}^2 ns_n^2 \log n \cdot \be_k \be_k^\top. 
\end{align*}
Taking the summation for all $i<j$, we have
\begin{align}
\label{equ: second_k_1}
& \quad \sum_{\psi_i = \psi_j = k, i<j}\EE(\bY_{k, i, j}\bY_{k, i, j}^{\top}) \nonumber\\
&\le 8 \left\{ \left(\sum_{k_2=1}^K \|\bx^{(k_2)}\|_\infty^2\right)^{1/2}+ n^{-1/2}\|\bx\| \right\}^2 (n_k-1) \|\bx^{(k)}\|^2 ns_n^2 \log n \cdot \be_k \be_k^\top.
\end{align}
Combing \eqref{equ:second_k_1_1} and \eqref{equ: second_k_1} yields that
\begin{align}
\label{equ: second_order_moment_left}
&\quad \left\|\sum_{k_1\ne k}\sum_{\psi_i = k_1} \sum_{\psi_j = k}\EE(\bY_{k_1, i, j}\bY_{k_1, i, j}^{\top}) +\sum_{\psi_i = \psi_j = k, i<j}\EE(\bY_{k, i, j}\bY_{k, i, j}^{\top}) \right\| \nonumber\\
& \le 4  \left\{ \left(\sum_{k_2=1}^K \|\bx^{(k_2)}\|_\infty^2\right)^{1/2}+n^{-1/2}\|\bx\| \right\}^2 \max \left\{ \max_{k^\prime \in [K]\backslash\{k\}}\|\bx^{(k^\prime)}\|^2, 2 \|\bx^{(k)}\|^2 \right\}n_k n s_n^2 \log n.
\end{align}

Similarly, for $k_1 \ne k$, we have
\begin{align*}
& \quad \EE(\bY_{k_1, i, j}^\top \bY_{k_1, i, j})\\
& \preceq s_n x_i^2  \sum_{k_2=1}^K  \sum_{k_2^\prime =1}^K \left \vert \sum_{\psi_{i^\prime} = k_2} x_{i^\prime} (\tilde{\bA} - \EE(\bA))_{i^\prime, j} \right \vert \cdot \left\vert \sum_{\psi_{i^{\prime\prime}} = k_2^\prime} x_{i^{\prime\prime}} (\tilde{\bA} - \EE(\bA))_{i^{\prime\prime}, j} \right\vert  \be_{k_2} \be_{k_2^\prime}^\top\\
& \preceq 4 x_i^2 n s_n^2 \log n  \sum_{k_2 = 1}^K \sum_{k_2^\prime = 1}^K  \left(\|\bx^{(k_2)}\|_{\infty}+n^{-1/2}\|\bx^{(k_2)}\|\right)\left(\|\bx^{(k_2^\prime)}\|_{\infty}+n^{-1/2}\|\bx^{(k_2^\prime)}\| \right)\be_{k_2} \be_{k_2^\prime}^\top, 
\end{align*}
which leads to 
\begin{align}
\label{equ: second_order_mement_2}
& \quad \sum_{k_1 \ne k } \sum_{\psi_i = k_1 } \sum_{\psi_j = k}\EE(\bY_{k_1, i, j}^\top \bY_{k_1, i, j})  \nonumber\\
& \preceq 4  (\|\bx\|^2 - \|\bx^{(k)}\|^2)n_k n s_n^2 \log n  \\
& \quad \times \sum_{k_2 = 1}^K \sum_{k_2^\prime = 1}^K \left(\|\bx^{(k_2)}\|_{\infty}+n^{-1/2}\|\bx^{(k_2)}\| \right)\left(\|\bx^{(k_2^\prime)}\|_{\infty}+n^{-1/2}\|\bx^{(k_2^\prime)}\| \right) \be_{k_2} \be_{k_2^\prime}^\top. 
\end{align}
For $i< j$ with $\psi_i = \psi_j = k$, 
\begin{align*}
& \quad \EE(\bY_{k, i, j}^{\top} \bY_{k, i, j})\\
& \preceq s_n  \sum_{k_2=1}^K \sum_{k_2^\prime=1}^K  \left[ |x_i| \left \vert \sum_{\psi_{i^\prime} = k_2} x_{i^\prime} (\tilde{\bA} - \EE(\bA))_{i^\prime, j}\right \vert + |x_j|\left\vert \sum_{\psi_{i^\prime} = k_2} x_{i^\prime} (\tilde{\bA} - \EE(\bA))_{i^\prime, i}\right\vert\right] \\
& \cdot \left[ |x_i| \left \vert \sum_{\psi_{i^{\prime\prime}} = k_2^\prime} x_{i^{\prime\prime}} (\tilde{\bA} - \EE(\bA))_{i^{\prime\prime}, j}\right \vert + |x_j|\left\vert \sum_{\psi_{i^\prime} = k_2} x_{i^{\prime\prime}} (\tilde{\bA} - \EE(\bA))_{i^{\prime\prime}, i}\right\vert\right]\be_{k_2} \be_{k_2^\prime}^\top\\
& \preceq  8 (x_i^2 + x_j^2)  ns_n^2 \log n \sum_{k_2=1}^K \sum_{k_2^\prime=1}^K  \left(\|\bx^{(k_2)}\|_{\infty}+n^{-1/2}\|\bx^{(k_2)}\|\right)\left(\|\bx^{(k_2^\prime)}\|_{\infty}+n^{-1/2}\|\bx^{(k_2^\prime)}\|\right) \be_{k_2} \be_{k_2^\prime}^\top,
\end{align*}
leading to 
\begin{align}
\label{equ: second_order_monent_2_k}
& \quad \sum_{\psi_i = \psi_j = k , i< j } \EE(\bY_{k, i, j}^{\top}\bY_{k, i, j}) \nonumber\\
& \preceq  8 \|\bx^{(k)}\|^2(n_k-1) ns_n^2 \log n \nonumber\\
& \quad \times \sum_{k_2=1}^K \sum_{k_2^\prime=1}^K \left(\|\bx^{(k_2)}\|_{\infty}+n^{-1/2}\|\bx^{(k_2)}\|\right)\left(\|\bx^{(k_2^\prime)}\|_{\infty}+n^{-1/2}\|\bx^{(k_2^\prime)}\| \right) \be_{k_2} \be_{k_2^\prime}^\top.
\end{align}
It then follows from \eqref{equ: second_order_mement_2} and \eqref{equ: second_order_monent_2_k} that
\begin{align}
\label{equ: second_order_moment_right}
& \quad \left\| \sum_{k_1 \ne k } \sum_{\psi_i = k_1 } \sum_{\psi_j = k}\EE(\bY_{k_1, i, j}^\top \bY_{k_1, i, j}) +   \sum_{\psi_i = \psi_j = k , i< j } \EE(\bY_{k, i, j}^{\top}\bY_{k, i, j})      \right\|\nonumber\\
& \le 4  ( \|\bx\|^2 + \|\bx^{(k)}\|^2 )n_k n s_n^2  \log n  \nonumber\\
& \quad \times \left\| \sum_{k_2=1}^K \sum_{k_2^\prime=1}^K \left(\|\bx^{(k_2)}\|_{\infty}+n^{-1/2}\|\bx^{(k_2)}\|\right)\left(\|\bx^{(k_2^\prime)}\|_{\infty}+n^{-1/2}\|\bx^{(k_2^\prime)}\| \right) \be_{k_2} \be_{k_2^\prime}^\top  \right\| \nonumber \\
& = 4  ( \|\bx\|^2 + \|\bx^{(k)}\|^2 )n_k n s_n^2  \log n  \sum_{k_2=1}^K \left(\|\bx^{(k_2)}\|_{\infty}+n^{-1/2}\|\bx^{(k_2)}\| \right)^2\nonumber \\
& \le 4( \|\bx\|^2 + \|\bx^{(k)}\|^2 ) \left\{ \left( \sum_{k_2=1}^K \|\bx^{(k_2)}\|_\infty^2\right)^{1/2}+n^{-1/2}\|\bx\| \right\}^2 n_k n s_n^2  \log n.
\end{align}
Combining \eqref{equ: second_order_moment_left} and \eqref{equ: second_order_moment_right}, under $E_1$ we obtain
\begin{align}
\label{equ: bound_varaince}
     & \quad \max\left\{ \sum_{\bY \in \bDelta} \bY\bY^{\top}, \sum_{\bY \in \bDelta} \bY^{\top}\bY  \right\} \nonumber\\
     & \le 4 ( \|\bx\|^2 + \|\bx^{(k)}\|^2 )\left\{ \left(\sum_{k_2=1}^K \|\bx^{(k_2)}\|_\infty^2\right)^{1/2}+n^{-1/2}\|\bx\|\right\}^2n_k n s_n^2  \log n.
\end{align}

\paragraph{Step 4. Concentration.} Finally, denoting the right hand side of (\ref{equ: bound_varaince}) as $\sigma_{\bDelta}^2$, by the matrix Bernstein inequality \citep[Theorem 1.6]{tropp2012user}, we have
 \[\quad \PP ( \|\tilde{\bI}_1\| > t| E_1)\le 2K \exp \left\{ -\frac{t^2}{2\sigma_{\bDelta}^2 + 2 R_{\bDelta}t/3}   \right\},\]
where $R_{\bDelta}$ and $\sigma_{\bDelta}^2$ are defined in \eqref{equ: R_y} and \eqref{equ: bound_varaince}, respectively.  Taking
\begin{equation*}
t = \frac{1}{2} \alpha(\delta) \sigma_{\bDelta} (\log n)^{1/2}, \text{ with } \alpha(\delta) = \frac{8\delta + 4(4\delta^2 + 9)^{1/2}}{3},
\end{equation*} 
we have
\begin{align*}
    & \quad \frac{2}{3}R_{\bDelta}t =  \frac{2}{3} \cdot 4\|\bx\|_{\infty}  \left\{ \left(\sum_{k_2=1}^K \|\bx^{(k_2)}\|_\infty^2\right)+n^{-1/2}\|\bx\|\right\}   (ns_n \log n)^{1/2} \cdot \frac{1}{2} \alpha(\delta) \sigma_{\bDelta}(\log n)^{1/2}\\
    & = \frac{2 \alpha(\delta)\|\bx\|_{\infty}\sigma_{\bDelta} (\log n)^{1/2}}{3 \left\{n_ks_n(\|\bx\|^2 + \|\bx^{(k)}\|^2) \right\}^{1/2}} \\
    & \quad \times 2 \left( \|\bx\|^2 + \|\bx^{(k)}\|^2 \right) \left\{ \left(\sum_{k_2=1}^K \|\bx^{(k_2)}\|_\infty^2\right)^{1/2}+n^{-1/2}\|\bx\|\right\} s_n (n_k n \log n )^{1/2} \\
    & = \frac{2 \alpha(\delta)\|\bx\|_{\infty} (\log n)^{1/2}}{3 \left\{n_ks_n(\|\bx\|^2 + \|\bx^{(k)}\|^2) \right\}^{1/2}} \sigma_{\bDelta}^2\\
    & \le \frac{2\delta\alpha(\delta)}{3} \sigma_{\bDelta}^2,
\end{align*}
where the last inequality comes from Assumption 5. As such, under the event $E_1$,  with probability at least $1 - 2K \exp\{- (3\alpha(\delta)^2 \log n)/(24 + 8\delta \alpha(\delta))\} = 1 - 2K/n^2$, we have 
\begin{equation}
\label{equ: I_1_tilde_upper_bound}
\|\tilde{\bI}_1\| \le \alpha(\delta) \left( \|\bx\|^2 + \|\bx^{(k)}\|^2 \right)^{1/2} \left\{ \left( \sum_{k_2=1}^K \|\bx^{(k_2)}\|_\infty^2\right)^{1/2}+ n^{-1/2}\|\bx\|\right\} s_n (n_k n)^{1/2} \log n. 
\end{equation}
Therefore, \eqref{equ: I_1_tilde_upper_bound} holds with probability at least $1 - 2K(n_k+1)/n^2$, which finishes the proof of \Cref{lem:Nk-tI1}. 
\end{proof}

\subsection{Proof of \Cref{lem:Nk-tS2}}
\begin{lemma}[Bounding $\tilde{\bS}_2$]\label{lem:Nk-tS2}
    Under the same assumptions as in \Cref{thm:consistency} with $\tilde{\bS}_2$ defined in \eqref{equ: S_2_tilde}, it holds that
    \begin{equation*}
     \|\tilde{\bS}_2\|\le \frac{4\times 6^{1/2}}{3}\max_{k^\prime \in [K]} \|\bx^{(k^\prime)}\|_4^2(n_k\log n)^{1/2}s_n+  4\|\bx\|_{\infty}^2 \log n.
    \end{equation*}
    with probability at least $1- 2K/n^2$.  
\end{lemma}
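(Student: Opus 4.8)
The plan is to exploit the fact that $\tilde{\bS}_2$ is a \emph{diagonal} matrix, so that its spectral norm equals the largest absolute diagonal entry, and then to control each diagonal entry by a scalar Bernstein inequality followed by a union bound over the $K$ communities. Writing $D_{k_1} = \sum_{\psi_i = k_1}\sum_{\psi_j = k} x_i^2 (\bA-\EE(\bA))_{i,j}(\tilde{\bA}-\EE(\bA))_{i,j}$ for the $k_1$-th diagonal entry, we have $\|\tilde{\bS}_2\| = \max_{k_1\in[K]}|D_{k_1}|$, and the task reduces to a tail bound for a single scalar sum, uniformly in $k_1$.

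First I would rewrite each $D_{k_1}$ as a sum of independent, centered, bounded random variables. The delicate point is the symmetry $\bA = \bA^\top$ (and $\tilde{\bA} = \tilde{\bA}^\top$). When $k_1 \ne k$, the communities indexed by $i$ and $j$ are disjoint, so the pairs $(i,j)$ select distinct off-diagonal entries and the summands are genuinely independent. When $k_1 = k$, both indices run over community $k$; here the ordered pairs $(i,j)$ and $(j,i)$ share the same random factor $(\bA-\EE(\bA))_{i,j}(\tilde{\bA}-\EE(\bA))_{i,j}$, so I would group them into $\sum_{i<j,\,\psi_i=\psi_j=k}(x_i^2+x_j^2)(\bA-\EE(\bA))_{i,j}(\tilde{\bA}-\EE(\bA))_{i,j}$, a sum over unordered pairs whose terms are independent; the diagonal terms $i=j$ vanish because $A_{i,i}\equiv 1$ forces $(\bA-\EE(\bA))_{i,i}=0$.

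Then I would record the three ingredients Bernstein requires. Each summand is mean-zero since $\bA$ and $\tilde{\bA}$ are independent copies, giving $\EE[(\bA-\EE(\bA))_{i,j}(\tilde{\bA}-\EE(\bA))_{i,j}]=0$; each is bounded in absolute value by $2\|\bx\|_\infty^2$ via $|(\bA-\EE(\bA))_{i,j}|\le 1$; and $\Var((\bA-\EE(\bA))_{i,j}(\tilde{\bA}-\EE(\bA))_{i,j}) = [P_{i,j}(1-P_{i,j})]^2 \le s_n^2$. Summing the weights $(x_i^2+x_j^2)^2$ over the pairs, bounded through $(x_i^2+x_j^2)^2\le 2(x_i^4+x_j^4)$ together with Cauchy--Schwarz, yields a total variance proxy of order $n_k s_n^2\max_{k'\in[K]}\|\bx^{(k')}\|_4^4$. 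Feeding the bound $M = 2\|\bx\|_\infty^2$ and this variance proxy into the scalar Bernstein inequality and choosing a deviation level $t$ of order $\sqrt{v\log n}+M\log n$ so that each tail is at most $2/n^2$, a union bound over $k_1\in[K]$ gives failure probability at most $2K/n^2$ and the stated two-term bound: a leading term scaling like $\max_{k'}\|\bx^{(k')}\|_4^2(n_k\log n)^{1/2}s_n$ and a lower-order term scaling like $\|\bx\|_\infty^2\log n$.

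The main obstacle I anticipate is the bookkeeping in the $k_1=k$ case: using the symmetry of $\bA$ to re-express the diagonal entry as an independent sum over unordered pairs (and discarding the deterministic diagonal), and then tracking the combinatorial and norm constants so that the variance sum collapses to the $\|\bx^{(k')}\|_4^4$ form rather than a cruder $\|\bx^{(k')}\|^4$ bound. Matching the precise leading constant $4\sqrt{6}/3$ is then only a matter of careful constant propagation through Bernstein and is not conceptually difficult once the independent decomposition and the variance estimate are in place.
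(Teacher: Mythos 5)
Your proof is correct and follows essentially the same route as the paper's: the same reduction of $\|\tilde{\bS}_2\|$ to the largest absolute diagonal entry, the same symmetric regrouping of the within-community block into $(x_i^2+x_j^2)$-weighted sums over unordered pairs $i<j$ (with the $i=j$ terms vanishing because $A_{i,i}\equiv 1$), and the same Bernstein-plus-union-bound conclusion with failure probability $2K/n^2$. The one genuine point of divergence is the concentration ingredient: the paper certifies each summand sub-exponential by verifying the Bernstein moment condition for $(\xi-\EE(\xi))(\tilde{\xi}-\EE(\tilde{\xi}))$ with independent Bernoulli $\xi,\tilde{\xi}$, computing $\vert\EE[\cdot]^l\vert$ for all $l\ge 2$ to obtain parameters $(\tfrac{4}{3}p^2,1)$, whereas you invoke the bounded form of Bernstein's inequality directly from $\vert(\bA-\EE(\bA))_{i,j}(\tilde{\bA}-\EE(\bA))_{i,j}\vert\le 1$ together with the exact variance $[P_{i,j}(1-P_{i,j})]^2\le s_n^2$. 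Your variant is more elementary—it sidesteps the higher-moment computation entirely—and with the deviation level $t=2\sqrt{v\log n}+\tfrac{4}{3}M\log n$ it yields slightly sharper constants (roughly $2\sqrt{2}$ in place of $4\sqrt{6}/3$ for the leading term and $8/3$ in place of $4$ for the $\|\bx\|_\infty^2\log n$ term), so the stated bound follows a fortiori; your variance bookkeeping via $(x_i^2+x_j^2)^2\le 2(x_i^4+x_j^4)$ matches the paper's identity $\sum_{i<j}(x_i^2+x_j^2)^2=(n_k-2)\|\bx^{(k)}\|_4^4+\|\bx^{(k)}\|_2^4\le 2n_k\|\bx^{(k)}\|_4^4$ up to the same $\sqrt{2}$ factor.
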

\begin{proof}[Proof of \Cref{lem:Nk-tS2}]
    Note that $\tilde{\bS}_2$ is a zero-mean diagonal matrix, and the diagonals are independent. We decompose $\tilde{\bS}_2$ as follows
    \begin{align*}
        \tilde{\bS}_2 & = \sum_{k_1=1, k_1 \ne k}^K  \sum_{\psi_i = k_1} \sum_{\psi_j = k} x_i^2  (\bA - \EE(\bA))_{i, j}  (\tilde{\bA} - \EE(\bA))_{i, j}    \be_{k_1} \be_{k_1}^\top \\
        & + \sum_{\psi_i = \psi_j =  k, i<j} (x_i^2 + x_j^2)  (\bA - \EE(\bA))_{i, j}  (\tilde{\bA} - \EE(\bA))_{i, j}  \be_{k} \be_{k}^\top. 
    \end{align*}
    Clearly, each diagonal of $\tilde{\bS}_2$ is now decomposed into a summation of independent random variables, and each random variable is a product of two independent and identically distributed centered Bernoulli random variables, up to some scaling coefficients. To investigate the properties of such random variables, we let $\xi$ and $\tilde{\xi}$ be two independent and identically distributed Bernoulli random variables with mean $p$. We have $\Var\big[(\xi - \EE(\xi))(\tilde{\xi} - \EE(\tilde{\xi}))\big] = (1-p)^2 p^2 \le \frac{4}{3}p^2$, and
    \begin{align*}
    \big\vert \EE\big[(\xi - \EE(\xi))(\tilde{\xi} - \EE(\tilde{\xi})) \big]^l \big\vert & = \big\vert  (1-p)^{2l}p^2 + 2[(-p)(1-p)]^lp(1-p) + p^{2l}(1-p)^2 \big\vert\\
    & = (1-p)^2 p^2 \big\vert  (1-p)^{2(l-1)} + 2[(-p)(1-p)]^{(l-1)} + p^{2(l-1)} \big\vert\\
    & \le \frac{l!}{2} \cdot \frac{4}{3}p^2 \cdot 1^{l-2}, 
    \end{align*}
    for $l\ge 3$, showing that $(\xi - \EE(\xi))(\tilde{\xi} - \EE(\tilde{\xi}))$ is a sub-Exponential random variable with parameter $(\frac{4}{3}p^2, 1)$ according to the Bernstein condition. It then follows that $x_i^2  (\bA - \EE(\bA))_{i, j}  (\tilde{\bA} - \EE(\bA))_{i, j}$ is a sub-Exponential random variable with parameter $(\frac{4}{3}x_i^4s_n^2, x_i^2)$ for $\psi_i\ne k$ while $\psi_j = k$, and $(x_i^2 + x_j^2)  (\bA - \EE(\bA))_{i, j}  (\tilde{\bA} - \EE(\bA))_{i, j}$ is a sub-Exponential random variable with parameter $\left(\frac{4}{3}(x_i^2 + x_j^2)^2 s_n^2, x_i^2 + x_j^2\right)$ for $\psi_i = \psi_j = k$. Moreover, $\sum_{\psi_i = k_1} \sum_{\psi_j = k} x_i^2  (\bA - \EE(\bA))_{i, j}  (\tilde{\bA} - \EE(\bA))_{i, j}$ is a sub-Exponential random variable with parameter $(\frac{4}{3} \|\bx^{(k_1)}\|_4^4n_ks_n^2, \|\bx^{(k_1)}\|_{\infty}^2)$ for $k_1 \ne k $, and $\sum_{\psi_i = \psi_j =  k, i<j}(x_i^2 + x_j^2)  (\bA - \EE(\bA))_{i, j}  (\tilde{\bA} - \EE(\bA))_{i, j}$ is sub-Exponential with parameter $\left(\frac{4}{3}\big[(n_k - 2) \|\bx^{(k)}\|_4^4 + \|\bx^{(k)}\|_2^4\big] s_n^2, 2\|\bx^{(k)}\|_{\infty}^2\right)$. By the Bernstein inequality, we have
    \begin{equation*}
        \PP \left( \left\vert \sum_{\psi_i = k_1} \sum_{\psi_j = k} x_i^2  (\bA - \EE(\bA))_{i, j}  (\tilde{\bA} - \EE(\bA))_{i, j} \right\vert > \left(\frac{8}{3} \|\bx^{(k_1)}\|_4^4n_ks_n^2 t \right)^{1/2} + \|\bx^{(k_1)}\|_{\infty}^2 t\right) \le 2e^{-t}, 
    \end{equation*}
    for any $t> 0 $. Taking $t = 2 \log n $, we have, with probability at least $1 - 2/n^2$, 
    \begin{align}
    \label{I_2_tilde_upper_bound_k_1}
   & \left\vert \sum_{\psi_i = k_1} \sum_{\psi_j = k} x_i^2  (\bA - \EE(\bA))_{i, j}  (\tilde{\bA} - \EE(\bA))_{i, j} \right\vert \le \frac{4\sqrt{3}}{3}\|\bx^{(k_1)}\|_4^2(n_ks_n^2 \log n)^{1/2} +2 \|\bx^{(k_1)}\|_{\infty}^2 \log n. 
    \end{align}
    Similarly, we have with probability at least $1-2/n^2$, 
    \begin{align}
    \label{I_2_tilde_upper_bound_k}
      & \quad \left\vert \sum_{\psi_i = \psi_j =  k, i<j} (x_i^2 + x_j^2)  (\bA - \EE(\bA))_{i, j}  (\tilde{\bA} - \EE(\bA))_{i, j}  \right\vert \nonumber \\
      & \le \left[ \frac{16}{3}\left\{(n_k - 2) \|\bx^{(k)}\|_4^4 + \|\bx^{(k)}\|_2^4\right\} s_n^2\log n\right]^{1/2}+  4\|\bx^{(k)}\|_{\infty}^2 \log n \nonumber\\
      &  \le \frac{4\sqrt{3}}{3} \left(n_k \|\bx^{(k)}\|_4^4 + \|\bx^{(k)}\|_2^4\right)^{1/2} \cdot (s_n^2\log n)^{1/2}+  4\|\bx^{(k)}\|_{\infty}^2 \log n \nonumber \\
      &  \le \frac{4 \sqrt{6}}{3} \|\bx^{(k)}\|_4^2 (n_k s_n^2\log n)^{1/2}+  4\|\bx^{(k)}\|_{\infty}^2 \log n. 
    \end{align}
    By the union bound, \eqref{I_2_tilde_upper_bound_k_1} holds for all $k_1\ne k$ and \eqref{I_2_tilde_upper_bound_k} holds simultaneously with probability at least $1- 2K/n^2$. Therefore, with probability at least $1 -2K/n^2$,
    \begin{align*}
    \|\tilde{\bS}_2\| & \le \frac{4\sqrt{3}}{3} \max \left\{2^{1/2} \|\bx^{(k)}\|_4^2, \max_{k_1 \ne k}\|\bx^{(k_1)}\|_4^2\right\}(n_k\log n)^{1/2}s_n \\
    & \quad+  2 \max\left\{2\|\bx^{(k)}\|_{\infty}^2, \max_{k_1 \ne k} \|\bx^{(k_1)}\|_{\infty}^2 \right\} \log n\\
    & \le \frac{4\sqrt{6}}{3} \max_{k^\prime \in [K]}\|\bx^{(k^\prime)}\|_4^2 (n_k\log n)^{1/2}s_n+  4\|\bx\|_{\infty}^2 \log n,
    \end{align*}
    which finishes the proof of \Cref{lem:Nk-tS2}.
\end{proof}

\subsection{Proof of \Cref{lem:Nk-S2}}
\begin{lemma}[Bounding $\bS_2 - {\EE(\bS_2)}$]\label{lem:Nk-S2}
    Under the same assumptions as in \Cref{thm:consistency}, it holds that
    \begin{equation*}
     \|\bS_2 - \EE(\bS_2)\|\le 2\times2^{1/2} \max_{k^\prime \in [K]}\|\bx^{(k^\prime)}\|_4^2(n_ks_n\log n)^{1/2}+  4\|\bx\|_{\infty}^2 \log n, 
    \end{equation*}
    with probability at least $1 - 2K/n^2$. 
\end{lemma}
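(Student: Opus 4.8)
The plan is to exploit the fact that $\bS_2$ is a diagonal matrix, so that its spectral norm equals the largest absolute diagonal entry. Concretely, from \eqref{equ:S1+S2} the $(k_1,k_1)$ diagonal entry of $\bS_2 - \EE(\bS_2)$ is
\[
D_{k_1} = \sum_{\psi_i = k_1}\sum_{\psi_j = k} x_i^2\left[(A_{i,j}-\EE(A_{i,j}))^2 - \EE\big((A_{i,j}-\EE(A_{i,j}))^2\big)\right],
\]
and $\|\bS_2 - \EE(\bS_2)\| = \max_{k_1\in[K]}|D_{k_1}|$. Since the off-diagonal entries of $\bA$ are independent Bernoulli variables, and the deterministic self-loops $A_{i,i}\equiv 1$ contribute identically zero summands, each $D_{k_1}$ is a sum of independent centered random variables, which I would control by Bernstein's inequality followed by a union bound over $k_1$.

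The key step is to certify the sub-exponential parameters of the summands. Writing $Y_{i,j} = (A_{i,j}-\EE(A_{i,j}))^2 - \EE((A_{i,j}-\EE(A_{i,j}))^2)$, I would first note that $(A_{i,j}-\EE(A_{i,j}))^2 \le 1$, whence $|Y_{i,j}|\le 1$ and, crucially, $(A_{i,j}-\EE(A_{i,j}))^4 \le (A_{i,j}-\EE(A_{i,j}))^2$, so that $\Var(Y_{i,j}) \le \EE((A_{i,j}-\EE(A_{i,j}))^2) = P_{i,j}(1-P_{i,j}) \le s_n$. Boundedness together with this variance bound verifies the Bernstein moment condition, showing $Y_{i,j}$ is sub-exponential with parameters $(s_n, 1)$, and therefore $x_i^2 Y_{i,j}$ is sub-exponential with parameters $(x_i^4 s_n, x_i^2)$. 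Summing the independent contributions over $\psi_i = k_1$ and $\psi_j = k$ and using $\sum_{\psi_i=k_1}x_i^4 = \|\bx^{(k_1)}\|_4^4$, I obtain that $D_{k_1}$ is sub-exponential with variance proxy $n_k\|\bx^{(k_1)}\|_4^4 s_n$ (at most, with $n_k-1$ replacing $n_k$ when $k_1=k$ due to the excluded self-loop) and scale parameter $\|\bx^{(k_1)}\|_\infty^2$.

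Finally, applying Bernstein's inequality to $D_{k_1}$ at deviation level $u = 2\log n$ gives, with probability at least $1-2/n^2$,
\[
|D_{k_1}| \le 2\|\bx^{(k_1)}\|_4^2 (n_k s_n \log n)^{1/2} + 2\|\bx^{(k_1)}\|_\infty^2 \log n.
\]
A union bound over $k_1 \in [K]$ yields the event of probability at least $1 - 2K/n^2$, after which I would take the maximum over $k_1$, bound $\max_{k_1}\|\bx^{(k_1)}\|_4^2 \le \max_{k'\in[K]}\|\bx^{(k')}\|_4^2$ and $\|\bx^{(k_1)}\|_\infty \le \|\bx\|_\infty$, and absorb the constants (using $2 \le 2\sqrt{2}$ and $2 \le 4$) to reach the stated bound. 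The main obstacle is the second step: obtaining a variance proxy that scales with the sparsity $s_n$ rather than with a constant, which is what ultimately drives the $(n_k s_n\log n)^{1/2}$ rate; this hinges on the observation that a squared centered Bernoulli variable has its fourth moment dominated by its second moment. The argument closely parallels \Cref{lem:Nk-tS2}, the difference being that here the relevant summands are squares $(A_{i,j}-\EE(A_{i,j}))^2$ rather than products of two independent copies, so the sub-exponential parameters must be recomputed accordingly.
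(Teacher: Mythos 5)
Your overall strategy coincides with the paper's: both arguments exploit that $\bS_2-\EE(\bS_2)$ is diagonal so its spectral norm is the largest absolute diagonal entry, certify a Bernstein moment condition showing the centered squared Bernoulli variable $Y_{i,j}=(A_{i,j}-\EE(A_{i,j}))^2-\EE\big((A_{i,j}-\EE(A_{i,j}))^2\big)$ is sub-exponential with parameters $(P_{i,j},1)$, apply scalar Bernstein at deviation level $t=2\log n$, and close with a union bound over the $K$ diagonal entries. Your route to the Bernstein condition---boundedness $|Y_{i,j}|\le 1$ together with $(A_{i,j}-\EE(A_{i,j}))^4\le (A_{i,j}-\EE(A_{i,j}))^2$---is a slightly cleaner alternative to the paper's direct moment expansion, and it correctly isolates the sparsity-scaled variance proxy that drives the $(n_k s_n\log n)^{1/2}$ rate.

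There is, however, one genuine gap: your claim that ``each $D_{k_1}$ is a sum of independent centered random variables'' fails for $k_1=k$. The ordered double sum $\sum_{\psi_i=k}\sum_{\psi_j=k}$ contains every edge variable twice, since $A_{i,j}=A_{j,i}$: the summands $x_i^2 Y_{i,j}$ and $x_j^2 Y_{j,i}$ are perfectly dependent, so Bernstein's inequality cannot be applied to $D_k$ as written. Your parenthetical about replacing $n_k$ by $n_k-1$ addresses only the vanishing self-loop terms, not this double counting. The repair is exactly what the paper does: rewrite $D_k=\sum_{\psi_i=\psi_j=k,\, i<j}(x_i^2+x_j^2)Y_{i,j}$, a sum over unordered pairs whose summands are genuinely independent but carry weight $x_i^2+x_j^2$. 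The resulting sub-exponential parameters are $\big(\{(n_k-2)\|\bx^{(k)}\|_4^4+\|\bx^{(k)}\|_2^4\}s_n,\ 2\|\bx^{(k)}\|_\infty^2\big)$, and since $\|\bx^{(k)}\|_2^4\le n_k\|\bx^{(k)}\|_4^4$ by Cauchy--Schwarz, the variance proxy is at most $2(n_k-1)\|\bx^{(k)}\|_4^4 s_n$---each parameter up to a factor $2$ larger than what you asserted. This is precisely the origin of the constants $2\times 2^{1/2}$ and $4$ in the lemma: the slack you treated as mere constant absorption ($2\le 2\sqrt{2}$, $2\le 4$) is exactly the room the correct argument consumes on the $k$th diagonal entry. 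With the pairing step inserted, your proof goes through and recovers the stated bound, in parallel with the treatment of \Cref{lem:Nk-tS2}.
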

\begin{proof}[Proof of \Cref{lem:Nk-S2}] Similar to the proof of \Cref{lem:Nk-tS2}, we decompose the diagonal matrix $\bS_2$ as follows
    \begin{align*}
        \bS_2 = \sum_{k_1 \ne k}  \sum_{\psi_i = k_1} \sum_{\psi_j = k} x_i^2  (\bA_{i, j} - \EE(\bA_{i, j}))^2 \be_{k_1} \be_{k_1}^\top+ \sum_{\psi_i = \psi_j =  k, i<j} (x_i^2 + x_j^2)  (\bA_{i, j} - \EE(\bA_{i, j}))^2  \be_{k} \be_{k}^\top. 
    \end{align*}
Clearly, each diagonal is now decomposed into a summation of a series of independent random variables. Since
\begin{align}
\label{equ: ES_2}
\EE(\bS_2) &= \sum_{k_1 \ne k} \sum_{\psi_i = k_1} \sum_{\psi_j = k} x_i^2  \EE(\bA_{i, j})(1- \EE(\bA_{i, j})) \be_{k_1} \be_{k_1}^\top \nonumber\\
& + \sum_{\psi_i = \psi_j =  k, i<j} (x_i^2 + x_j^2)  \EE(\bA_{i, j})(1- \EE(\bA_{i, j}))  \be_{k} \be_{k}^\top, 
\end{align}
we have 
\begin{align*}
\bS_2 - \EE(\bS_2) &  = \sum_{k_1 \ne k} \sum_{\psi_i = k_1} \sum_{\psi_j = k} x_i^2 \Big[ (\bA_{i, j} - \EE(\bA_{i, j}))^2 - \EE(\bA_{i, j})(1- \EE(\bA_{i, j}))  \Big] \be_{k_1} \be_{k_1}^\top\\
& + \sum_{\psi_i = \psi_j =  k, i<j} (x_i^2 + x_j^2) \Big[ (\bA_{i, j} - \EE(\bA_{i, j}))^2 - \EE(\bA_{i, j})(1- \EE(\bA_{i, j})) \big] \be_{k} \be_{k}^\top.  
\end{align*}
Let $\xi$ be a Bernoulli random variable with successful probability $p$. We now investigate the property of $(\xi - p)^2 - p(1-p)$.  Note that
\begin{align*}
   \big\vert \EE \big[ (\xi - p)^2 - p(1-p)\big]^l\big\vert &  =\big\vert \big[ (1 - p)^2 - p(1-p)\big]^lp + \big[p^2 - p(1-p)\big]^l(1-p) \big\vert\\
   & \le  p (1-p) |1-2p|^l \big[ (1-p)^{l-1} + p^{l-1}]\\
   & \le \begin{cases}
   p  \text{ if } l = 2, \\
   2p \text{ if } l \ge 3.
   \end{cases}
\end{align*}
That is,  $\big\vert \EE \big[ (\xi - p)^2 - p(1-p)\big]^l\big\vert \le  \frac{l!}{2} p 1^{l -2}$, for any $l\ge 2$. Such Bernstein condition yields that $(\xi - p)^2 - p(1-p)$ is a sub-Exponential random variable with parameter $(p, 1)$. It then follows that $x_i^2 \Big[ (\bA_{i, j} - \EE(\bA_{i, j}))^2 - \EE(\bA_{i, j})(1- \EE(\bA_{i, j}))\Big]$ is a sub-Exponential random variable with parameter $(x_i^4s_n, x_i^2)$, and $(x_i^2 + x_j^2) \Big[ (\bA_{i, j} - \EE(\bA_{i, j}))^2 - \EE(\bA_{i, j})(1- \EE(\bA_{i, j})) \big]$ is a sub-Exponential random variable with parameter $\left((x_i^2+x_j^2)^2s_n, x_i^2+ x_j^2\right)$. Moreover, $\sum_{\psi_i = k_1} \sum_{\psi_j = k} x_i^2 \Big[ (\bA_{i, j} - \EE(\bA_{i, j}))^2 - \EE(\bA_{i, j})(1- \EE(\bA_{i, j}))  \Big]$ is a sub-Exponential random variable with parameter $(\|\bx^{(k_1)}\|_4^4n_ks_n, \|\bx^{(k_1)}\|_{\infty}^2)$ for $k_1 \ne k $, and $ \sum_{\psi_i = \psi_j =  k, i<j} (x_i^2 + x_j^2)(x_i^2 + x_j^2) \Big[ (\bA_{i, j} - \EE(\bA_{i, j}))^2 - \EE(\bA_{i, j})(1- \EE(\bA_{i, j})) \big]$ is a sub-Exponential random variable with parameter $\left(\big[(n_k - 2) \|\bx^{(k)}\|_4^4 + \|\bx^{(k)}\|_2^4\big] s_n, 2\|\bx^{(k)}\|_{\infty}^2\right)$. By the Bernstein inequality and the union bound, we have, with probability at least $1 - 2K/n^2$,
\begin{align*}
& \quad \left \vert\sum_{\psi_i = k_1} \sum_{\psi_j = k} x_i^2 \Big[ (\bA_{i, j} - \EE(\bA_{i, j}))^2 - \EE(\bA_{i, j})(1- \EE(\bA_{i, j}))  \Big]\right\vert \\
& \le 2 \|\bx^{(k_1)}\|_4^2 (n_ks_n \log n)^{1/2} + 2\|\bx^{(k_1)}\|_{\infty}^2 \log n, 
\end{align*}
for any $k_1 \in [K]\backslash \{k\}$ and 
\begin{align*}
& \quad \left \vert      \sum_{\psi_i = \psi_j =  k, i<j} (x_i^2 + x_j^2) \Big[ (\bA_{i, j} - \EE(\bA_{i, j}))^2 - \EE(\bA_{i, j})(1- \EE(\bA_{i, j})) \big]      \right\vert \\
& \le 2  \left\{ \left((n_k - 2) \|\bx^{(k)}\|_4^4 + \|\bx^{(k)}\|_2^4\right)s_n \log n\right\}^{1/2} +  4\|\bx^{(k)}\|_{\infty}^2 \log n \\
& \le 2  \|\bx^{(k)}\|_4^2(2n_ks_n \log n)^{1/2} +  4\|\bx^{(k)}\|_{\infty}^2 \log n. 
\end{align*}
It immediately follows that, with probability at least $1 - 2K/n^2$,
    \begin{align*}
 \quad \|\bS_2- \EE(\bS_2)\|\le 2\times 2^{1/2} \max_{k^\prime \in [K]}\|\bx^{(k^\prime)}\|_4^2   (n_ks_n\log n)^{1/2}+  4\|\bx\|_{\infty}^2 \log n, 
    \end{align*}
which finishes the proof of \Cref{lem:Nk-S2}.
\end{proof}

\subsection{Proof of \Cref{lem:Nk-I2}}
\begin{lemma}[Bounding $\bI_2$]\label{lem:Nk-I2}
    Under the same assumptions as in \Cref{thm:consistency}, it holds that
    \begin{equation*}
     \|\bI_2\| \le \tilde{\alpha}(\delta) s_n \left\{( \|\bx\|^2 + \|\bx^{(k)}\|^2 )n_k s_n \sum_{k_2=1}^K \|\bx^{(k_2)}\|_1^2 \log n \right\}^{1/2},
    \end{equation*}
    with probability at least $1 -2K/n^2$, where $\tilde{\alpha}(\delta) = \{2\delta + 2(\delta^2 + 9)^{1/2}\}/3 < \alpha(\delta)/2$.
\end{lemma}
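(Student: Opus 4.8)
The plan is to express $\bI_2$ as a sum of independent, mean-zero random matrices indexed by the edges of the network and then apply the matrix Bernstein inequality, mirroring the proof of \Cref{lem:Nk-tI1} but exploiting that the second factor $\EE(\bA)$ is now deterministic, so that $\bI_2$ is \emph{linear} rather than quadratic in $\bA-\EE(\bA)$. First I would expand $\bI_2$ entrywise: its $(k_1,k_2)$ entry equals $\sum_{\psi_i=k_1}\sum_{\psi_{i'}=k_2} x_i x_{i'}\sum_{\psi_j=k}(\bA-\EE(\bA))_{i,j}\,\EE(A)_{j,i'}$. Since $\EE(A)_{j,i'}=B_{k,k_2}$ whenever $\psi_j=k$ and $\psi_{i'}=k_2$ (up to the deterministic self-loop term at $i'=j$, which is lower order), the innermost sum over $i'$ factors through a fixed weight vector $\bv\in\RR^K$ with $\bv_{k_2}=B_{k,k_2}\sum_{\psi_{i'}=k_2}x_{i'}$. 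This yields $\bI_2=\sum_{i\ne j,\ \psi_j=k}(\bA-\EE(\bA))_{i,j}\,x_i\,\be_{\psi_i}\bv^\top$, a linear form in the centered Bernoulli variables $(\bA-\EE(\bA))_{i,j}$.

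Next, because $\bA$ is symmetric with independent entries above the diagonal, I would regroup the sum by unordered pairs $a<b$, collecting all terms carrying the single variable $(\bA-\EE(\bA))_{a,b}$, so that $\bI_2=\sum_{a<b}(\bA-\EE(\bA))_{a,b}\,\bW_{a,b}$, where $\bW_{a,b}=x_a\be_{\psi_a}\bv^\top$ if exactly $b$ lies in community $k$ (symmetrically if only $a$ does), $\bW_{a,b}=(x_a+x_b)\be_k\bv^\top$ if both $a,b$ lie in community $k$, and $\bW_{a,b}=\zero$ otherwise. These summands are independent and mean-zero given $\bx$. Writing each as $\bW_{a,b}=c_{a,b}\be_{\cdot}\bv^\top$, the rank-one structure gives $\bW_{a,b}\bW_{a,b}^\top=c_{a,b}^2\|\bv\|^2\be_{\cdot}\be_{\cdot}^\top$ and $\bW_{a,b}^\top\bW_{a,b}=c_{a,b}^2\bv\bv^\top$.

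Using $\EE[(\bA-\EE(\bA))_{a,b}^2]\le s_n$, the uniform spectral bound $R=\max_{a<b}\|\bW_{a,b}\|$ is of order $\|\bx\|_\infty\|\bv\|$, while $\|\bv\|^2\le s_n^2\sum_{k_2}\|\bx^{(k_2)}\|_1^2$ and $\sum_{a<b}c_{a,b}^2$ is of order $n_k(\|\bx\|^2+\|\bx^{(k)}\|^2)$ (the within-community pairs contributing the $\be_k\be_k^\top$-direction mass $n_k\|\bx^{(k)}\|^2$ and the cross-community pairs the $n_k\|\bx\|^2$ term). Consequently the matrix variance proxy $\nu=\max\{\|\sum\EE\,\bW_{a,b}\bW_{a,b}^\top\|,\|\sum\EE\,\bW_{a,b}^\top\bW_{a,b}\|\}$ is of order $s_n^3\,n_k(\|\bx\|^2+\|\bx^{(k)}\|^2)\sum_{k_2}\|\bx^{(k_2)}\|_1^2$.

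Finally, applying \citet[Theorem 1.6]{tropp2012user} to the $K\times K$ matrix $\bI_2$ with deviation $t$ of order $(\nu\log n)^{1/2}$ produces exactly the dominant term $s_n\{(\|\bx\|^2+\|\bx^{(k)}\|^2)n_k s_n\sum_{k_2}\|\bx^{(k_2)}\|_1^2\log n\}^{1/2}$, while the linear term $R\log n$ is absorbed by invoking \Cref{asm:community-sizes}; this is what collapses the bound into the stated form with the constant $\tilde{\alpha}(\delta)$ and failure probability $2K/n^2$ (the factor $2K$ being the dimension factor $K+K$ in Tropp's bound). The step I expect to be the main obstacle is the edge-decomposition bookkeeping — correctly merging the two orderings of a within-community edge and excising the deterministic self-loops so the summands are genuinely independent and mean-zero — together with the careful evaluation of $\sum_{a<b}c_{a,b}^2$ needed to recover the sharp $(\|\bx\|^2+\|\bx^{(k)}\|^2)$ factor rather than a looser bound.
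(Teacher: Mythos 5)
Your proposal is correct and follows essentially the same route as the paper's proof: writing $\bI_2$ as a sum of independent, mean-zero, rank-one random matrices indexed by the (unordered) edges incident to community $k$, bounding the uniform spectral norm by $\|\bx\|_\infty s_n (\sum_{k_2}\|\bx^{(k_2)}\|_1^2)^{1/2}$ and the matrix variance proxy by order $(\|\bx\|^2+\|\bx^{(k)}\|^2)\,n_k s_n^3 \sum_{k_2}\|\bx^{(k_2)}\|_1^2$, and applying matrix Bernstein with \Cref{asm:community-sizes} absorbing the linear $R\log n$ term to get the stated constant $\tilde{\alpha}(\delta)$ and failure probability $2K/n^2$. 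The only cosmetic difference is that you factor $\EE(\bA)$ exactly through a fixed weight vector $\bv$ via the block structure (flagging the self-loop correction), whereas the paper simply bounds $\EE(\bA_{i^\prime,j})\le s_n$ entrywise inside the summands; both yield the same bounds.
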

\begin{proof}[Proof of \Cref{lem:Nk-I2}]
Similar to the decomposition of $\bI_1$, we first decompose $\bI_2$ as follows
    \begin{align*}
    \bI_2 & = \bZ^\top \left[ \bx \bx^\top * (\bA-\EE(\bA)) \diag(\bZ_{., k}) \EE(\bA) \right]\bZ  \nonumber\\
    & = \sum_{k_1=1}^K \sum_{k_2 =1}^K  \sum_{\psi_i = k_1} \sum_{\psi_{i^\prime} = k_2} \sum_{\psi_j = k} x_i x_{i^\prime} (\bA - \EE(\bA) )_{i, j} \EE(\bA_{i^\prime, j}) \be_{k_1} \be_{k_2}^\top \nonumber\\
    & = \sum_{k_1 \ne k} \sum_{\psi_i = k_1} \sum_{\psi_j = k} (\bA - \EE(\bA) )_{i, j}\sum_{k_2 =1}^K  \sum_{\psi_{i^\prime} = k_2} x_i x_{i^\prime}  \EE(\bA_{i^\prime, j}) \be_{k_1} \be_{k_2}^\top  \nonumber\\
    & + \sum_{\psi_i = \psi_j = k, i<j} (\bA - \EE(\bA) )_{i, j}\sum_{k_2 =1}^K  \sum_{\psi_{i^\prime} = k_2} \big[x_i x_{i^\prime}  \EE(\bA_{i^\prime, j}]  + x_j x_{i^\prime}  \EE(\bA_{i^\prime, i})\big]\be_{k} \be_{k_2}^\top \nonumber\\
    & =  \sum_{k_1 \ne k} \sum_{\psi_i = k_1} \sum_{\psi_j = k} \bW_{k_1, i, j} +  \sum_{\psi_i = \psi_j = k, i<j}\bW_{k, i, j}, 
    \end{align*}
    where 
    \begin{align*}
    & \bW_{k_1, i, j} = (\bA - \EE(\bA) )_{i, j}\sum_{k_2 =1}^K  \sum_{\psi_{i^\prime} = k_2} x_i x_{i^\prime}  \EE(\bA_{i^\prime, j}) \be_{k_1} \be_{k_2}^\top, \text{ for } k_1 \ne k, \psi_i = k_1 \text{ and } \psi_j = k, \text{ and }\\
    & \bW_{k, i, j} = (\bA - \EE(\bA) )_{i, j}\sum_{k_2 =1}^K  \sum_{\psi_{i^\prime} = k_2} \big[x_i x_{i^\prime}  \EE(\bA_{i^\prime, j})  + x_j x_{i^\prime}  \EE(\bA_{i^\prime, i})\big]\be_{k} \be_{k_2}^\top, \text{ for } \psi_i = \psi_j = k. 
    \end{align*}
     Clearly,  $\tilde{\bDelta}_k = \{\bW_{k_1, i, j}: \psi_i=k_1, \psi_j = k, k_1 \ne k \} \cup \{\bW_{k, i, j}:\psi_i = \psi_j = k, i<j \}$ is a set of zero-mean independent random matrices. Moreover, 
     \begin{align*}
      \left\|\bW_{k_1, i, j}\right \| &\le  = \left\{\sum_{k_2 =1}^K  \left(\sum_{\psi_{i^\prime} = k_2} x_i x_{i^\prime}  \EE(\bA_{i^\prime, j}) \right)^2 \right\}^{1/2}\le \|\bx\|_{\infty}s_n \left(\sum_{k_2 = 1}^K \|\bx^{(k_2)}\|_1^2\right)^{1/2}, 
     \end{align*}
     for $k_1 \ne k$, and 
         \begin{align*}
      \left\|\bW_{k, i, j}\right \| &\le \left \|\sum_{k_2 =1}^K  \sum_{\psi_{i^\prime} = k_2} \big[x_i x_{i^\prime}  \EE(\bA_{i^\prime, j})  + x_j x_{i^\prime}  \EE(\bA_{i^\prime, i})\big]\be_{k} \be_{k_2}^\top\right\| \\
      & = \left\{\sum_{k_2 =1}^K  \left(\sum_{\psi_{i^\prime} = k_2}x_i x_{i^\prime}  \EE(\bA_{i^\prime, j})  + \sum_{\psi_{i^\prime = k_2}}x_j x_{i^\prime}  \EE(\bA_{i^\prime, i})\right)^2 \right\}^{1/2}\\
      &\le 2\|\bx\|_{\infty}s_n \left(\sum_{k_2 = 1}^K \|\bx^{(k_2)}\|_1^2\right)^{1/2}, 
     \end{align*} 
     As such, define
\begin{equation}
\label{equ: R_w}
    R_{\tilde{\bDelta}}= 2\|\bx\|_{\infty}s_n \left(\sum_{k_2 = 1}^K \|\bx^{(k_2)}\|_1^2\right)^{1/2}, 
\end{equation}
which can serve as a uniform upper bound for the spectral norms of the matrices in $\tilde{\bDelta}$. We next turn to upper bound the second order moments for the matrices in $\tilde{\bDelta}$. For $k_1\ne k$,
\begin{align*}
\EE(\bW_{k_1, i, j}\bW_{k_1, i, j}^{\top}) &= \EE((\bA - \EE(\bA) )_{i, j}^2) \sum_{k_2 =1}^K \sum_{\psi_{i^\prime} = k_2} \sum_{\psi_{i^{\prime\prime}} = k_2}  x_i^2 x_{i^\prime}x_{i^{\prime \prime}}\EE(\bA_{i^\prime, j})\EE(\bA_{i^{\prime\prime}, j}) \be_{k_1} \be_{k_1}^\top\\
& \preceq s_n^3 x_i^2  \sum_{k_2=1}^K \left( \sum_{\psi_{i^\prime} = k_2} x_{i^\prime} \right)^2\be_{k_1} \be_{k_1}^\top \preceq s_n^3 x_i^2  \sum_{k_2=1}^K  \|\bx^{(k_2)}\|_1^2\be_{k_1} \be_{k_1}^\top.
\end{align*}
 It then follows that 
\begin{align}
 \sum_{k_1\ne k}\sum_{\psi_i = k_1} \sum_{\psi_j = k}\EE(\bW_{k_1, i, j}\bW_{k_1, i, j}^{\top}) \notag & \preceq  n_k s_n^3  \sum_{k_2=1}^K  \|\bx^{(k_2)}\|_1^2 \sum_{k_1\ne k} \|\bx^{(k_1)}\|^2  \be_{k_1} \be_{k_1}^\top \nonumber \\
    &  \preceq  n_k s_n^3  \sum_{k_2=1}^K  \|\bx^{(k_2)}\|_1^2 \max_{k^\prime \in [K]\backslash\{k\}} \|\bx^{(k^\prime)}\|^2\sum_{k_1\ne k}  \be_{k_1} \be_{k_1}^\top.   \label{equ: WWT_1}
\end{align}
In addition, for any $i<j$ such that $\psi_i = \psi_j = k$, we have
\begin{align*}
\EE(\bW_{k, i, j}\bW_{k, i, j}^{\top}) \preceq 2s_n^3  \sum_{k_2=1}^K \left[ x_i^2  \|\bx^{(k_2)}\|_1^2 + x_j^2\|\bx^{(k_2)}\|_1^2\right] \be_k \be_k^\top \preceq  2s_n^3 (x_i^2 + x_j^2)  \sum_{k_2=1}^K \|\bx^{(k_2)}\|_1^2 \be_k \be_k^\top.
\end{align*}
Taking the summation for all $i<j$, we have
\begin{equation}
\label{equ: WWT_2}
\sum_{\psi_i = \psi_j = k, i<j}\EE(\bW_{k, i, j}\bW_{k, i, j}^{\top}) \preceq 2s_n^3 (n_k - 1) \|\bx^{(k_2)}\|^2 \sum_{k_2=1}^K \|\bx^{(k)}\|_1^2 \be_k \be_k^\top.
\end{equation}
Combing \eqref{equ: WWT_1} and \eqref{equ: WWT_2} yields that
\begin{align}
\label{equ: W_som_1}
&\quad \left\|\sum_{\bW \in \tilde{\bDelta}}\EE(\bW\bW^{\top}) \right\| \le 2 n_k s_n^3 \max_{k^\prime \in [K]} \|\bx^{(k^\prime)}\|^2 \sum_{k_2=1}^K \|\bx^{(k)}\|_1^2. 
\end{align}
Similarly, for $k_1 \ne k$, we have
\begin{align*}
\EE(\bW_{k_1, i, j}^\top \bW_{k_1, i, j}) & \preceq s_n^3 x_i^2  \sum_{k_2=1}^K  \sum_{k_2^\prime =1}^K \|\bx^{(k_2)}\|_1 \|\bx^{(k_2^\prime)}\|_1 \be_{k_2} \be_{k_2^\prime}^\top, 
\end{align*}
which leads to 
\begin{align}
\label{equ: WTW_1}
& \quad \sum_{k_1 \ne k } \sum_{\psi_i = k_1 } \sum_{\psi_j = k}\EE(\bW_{k_1, i, j}^\top \bW_{k_1, i, j}) \preceq  (\|\bx\|^2 - \|\bx^{(k)}\|^2)n_k s_n^3 \sum_{k_2=1}^K  \sum_{k_2^\prime =1}^K \|\bx^{(k_2)}\|_1 \|\bx^{(k_2^\prime)}\|_1 \be_{k_2} \be_{k_2^\prime}^\top.  
\end{align}
For $i< j$ with $\psi_i = \psi_j = k$, 
\begin{align*}
\EE(\bW_{k, i, j}^{\top} \bW_{k, i, j}) & \preceq s_n^3  \sum_{k_2=1}^K \sum_{k_2^\prime=1}^K  \left( |x_i| \| \bx^{(k_2)}\|_1 + |x_j|\| \bx^{(k_2)}\|_1\right) \left(|x_i|   \|\bx^{(k_2^\prime)}\|_1 + |x_j| \|\bx^{(k_2^\prime)}\|_1 \right) \be_{k_2} \be_{k_2^\prime}^\top\\
& \preceq  2 (x_i^2 + x_j^2)  s_n^3 \sum_{k_2=1}^K \sum_{k_2^\prime=1}^K  \|\bx^{(k_2)}\|_1 \|\bx^{(k_2^\prime)}\|_1\be_{k_2} \be_{k_2^\prime}^\top,
\end{align*}
leading to 
\begin{align}
\label{equ: WTW_2}
& \quad \sum_{\psi_i = \psi_j = k , i< j } \EE(\bW_{k, i, j}^{\top}\bW_{k, i, j}) \preceq  2 \|\bx^{(k)}\|^2(n_k-1) s_n^3 \sum_{k_2=1}^K \sum_{k_2^\prime=1}^K  \|\bx^{(k_2)}\|_1 \|\bx^{(k_2^\prime)}\|_1\be_{k_2} \be_{k_2^\prime}^\top. 
\end{align}
It then follows from \eqref{equ: WTW_1} and \eqref{equ: WTW_2} that
\begin{align}
\label{equ: W_som_2} 
\left\| \sum_{\bW \in \tilde{\bDelta}}\EE(\bW^\top \bW) \right\| & \le 2 ( \|\bx\|^2 + \|\bx^{(k)}\|^2 )n_k s_n^3  \left\| \sum_{k_2=1}^K \sum_{k_2^\prime=1}^K  \|\bx^{(k_2)}\|_1 \|\bx^{(k_2^\prime)}\|_1\be_{k_2} \be_{k_2^\prime}^\top \right\| \nonumber\\
& = 2  ( \|\bx\|^2 + \|\bx^{(k)}\|^2 )n_k s_n^3 \sum_{k_2=1}^K \|\bx^{(k_2)}\|_1^2. 
\end{align}
Combining \eqref{equ: W_som_1} and \eqref{equ: W_som_2} yields that 
\begin{align}
\label{equ: W_bound_varaince}
\sigma_{\tilde{\bDelta}}^2 := \max\left\{ \left\|\sum_{\bW \in \tilde{\bDelta}} \bW\bW^{\top}\right\|, \left\|\sum_{\bW \in \tilde{\bDelta}} \bW^{\top}\bW \right\| \right\}  \le 2 ( \|\bx\|^2 + \|\bx^{(k)}\|^2 )n_k s_n^3 \sum_{k_2=1}^K \|\bx^{(k_2)}\|_1^2. 
\end{align}
 Finally, by the matrix Bernstein inequality,  we have
 \[\quad \PP ( \|\bI_2\| > t)\le 2K \exp \left\{ -\frac{t^2}{2\sigma_{\tilde{\bDelta}}^2 + 2 R_{\tilde{\bDelta}}t/3}   \right\},\]
where $R_{\tilde{\bDelta}}$ and $\sigma_{\tilde{\bDelta}}^2$ are defined in \eqref{equ: R_w} and \eqref{equ: W_bound_varaince}, respectively.  Taking
\begin{equation*}
t =  \frac{1}{2^{1/2}}\tilde{\alpha}(\delta) \sigma_{\bDelta} (\log n)^{1/2}, \text{ with } \tilde{\alpha}(\delta) = \frac{2\delta + 2(\delta^2 + 9)^{1/2}}{3} < \frac{1}{2}{\alpha(\delta)}, 
\end{equation*} 
we have
\begin{align*}
& \quad \frac{2}{3}R_{\tilde{\bDelta}}t =  \frac{2}{3} \cdot 2\|\bx\|_{\infty}s_n (\sum_{k_2 = 1}^K \|\bx^{(k_2)}\|_1^2)^{1/2} \cdot  \frac{1}{2^{1/2}}\tilde{\alpha}(\delta) \sigma_{\bDelta} (\log n)^{1/2}\\
& = \frac{2 \tilde{\alpha}(\delta)\|\bx\|_{\infty}\sigma_{\tilde{\bDelta}}(\log n)^{1/2}}{3 (n_ks_n(\|\bx\|^2 + \|\bx^{(k)}\|^2}) )^{1/2}  \cdot 2^{1/2} \cdot ( \|\bx\|^2 + \|\bx^{(k)}\|^2 )^{1/2} \cdot (n_k s_n^3)^{1/2} \cdot (\sum_{k_2=1}^K \|\bx^{(k_2)}\|_1^2)^{1/2}\\
& \le \frac{2\delta \tilde{\alpha}(\delta)}{3}  \sigma_{\tilde{\bDelta}}^2, 
\end{align*}
where the last inequality comes from \Cref{asm:community-sizes}. Hence, with probability at least $1 - 2K\exp\{- \tilde{\alpha}(\delta)^2 \log n/(2 + 2\delta \tilde{\alpha}(\delta)/3)\} = 1 - 2K/n^2$, we have 
\begin{equation}
\label{equ: I_2_upper_bound}
\|\bI_2\| \le \tilde{\alpha}(\delta) s_n \left\{( \|\bx\|^2 + \|\bx^{(k)}\|^2 )n_k s_n \sum_{k_2=1}^K \|\bx^{(k_2)}\|_1^2 \log n \right\}^{1/2},
\end{equation}
which finishes the proof of \Cref{lem:Nk-I2}.
\end{proof}

\putbib[references]
\end{bibunit}

\end{document}